\title{\textbf{Optimal (degree+1)-Coloring in Congested Clique}\thanks{A preliminary version of this paper appeared in \emph{Proceedings of the 50th International Colloquium on Automata, Languages, and Programming (ICALP)}, pages 45:1--45:20, 2023.}}
\author{\textbf{Sam Coy}\thanks{E-mail: S.Coy@warwick.ac.uk. Department of Computer Science, University of Warwick, UK.
Research supported in part by the Centre for Discrete Mathematics and its Applications (DIMAP),
by an EPSRC studentship, and
by the Simons Foundation Award No. 663281 granted to the Institute of Mathematics of the Polish Academy of Sciences for the years 2021--2023.}
\\
University of Warwick
\and
\textbf{Artur Czumaj}\thanks{E-mail: A.Czumaj@warwick.ac.uk. Department of Computer Science and Centre for Discrete Mathematics and its Applications (DIMAP), University of Warwick, UK. Research supported in part by the Centre for Discrete Mathematics and its Applications, by EPSRC award EP/V01305X/1, by a Weizmann-UK Making Connections Grant, by an IBM Award, and by the Simons Foundation Award No. 663281 granted to the Institute of Mathematics of the Polish Academy of Sciences for the years 2021--2023.}
\\
University of Warwick
\and
\textbf{Peter Davies}\thanks{E-mail: Peter.W.Davies@durham.ac.uk.}
\\
Durham University
\and
\textbf{Gopinath Mishra}\thanks{E-mail: Gopinath.Mishra@warwick.ac.uk. Department of Computer Science, University of Warwick, UK. Research supported in part by the Centre for Discrete Mathematics and its Applications (DIMAP), by EPSRC award EP/V01305X/1, and by the Simons Foundation Award No. 663281 granted to the Institute of Mathematics of the Polish Academy of Sciences for the years 2021--2023.}
\\
University of Warwick
}
\author{
\begin{tabular}[t]{c c c c}
\textbf{Sam Coy}\thanks{E-mail: S.Coy@warwick.ac.uk. Department of Computer Science, University of Warwick, UK.
Research supported in part by the Centre for Discrete Mathematics and its Applications (DIMAP),
by an EPSRC studentship, and
by the Simons Foundation Award No 663281 granted to the Institute of Mathematics of the Polish Academy of Sciences for the years 2021--2023.} &
\textbf{Artur Czumaj}\thanks{E-mail: A.Czumaj@warwick.ac.uk. Department of Computer Science and Centre for Discrete Mathematics and its Applications (DIMAP), University of Warwick, UK. Research supported in part by the Centre for Discrete Mathematics and its Applications (DIMAP), by EPSRC award EP/V01305X/1, by a Weizmann-UK Making Connections Grant, by an IBM Award, and by the Simons Foundation Award No. 663281 granted to the Institute of Mathematics of the Polish Academy of Sciences for the years 2021--2023.} &
\textbf{Peter Davies}\thanks{E-mail: Peter.W.Davies@durham.ac.uk. Department of Computer Science, Durham University, UK.}  &
\textbf{Gopinath Mishra}\thanks{E-mail: Gopinath@imsc.res.in. Theoretical Computer Science Group, The Institute of Mathematical Sciences, HBNI, Chennai, India. Research supported in part  by EPSRC award EP/V01305X/1, and by the Simons Foundation Award No 663281 granted to the Institute of Mathematics of the Polish Academy of Sciences for the years 2021-2023.} \\
\small University of Warwick &
\small University of Warwick &
\small Durham University &
\small  The Institute of Mathematical Sciences
\end{tabular}
}
\date{}
\begin{document}


\maketitle
\begin{abstract}
We consider the distributed complexity of the (\textsf{degree+1})-list coloring problem, in which each node $u$ of degree $d(u)$ is assigned a palette of $d(u)+1$ colors, and the goal is to find a proper coloring using these color palettes. The (\textsf{degree+1})-list coloring problem is a natural generalization of the classical $(\Delta+1)$-coloring and $(\Delta+1)$-list coloring problems, both being benchmark problems extensively studied in distributed and parallel computing.

In this paper we settle the complexity of the (\textsf{degree+1})-list coloring problem in the Congested Clique model by showing that it can be solved deterministically in a constant number of rounds.
\end{abstract}


\section{Introduction}
\label{sec:Introduction}

Graph coloring problems are among the most extensively studied problems in the area of distributed graph algorithms.
In the distributed graph coloring problem, we are given an undirected graph $G = (V,E)$ and the goal is to properly color the nodes of $G$ such that no edge in $E$ is monochromatic. In the distributed setting, the nodes of $G$ correspond to devices that interact by exchanging messages throughout some underlying communication network such that the nodes communicate with each other in synchronous rounds by exchanging messages over the edges in the network. Initially, the nodes do not know anything about $G$ (except possibly for some global parameters, e.g., the number of nodes $n$ or the maximum degree $\Delta$). At the end of computation, each node $v \in V$ should output its color (from a given domain) in the computed coloring. The \emph{time} or \emph{round complexity} of a distributed algorithm is the total number of rounds until all nodes terminate.

If adjacent nodes in $G$ can exchange arbitrarily large messages in each communication round (and hence the underlying communication network is equal to the input graph $G$), this distributed model is known as the \LOCAL model \cite{Linial92}, and if messages are restricted to $O(\log n)$ bits per edge (limited bandwidth) in each round, the model is known as the \CONGEST model \cite{Peleg00}. If we allow all-to-all communication (i.e., the underlying network is a complete graph and thus the communication is independent of the input graph $G$) using messages of size $O(\log n)$ bits then the model is known as the \CONGESTEDC model \cite{LPPP05}.

The most fundamental graph coloring problem in distributed computing (studied already in the seminal paper by Linial \cite{Linial92} that introduced the \LOCAL model) is \emph{$(\Delta+1)$-coloring}: assuming that the input graph $G$ is of maximum degree $\Delta$, the objective is to properly color nodes of $G$ using $\Delta+1$ colors from $\{1,2,\dots,\Delta+1\}$. The $(\Delta+1)$-coloring problem can be easily solved by a sequential greedy algorithm, but the interaction between local and global aspects of graph coloring create some non-trivial challenges in a distributed setting.
The problem has been used as a benchmark to study distributed symmetry breaking in graphs, and it is at the very core of the area of distributed graph algorithms.
\emph{$(\Delta+1)$-list coloring} is a natural generalization of $(\Delta+1)$-coloring: each node has an arbitrary palette of $\Delta+1$ colors, and the goal is to compute a legal coloring in which each node is assigned a color from its own palette.
A further generalization is the \emph{(\textsf{degree+1})-list coloring (D1LC)} problem, which is the same as the $(\Delta+1)$-list coloring problem except that the size of each node $v$'s palette is $d(v) + 1$, which might be much smaller than $\Delta+1$.
These three problems always have a legal coloring (easily found sequentially using a greedy approach), and the main challenge in the distributed setting is to find the required coloring in as few rounds as possible.

These three graph coloring problems have been studied extensively in distributed computing, though $(\Delta+1)$-coloring, as the simplest, has attracted most attention. However, one can also argue that (\textsf{degree+1})-list coloring, as the most versatile, is more algorithmically fundamental than $(\Delta+1)$-coloring. For example, given a partial solution to a $(\Delta+1)$-coloring problem, the remaining coloring problem on the uncolored nodes is an instance of the (\textsf{degree+1})-list coloring problem. The (\textsf{degree+1})-list coloring problem is self-reducible: after computing a partial solution to a (\textsf{degree+1})-list coloring problem, the remaining problem is still a (\textsf{degree+1})-list coloring problem. It also naturally appears as a subproblem in more constrained coloring problems: for example, it has been used as a subroutine in distributed $\Delta$-coloring algorithms (see, e.g., \cite{FHM23}), in efficient $(\Delta+1)$-coloring and edge-coloring algorithms (see, e.g., \cite{Kuhn20}), and in other graph coloring applications (see, e.g.~\cite{BE19}).

Following an increasing interest in the distributed computing community for (\textsf{degree+1})-list coloring, it is natural to formulate a central challenge relating it to $(\Delta+1)$-coloring:

\medskip
\centerline{%
\parbox{5.0in}{
\begin{mdframed}[hidealllines=true,backgroundcolor=gray!15]\begin{quote}
    \emph{Can we solve the (\textsf{degree+1})-list coloring problem in asymptotically the same round complexity as the simpler $(\Delta+1)$-coloring problem?}
\end{quote}\end{mdframed}
}%
}
\medskip

\comments{This challenge has been elusive for many years and only recently was the affirmative answer given for \emph{randomized} algorithms in \LOCAL and \CONGEST. In particular, the (\textsf{degree+1})-list coloring problem admits a randomized $\widetilde{O}(\log^2\log n)$-round distributed algorithm in the \LOCAL model \cite{HKNT22,GG23} matching the state-of-the-art complexity for the $(\Delta+1)$-coloring problem \cite{CLP20,GG23}.~\footnote{$\widetilde{O}(f)$ hides a  polynomial term in $\log f$.} This has been later extended to the \CONGEST model with a complexity of $\widetilde{O}(\log^3\log n)$ rounds \cite{HNT22,GGHIR23}, matching the state-of-the-art complexity for the $(\Delta+1)$-coloring problem in \CONGEST \cite{HKMT21,GK21}.}

The main contribution of our paper is a complete resolution of this challenge in the \CONGESTEDC model, and in fact, even for deterministic algorithms. We settle the complexity of the (\textsf{degree+1})-list coloring problem in \CONGESTEDC by showing that it can be solved deterministically in a constant number of rounds.
\hide{\Anote{Didn't we also want to argue that we can solve \DILC in a constant number of rounds on an \MPC with $O(n)$ local space and $O(1+\frac{m}{n})$ machines? (Or we may leave it for now, but put it in the arxiv/full version.)}}


\begin{mdframed}[hidealllines=true,backgroundcolor=gray!25]\vspace{-8pt}
\begin{theorem}
\label{thm:D1LCcolor}
There is a deterministic \CONGESTEDC algorithm which finds a (\textsf{degree+1})-list coloring of any graph in a constant number of rounds.
\end{theorem}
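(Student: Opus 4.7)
The plan is to build a constant-round deterministic \CONGESTEDC algorithm around the classical \emph{almost-clique decomposition} (ACD) of Reed. First, compute an ACD that partitions $V$ into a set $S$ of \emph{sparse} vertices (whose neighborhoods contain $\Omega(\Delta^2)$ non-edges) and a family of \emph{almost-cliques} $K_1,\dots,K_t$ of size roughly $\Delta$ whose induced subgraphs are close to cliques. The ACD itself can be obtained deterministically in $O(1)$ rounds of \CONGESTEDC by derandomizing the natural sampling-based sparseness test via a pairwise-independent hash family (whose seeds fit in $O(\log n)$ bits) and then disseminating the resulting type labels through Lenzen-style routing. The algorithm would then color $S$ and the $K_i$'s separately, reserving a small buffer of colors in each palette to prevent cross-edge conflicts.

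For the sparse vertices, the key property inherited from the ACD is that each $v\in S$ has $\Omega(\Delta)$ \emph{palette slack}: a constant fraction of the $d(v)+1$ colors in $v$'s list remain free under any partial coloring of the rest of the graph. I would produce this slack deterministically by executing a single randomized color-trial step and then using the method of conditional expectations in \CONGESTEDC to commit to a concrete trial whose number of bad edges is at most the expectation. The pairwise aggregates required to evaluate each conditional expectation fit in $O(n\log n)$ total bits, and hence in $O(1)$ rounds. Once slack has been created, the sparse vertices form an instance where palettes strictly exceed degrees by a constant factor, which can be finished in $O(1)$ further rounds by a second derandomized trial.

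For the almost-cliques, the internal coloring task reduces to finding a system of distinct representatives between $K_i$ and an available-color pool of size at least $|K_i|$, i.e.\ a perfect matching in a bipartite graph with $O(|K_i|^2)$ edges. Small almost-cliques (e.g.\ $|K_i|=O(\sqrt n)$) fit entirely on one machine and can be solved by local computation after Lenzen routing; larger ones are spread across machines and the matching is extracted via standard $O(1)$-round \CONGESTEDC matrix and matching primitives combined with derandomized Hall-type arguments. The hard step is this last case, when an almost-clique has size $\Theta(n)$: one has to argue simultaneously that (i) external interactions with $S$ and with other $K_j$'s can be absorbed by reserving a constant fraction of each palette as a buffer, and (ii) the resulting deterministic bipartite-matching subproblem genuinely fits into the $O(n\log n)$ per-machine bandwidth over a constant number of rounds. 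Both points hinge on the quantitative guarantees of the ACD, and I expect the proof to proceed by a careful accounting of which vertices can possibly compete for which colors, in the spirit of the recent randomized \LOCAL algorithms for (\textsf{degree+1})-list coloring cited above.
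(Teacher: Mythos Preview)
Your plan has several genuine gaps that would prevent it from yielding a correct proof.

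First, the almost-clique decomposition you describe is the uniform-degree version tailored to $(\Delta+1)$-coloring: sparse vertices have $\Omega(\Delta^2)$ non-edges in their neighborhood, almost-cliques have size ``roughly $\Delta$,'' and slack is $\Omega(\Delta)$. In \DILC\ there is no single $\Delta$ governing all vertices; palettes have size $d(v)+1$, and the hard vertices are precisely those whose degree is far below $\Delta$ (they may have only $O(1)$ colors). Your decomposition, as stated, says nothing about such vertices. The \DILC\ version of the ACD (as in the Halld{\'o}rsson--Kuhn--Nolin--Tonoyan work you cite) must use vertex-dependent thresholds and is substantially more delicate; you would have to redo the entire accounting with $d(v)$ in place of $\Delta$, and it is exactly this heterogeneity that the paper identifies as the central obstacle.

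Second, even granting constant-factor multiplicative slack for sparse vertices, the claim that ``a second derandomized trial'' finishes the coloring in $O(1)$ rounds is unsupported. A single derandomized trial colors a constant fraction of vertices and reduces degrees by a constant factor, but iterating this is $\Theta(\log\Delta)$ rounds, not $O(1)$. Getting from constant-factor slack to an $O(1)$-round algorithm is the heart of the problem, and the paper solves it not by repeated trials but by a hierarchical bucketing of colors and vertices (extending the Czumaj--Davies--Parter scheme) that simultaneously shrinks all instances to size $O(n)$ in a fixed number of recursive levels.

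Third, for large almost-cliques you invoke ``standard $O(1)$-round \CONGESTEDC\ matrix and matching primitives'' to extract a perfect bipartite matching deterministically. No such primitive is known: deterministic bipartite perfect matching in $O(1)$ rounds of \CONGESTEDC\ is an open problem, and the randomized approaches (via matrix rank or Tutte polynomials) do not derandomize in constant rounds. This step is therefore a hole, not a routine citation.

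In short, the paper deliberately avoids the ACD route and instead builds a new tree-of-buckets partition where each vertex is placed at a level determined by its own degree; slack is generated by derandomized \textsc{ColorTrial} and \textsc{Subsample} procedures, and a recursive size argument (over ten levels) brings the residual graph down to $O(n)$ edges. Your outline does not supply a substitute for any of these components.
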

\end{mdframed}

\hide{
\Anote{I guess we will defer this sentence and \Cref{thm:DeltaCcolor} to the next time \dots}%
We will also present an application of \Cref{thm:D1LCcolor} to the distributed Brooks' theorem in  \CONGESTEDC.

\begin{theorem}
\label{thm:DeltaCcolor}
There is a randomized \CONGESTEDC algorithm that in $O(1)$ rounds finds a $\Delta$-coloring of any graph (with $\Delta\ge 3$ that does not contain a $(\Delta+1)$-clique), succeeding with high probability in $n$.
\end{theorem}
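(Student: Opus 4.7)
My plan is to reduce $\Delta$-coloring to D1LC by combining a short randomized preprocessing phase with a single call to Theorem \ref{thm:D1LCcolor}. Every step takes $O(1)$ rounds in the Congested Clique model.

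First I would partition $G$ into connected components in $O(1)$ rounds using standard CCLQ tools, and for every component $C$ with $\Delta(C) < \Delta$ simply invoke Theorem \ref{thm:D1LCcolor} on $C$ with the uniform palette $\{1,\ldots,\Delta\}$. This is a valid D1LC input because $d(v)+1 \le \Delta$ for every $v \in C$, so the coloring is produced in a constant number of rounds and deterministically.

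The interesting case is a component $C$ with $\Delta(C)=\Delta$, which by hypothesis is not the clique $K_{\Delta+1}$. For $\Delta$ bounded by an absolute constant, each such component has $O(n)$ edges in total and I would collect it at a single machine via Lenzen's routing in $O(1)$ rounds and solve it locally using the sequential Brooks algorithm. For large $\Delta$ the plan is to let each vertex of degree exactly $\Delta$ independently sample a tentative color uniformly from $\{1,\ldots,\Delta\}$ and keep it only if no neighbor sampled the same color. Standard Chernoff concentration shows that w.h.p.\ a constant fraction of the neighbors of every uncolored vertex are successfully colored, and a birthday-style argument shows that w.h.p.\ the number of distinct colors appearing among those colored neighbors is strictly less than the number of colored neighbors themselves. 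This forces the residual palette of every uncolored vertex to strictly exceed its residual degree, yielding a valid D1LC instance on the uncolored subgraph, which is then finished by a single call to Theorem \ref{thm:D1LCcolor}.

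The main obstacle will be the analysis inside near-$(\Delta+1)$-clique regions of $C$, where random sampling tends to use nearly all $\Delta$ colors distinctly among the neighbors of each vertex and so may fail to produce the D1LC slack required above. Here I would invoke the non-$(\Delta+1)$-clique hypothesis structurally via an almost-clique decomposition of $C$: inside each identified almost-clique $K$ (which by assumption is close to but strictly not equal to $K_{\Delta+1}$), I would explicitly locate a non-edge $\{u,w\}$ and pre-color $u$ and $w$ with the same color \emph{before} the random-sampling step. This Brooks-style seeding creates the required slack throughout each dense region, while random sampling handles the sparse inter-clique remainder; Theorem \ref{thm:D1LCcolor} then completes the coloring in $O(1)$ rounds overall, succeeding w.h.p.\ in $n$.
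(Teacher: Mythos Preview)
The paper does not actually contain a proof of this theorem. The statement appears only inside a \texttt{\textbackslash hide\{\ldots\}} block together with the authors' note ``I guess we will defer this sentence and \textbackslash Cref\{thm:DeltaCcolor\} to the next time,'' so there is nothing in the manuscript to compare your attempt against.

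That said, your outline is broadly the standard route (generate slack, then invoke the D1LC black box), but it has a genuine gap in the degree regime between ``$\Delta$ bounded by an absolute constant'' and ``$\Delta$ large enough for Chernoff to give w.h.p.\ bounds.'' When $\Delta$ is, say, $\Theta(\log^{1/2} n)$, you can neither collect components onto a single machine (the total edge count can be $\Theta(n\Delta)$) nor appeal to concentration: a single random color trial will, with probability $\Theta(1)$ rather than $o(1/n)$, leave some degree-$\Delta$ vertex with zero slack. You would need either a derandomization argument (as the paper does throughout for its other procedures) to control the failed vertices, or a structural reduction that works for all $\Delta \ge 3$ simultaneously.

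A second point: pre-coloring one non-edge pair per almost-clique is not by itself sufficient. You must argue that these seeded pairs can be chosen consistently across the whole graph (no two seeded vertices adjacent and same-colored), and that the single unit of slack they create actually propagates to every degree-$\Delta$ vertex in the clique; the usual Brooks-style reductions need a bit more care here (e.g., identifying a vertex with two same-colorable non-neighbors whose removal leaves the rest connected, or handling the case where the almost-clique is exactly $K_\Delta$ plus pendants).
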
}


\subsection{Background and Related Works}
\label{subsec:related-works}

The distributed graph coloring problems have been extensively studied in the last three decades, starting with a seminal paper by Linial \cite{Linial92} that introduced the \LOCAL model and originated the area of local graph algorithms. Since the $(\Delta+1)$-coloring 
problem can be solved by a simple sequential greedy algorithm, but it is challenging to be solved efficiently in distributed (and parallel) setting, the $(\Delta+1)$-coloring problem became a benchmark problem for distributed computing and a significant amount of research has been devoted to the study of these problems in all main distributed models: \LOCAL, \CONGEST, and \CONGESTEDC. The monograph 
\cite{BE13} gives a comprehensive description of many of the earlier results.

It is known from research on parallel algorithms that $(\Delta+1)$-coloring can be computed in $O(\log n)$ rounds by randomized algorithms in the \LOCAL model \cite{ABI86,Luby86}. Linial \cite{Linial92} observed that for smaller values of $\Delta$, one can do better: he showed that it is possible to deterministically color arbitrary graphs of maximum degree $\Delta$ with $O(\Delta^2)$ colors in $O(\log^*n)$ rounds; this can be easily extended to obtain a deterministic \LOCAL algorithm for $(\Delta+1)$-coloring that runs in $O(\Delta^2 + \log^*n)$ rounds, and thus in bounded degree graphs, a $(\Delta+1)$-coloring can be computed in $O(\log^*n)$ rounds. These results have since been improved for general values of $\Delta$: \comments{the current state-of-the-art for the $(\Delta+1)$-coloring problem in \LOCAL is $\widetilde{O}(\log^2\log n)$ rounds for randomized algorithms \cite{CLP20,GG23} and $\widetilde{O}(\log^2 n)$ for deterministic algorithms \cite{GG23}}. Furthermore, the fastest algorithms mentioned above can be modified to work also for the more general $(\Delta+1)$-list coloring problem in the \LOCAL model. (In fact, many of those algorithms critically rely on this problem as a subroutine.)

For the \CONGEST model, the parallel algorithms mentioned above \cite{ABI86,Luby86} can be implemented in the \CONGEST model to obtain randomized algorithms for both the $(\Delta+1)$-coloring and $(\Delta+1)$-list coloring problems that run in $O(\log n)$ rounds. \comments{Only recently this bound has been improved for all values of $\Delta$: The result in a seminal paper by Halld{\'o}rsson \etal \cite{HKMT21} when combined with the result of Ghaffari \etal \cite{GGHIR23} implies a randomized \CONGEST algorithm that solves the $(\Delta+1)$-coloring and $(\Delta+1)$-list coloring problems in $\widetilde{O}(\log^3\log n)$ rounds. For deterministic computation, the best known
algorithm \cite{GK21} runs in $\widetilde{O}(\log^2\Delta \cdot \log n)$ rounds.}

As for the lower bounds, one of the first results in distributed computing was a lower bound in \LOCAL of $\Omega(\log^*n)$ rounds for computing an $O(1)$-coloring of a graph of maximum degree $\Delta = 2$, shown by Linial \cite{Linial92} for deterministic algorithms, and by Naor \cite{Naor91} for randomized ones. Stubbornly, the $\Omega(\log^*n)$ rounds is still the best known lower bound for the $(\Delta+1)$-coloring problem in \LOCAL and \CONGEST.

We can do better for the \CONGESTEDC model. After years of gradual improvements, Parter \cite{Parter18} exploited the \LOCAL shattering approach from \cite{CLP20} to give the first sublogarithmic-time randomized $(\Delta+1)$-coloring algorithm for \CONGESTEDC, which runs in $O(\log\log\Delta \cdot \log^*\Delta)$ rounds. This bound has been later improved by Parter and Su \cite{PS18} to $O(\log^*\Delta)$ rounds. Finally, Chang \etal \cite{CFGUZ19} settled the randomized complexity of $(\Delta+1)$-coloring (and also for $(\Delta+1)$-list coloring) and obtained a randomized \CONGESTEDC algorithm that runs in a constant number of rounds. This result has been later simplified and turned into a deterministic constant-round \CONGESTEDC algorithm by Czumaj \etal \cite{CDP20}.

\paragraph{(\textsf{degree+1})-list coloring (\DILC).}
%
The \DILC problem in distributed setting has been studied both on its own, and also as a tool in designing distributed algorithms for other coloring problems, like $(\Delta+1)$-coloring, $(\Delta+1)$-list coloring, and $\Delta$-coloring. The problem is not easier than the $(\Delta+1)$-coloring and the $(\Delta + 1)$-list coloring problems, and the difficulty of dealing with vertices having color palettes of significantly different sizes makes the problem more challenging. As the result, until very recently the obtained complexity bounds have been significantly weaker than the bounds for the $(\Delta+1)$-coloring problem, see, e.g., \cite{BKM20,FHK16,Kuhn20}.
\comments{This changed in 2022, when in a  breakthrough Halld{\'o}rsson \etal \cite{HKNT22} gave a randomized $O(\log^3 \log n)$-round distributed algorithm for \DILC in the \LOCAL distributed model. When this result is combined with the result by Ghaffari and Grunau \cite{GG23}, it implies an $\widetilde{O}(\log^2 \log n)$-round \LOCAL algorithm for \DILC. Observe that this bound matches the state-of-the-art complexity for the (easier) $(\Delta+1)$-coloring problem \cite{CLP20,GG23}.
The algorithm for \DILC in the \LOCAL model was later extended to the \CONGEST model by Halld{\'o}rsson \etal \cite{HNT22}, who designed a randomized \CONGEST algorithm for \DILC that runs in $O(\log^5 \log n)$ rounds. When this result is combined with the network decomposition result by Ghaffari \etal \cite{GGHIR23} implies $\widetilde{O}(\log^3 \log n)$-round \CONGEST algorithm for \DILC. Similarly as for the \LOCAL model, this bound matches the state-of-the-art complexity for the $(\Delta+1)$-coloring problem in \CONGEST \cite{HKMT21,GGHIR23}.}

Specifically for the \CONGESTEDC model, the only earlier \DILC result we are aware of is by Bamberger \etal \cite{BKM20}, who extended their own \CONGEST algorithm for the problem to obtain a deterministic \DILC algorithm requiring $O(\log\Delta\log\log\Delta)$ rounds in \CONGESTEDC. \comments{However, the randomized state-of-the-art \DILC bound in the \CONGESTEDC model follows from the aforementioned $\widetilde{O}(\log^3\log n)$-round \CONGEST algorithm \cite{HNT22,GGHIR23}, which works directly in \CONGESTEDC}. This should be compared with the state-of-the-art $O(1)$-round \CONGESTEDC algorithms for $(\Delta+1)$-coloring 
\cite{CFGUZ19,CDP20}.

\paragraph{Recent work in \DILC on \MPC.}
Various coloring problems have been also studied in a related model of parallel computation, the so-called \emph{Massively Parallel Computation} (\MPC) model. The \MPC model, introduced by Karloff \etal \cite{KSV10}, is now a standard theoretical model for parallel algorithms.
%
The \MPC model with $O(n)$ local space and $n$ machines is essentially equivalent to the \CONGESTEDC model (see, e.g., \cite{BDH18,HP15}), and this implies that many \MPC algorithms can be easily transferred to the \CONGESTEDC model. (However, this relationship
requires that the local space of \MPC is $O(n)$, not more.)

Both the $(\Delta+1)$-coloring and $(\Delta+1)$-list coloring problems have been studied in \MPC extensively (see, e.g., \cite{BKM20,CDP20} for linear local space \MPC and \cite{BKM20,CFGUZ19,CDP21a} for sublinear local space \MPC).
We are aware only of a few works for the \DILC problem on \MPC, see \cite{BKM20,CCDM23,HKNT22}. The work most relevant to our paper is the result of Halld{\'o}rsson \etal \cite{HKNT22}.
They give a constant-round \MPC algorithm assuming the local \MPC space is \emph{slightly superlinear}, i.e., $\Omega(n \log^4n)$ \cite[Corollary~2]{HKNT22}. This result relies on the palette sparsification approach due to Alon and Assadi \cite{AlonA20} (see also \cite{ACK19}) to the \DILC problem, which reduces the problem to a sparse instance of size $O(n \log^4n)$; hence, on an \MPC with $\Omega(n \log^4n)$ local space one can put the entire graph on a single \MPC machine and then solve the problem in a single round.
\hide{
Both the $(\Delta+1)$-coloring and $(\Delta+1)$-list coloring problems have been studied in the \MPC extensively (see, e.g., \cite{BKM20,CDP20} for the results for linear local space \MPC and \cite{BKM20,CFGUZ19,CDP21a} for the results for sublinear local space \MPC).
As for the \DILC problem on \MPC, we are aware of only three recent works \cite{BKM20,CCDM23,HKNT22}. In the setting of sublinear local space \MPC, very recently Coy \etal \cite{CCDM23} obtained a randomized $O(\log\log\log n)$-round \MPC algorithm for \DILC, improving upon an earlier bound of $O(\log^2 \Delta+\log n)$ \cite{BKM20}. Bamberger \etal \cite{BKM20} give also (\textsf{degree+1})-list coloring algorithms requiring $O(\log^2 \Delta)$ rounds in $O(n)$ local space \MPC (the product of the number of machines times the local space is $\widetilde{O}(n+m)$; if the number of machines is allowed to be as large as $\Theta(n)$, as in \CONGESTEDC, then the number of rounds reduces to $O(\log\Delta\log\log\Delta)$, as stated above). 
The framework of Halld{\'o}rsson \etal \cite{HKNT22} developed for the randomized $O(\log^3\log n)$-round distributed algorithm for \DILC in \LOCAL can be easily incorporated into a constant-round \MPC algorithm assuming the local \MPC space is \emph{slightly superlinear}, i.e., $\Omega(n \log^4n)$ \cite[Corollary~2]{HKNT22}. This result relies on the color palette sparsification approach due to Alon and Assadi \cite{AlonA20} (see also \cite{ACK19}) to the \DILC problem, which in the \MPC setting allows to reduce the problem to a sparse instance of size $O(n \log^4n)$; hence, by having $O(n \log^4n)$ local space one can put the entire graph on a single \MPC machine and then solve the problem in a single round.
}%
\hide{
Both the $(\Delta+1)$-coloring and $(\Delta+1)$-list coloring problems have been studied in the \MPC extensively (see, e.g., \cite{BKM20,CDP20} for the results for linear local space \MPC and \cite{BKM20,CFGUZ19,CDP21a} for the results for sublinear local space \MPC).
As for the \DILC problem on \MPC, we are aware of only three recent works \cite{BKM20,CCDM23,HKNT22}. Bamberger \etal \cite{BKM20}, who studied the problem in the \CONGEST model, give also (\textsf{degree+1})-list coloring algorithms requiring $O(\log^2 \Delta)$ rounds in $O(n)$ local space \MPC and $O(\log^2 \Delta+\log n)$ in sublinear local space \MPC (in both cases the product of the number of machines times the local space is $\widetilde{O}(n+m)$; if the number of machines is allowed to be as large as $\Theta(n)$, as in \CONGESTEDC, then the number of rounds reduces to $O(\log\Delta\log\log\Delta)$, as stated above). The bound for sublinear local space \MPC has been very recently improved by Coy \etal \cite{CCDM23}, who obtain a randomized $O(\log\log\log n)$-round \MPC algorithm for \DILC. 
The framework of Halld{\'o}rsson \etal \cite{HKNT22} developed for the randomized $O(\log^3\log n)$-round distributed algorithm for \DILC in \LOCAL can be easily incorporated into a constant-round \MPC algorithm assuming the local \MPC space is \emph{slightly superlinear}, i.e., $\Omega(n \log^4n)$ \cite[Corollary~2]{HKNT22}. This result relies on the color palette sparsification approach due to Alon and Assadi \cite{AlonA20} (see also \cite{ACK19}) to the \DILC problem, which in the \MPC setting allows to reduce the problem to a sparse instance of size $O(n \log^4 n)$; hence, by having $O(n \log^4n)$ local space one can put the entire graph on a single \MPC machine and then solve the problem in a single round.
}%
Given the similarity of \CONGESTEDC and the \MPC model with \emph{linear local space}, one could hope that the use of ``slightly superlinear'' \MPC local space in \cite{HKNT22} can be overcome and the approach can allow the problem to be solved in linear local space, resulting in a \CONGESTEDC algorithm with a similar performance. However, these  sparsification techniques are naturally limited to producing instances of $O(n poly(\log n))$ size, and these instances  still appear difficult to color in $O(1)$ rounds in \CONGESTEDC, and it is not clear how to extend that approach.

Further, we have recently seen a similar situation in $(\Delta+1)$-coloring. The palette sparsification by Assadi \etal \cite{ACK19} trivially implies a constant-round \MPC algorithm for $(\Delta+1)$-coloring with local space $\Omega(n \log^2 n)$, but does not give a constant-round algorithm for $(\Delta+1)$-coloring in \CONGESTEDC. Only by using a fundamentally different approach Chang \etal \cite{CFGUZ19} and then (deterministically) Czumaj \etal \cite{CDP21c} obtained constant-round $(\Delta+1)$-coloring algorithms in \CONGESTEDC.
Hence, despite having a constant-round algorithm for \DILC in \MPC with local space $\Omega(n \log^4 n)$, possibly a different approach than palette sparsification is needed to achieve a similar performance for \DILC in \CONGESTEDC.
%

\paragraph{Derandomization tools for distributed coloring algorithms.}
In our paper we rely on a recently developed general scheme for derandomization in the \CONGESTEDC model (and used also extensively in the \MPC model) by combining the methods of bounded independence with efficient computation of conditional expectations. This method was first applied by Censor-Hillel \etal \cite{CPS20}, and has since been used in several other works for graph coloring problems, (see, e.g., \cite{CDP20,Parter18}), and for other problems in \CONGESTEDC and \MPC. 

The underlying idea begins with the design of a randomized algorithm using random choices with only \emph{limited independence}, e.g., $O(1)$-wise-independence. 
Then, each round of the randomized algorithm can be simulated by giving all nodes a shared random seed of $O(\log n)$ bits. Next, the nodes deterministically compute a seed which is at least as good as a random seed is in expectation. This is done by using an appropriate estimation of the local quality of a seed, which can be aggregated into a global measure of the quality of the seed. Combining this with the techniques of conditional expectation, pessimistic estimators, and bounded independence, this allows selection of the bits of the seed ``batch-by-batch,'' where each batch consists of $O(\log n)$ bits. Once all bits of the seed are computed, we can used it to simulate the random choices of that round, as it would have been performed by a randomized algorithm.
A more detailed explanation of this approach is given in \Cref{subsec:mpc_derandomization}.

\subsection{Technical Overview}
\label{subsec:techniques}

The core part of our constant-round deterministic \CONGESTEDC \DILC algorithm (\textsc{BucketColor}, \Cref{alg:bucketcolor}) does \emph{not} follow the route of recent \DILC algorithms for \LOCAL and \CONGEST due to Halld{\'o}rsson \etal \cite{HKNT22,HNT22}. Instead, we employ fundamentally different techniques, building on a graph partitioning approach developed in a series of papers \cite{Parter18, PS18, CFGUZ19, CDP20}. Chang \etal \cite{CFGUZ19} showed that their version of this partitioning procedure could yield an $O(1)$-round \CONGESTEDC algorithm for $(\Delta+1)$-list coloring, and Czumaj, Davies, and Parter~\cite{CDP20} then showed that a derandomized version gives a deterministic algorithm with the same complexity. For the purposes of providing the randomized intuition behind the process, the procedure of \cite{CFGUZ19} would already suffice, but we will use the terminology of \cite{CDP20} for easier comparison in our subsequent derandomization.

 This partitioning-based $(\Delta+1)$-list coloring algorithm works by partitioning nodes and colors into $\Delta^\eps$ \emph{buckets}, for a 
small constant $\eps$. (This partitioning is initially at random, but then it is derandomized in \cite{CDP20} using the method of conditional expectations). The nodes are distributed among all buckets, and the colors are distributed among all but one (the \emph{``leftover'' bucket}), which is left without colors and is set aside to be colored later. Then, each node's palette is restricted to only the colors assigned to its bucket (except those in the leftover bucket, whose palettes are not restricted). This ensures that nodes in different buckets have entirely disjoint palettes, and therefore edges between different buckets can be removed from the graph, since they would never cause a coloring conflict. One important property is that nodes still have sufficient colors when their palettes were restricted in this way. This is achieved in \cite{CFGUZ19} using two main arguments: firstly, the fact that colors are distributed among one fewer buckets than nodes provided enough `slack' to ensure that with reasonably high probability, a node would receive more colors than neighbors in its bucket. Secondly, the few nodes that do \emph{not} satisfy this property induce a small graph (of size $O(n)$), and therefore can be collected onto a single network node and color separately.

Using the approach sketched above, a $(\Delta+1)$-list coloring instance can be reduced into multiple smaller $(\Delta+1)$-list coloring instances (i.e., on fewer nodes and with a new, lower maximum degree) that are \emph{independent} (since they had disjoint palettes), and so can be solved in parallel without risking coloring conflicts. The final part of the analysis of \cite{CFGUZ19} was to show that, after recursively performing this bucketing process $O(1)$ times, these instances are of $O(n)$ size and therefore they could be collected to individual nodes and solved in a constant number of rounds in \CONGESTEDC.

There are major barriers to extending this approach to (\textsf{degree+1})-list coloring. Crucially, it required the number of buckets to be dependent on $\Delta$, and all nodes' palette sizes to be at least $\Delta$. Dividing nodes among too few buckets would cause the induced graphs to be too large, and the algorithm would not terminate in $O(1)$ rounds; using too many buckets would fail to provide nodes with sufficient colors in their bucket. In the \DILC problem, we no longer have a uniform bound on palette size, so it is unclear how to perform this bucketing.

Our first major conceptual change is that, rather than simply partitioning among a number (dependent on $\Delta$) of equivalent buckets, we instead use a tree-structured hierarchy of buckets\footnote{The full recursive structure of the algorithms of \cite{CFGUZ19,CDP20} could also be viewed as a tree, but the difference is that there, at any stage, the nodes are in buckets of the same level, and the final coloring of all nodes happens at the leaf level, so the analysis need only consider a `flat' collection of buckets. In our algorithm, nodes are placed, and eventually colored, at heterogenous levels of the tree, so we must consider the full tree structure.}, with the $O(\log\log \Delta)$ levels in the hierarchy corresponding to doubly-exponentially increasing degree ranges (see Figure \ref{fig:bucket}). Nodes with degree $d(v)$ will be mapped to a bucket in a level containing (very approximately) $d(v)^{0.7}$ buckets. Colors will be mapped to a top-level (leaf) bucket, but will also be assigned to every bucket on the leaf-root path in the bucket tree (see Figure \ref{fig:part}). We can therefore discard all edges between different buckets that do not have an ancestor-descendant relationship, since these buckets will have disjoint palettes.

This change allows nodes to be bucketed correctly according to their own degree. However, it introduces several new difficulties:

\begin{itemize}

\item We no longer get a good bound on the number of lower-degree neighbors of a node $v$ that may share colors with it. We can only hope to prove that $v$ receives enough colors relative to its higher (or same)-degree neighbors.

\item The technique of having a leftover bucket which is not assigned colors no longer works to provide slack (nor even makes sense - we would need a leftover bucket at every level, but, for example, level $0$ only contains one bucket).
\end{itemize}

\begin{figure}
\centering
\begin{minipage}{.55\textwidth}
  \centering
  \includegraphics[width=.95\linewidth]{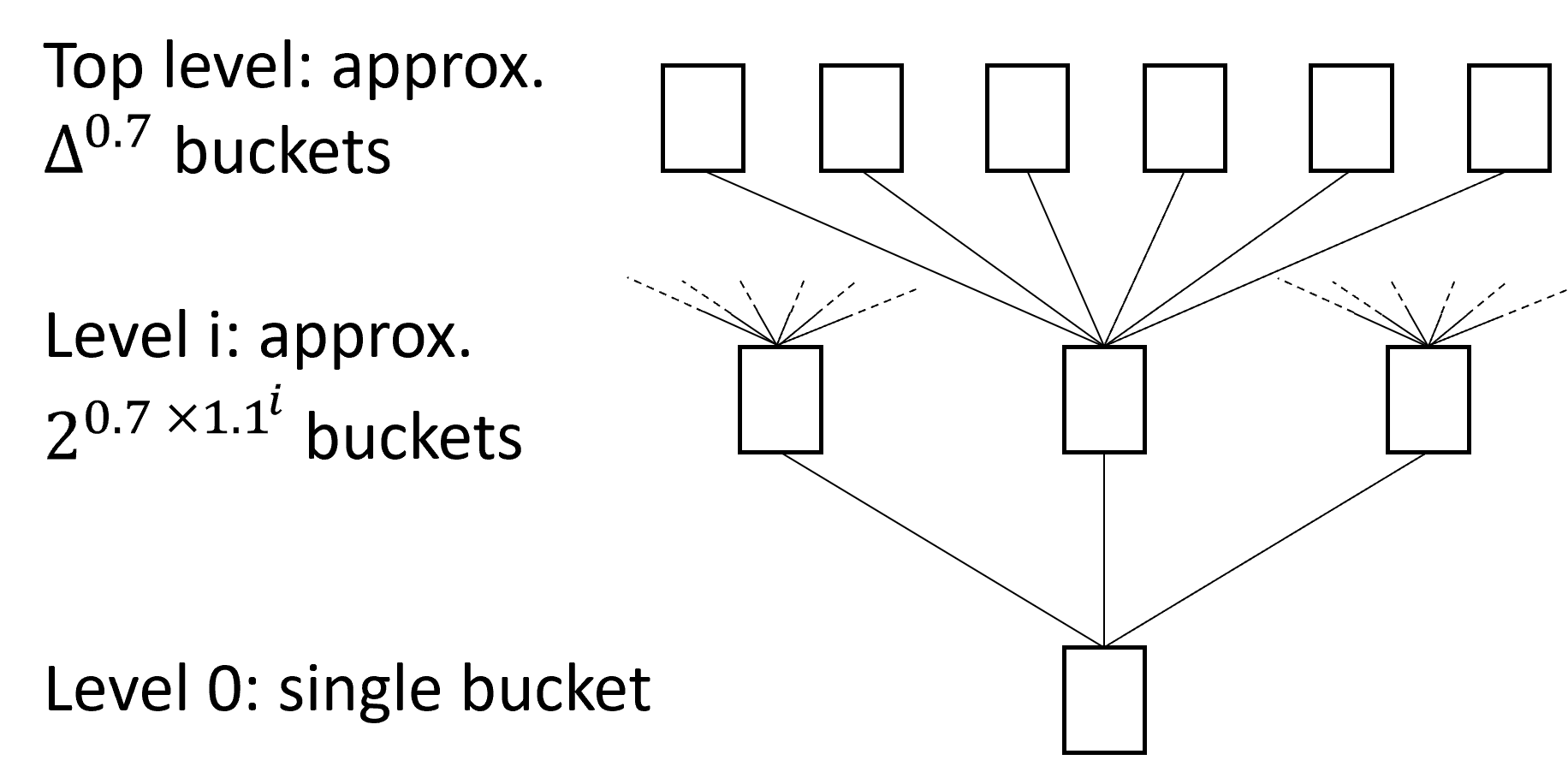}
  \captionof{figure}{Bucket structure}
  \label{fig:bucket}
\end{minipage}%
\begin{minipage}{.42\textwidth}
  \centering
  \includegraphics[width=.95\linewidth]{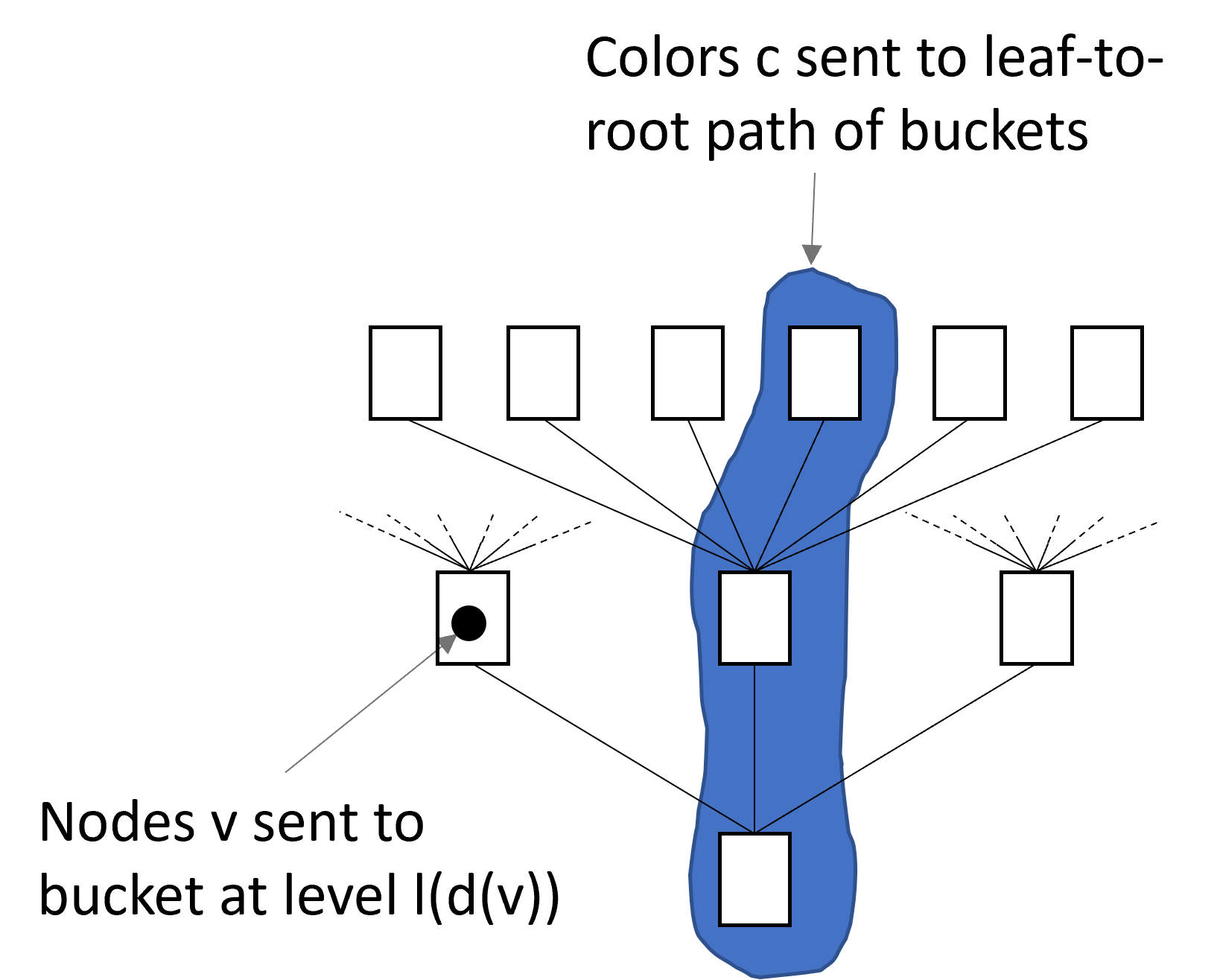}
  \captionof{figure}{Partitioning nodes and colors}
  \label{fig:part}
\end{minipage}
\end{figure}

In order to cope with these challenges, firstly, we employ the observation that if we were to greedily color in non-increasing order of degree, we would require nodes to have a palette size of $d^+(v) + 1$ (where $d^+(v)$ is the number of $v$'s neighbors of equal or higher degree), rather that $d(v)+1$ (since $d^+(v)$ of $v$'s neighbors will have been colored at the point $v$ is considered). Therefore, we argue that we can still show that the graph is colorable even though our bucketing procedure may leave nodes with many more lower-degree neighbors than palette colors. (It is not necessarily clear how to find such a coloring in a parallel fashion, but in our analysis, we will be able to address this issue.)

This observation also helps us with the problem of generating slack without a leftover bucket. We show that, since lower-degree neighbors are now effectively providing slack, only nodes with very few lower-degree neighbors may not receive enough colors (relative to higher-degree neighbors) in their bucket. It transpires that we can generate slack for these nodes prior to \textsc{BucketColor} via derandomizations of fairly standard procedures (\textsc{ColorTrial}, \Cref{alg:colortrialrand}, and \textsc{SubSample}, \Cref{alg:subsamplerand}). The randomized bases for all these procedures would inevitably result in some nodes failing to meet the necessary properties for the next stage. To overcome this, we derandomize all of these procedures, using the method of conditional expectations. As well as making the algorithm deterministic, this has the important property of ensuring that \emph{failed} nodes  form an $O(n)$-size induced graph, which can be easily dealt with later.

Having solved the problem of slack for the bucketing process (by showing that nodes have received palettes of size at least $d^+(v) + 1$  within their buckets), it remains to find a parallel analog to greedily coloring in non-increasing order of degree. Our approach here is to repeatedly move all nodes from their current bucket to a child of that bucket in the bucket tree (which further restricts their neighborhood and available palette). We show that, by correct choice of bucket and order of node consideration, we will always be able to find child buckets such that each node still has palette size at least $d^+(v) + 1$ according to the new bucket assignment. We also show that, after $O(1)$ iterations of this process, nodes only have one palette color in their bucket, and zero higher-degree neighbors. Then, all nodes can safely color themselves this palette color, and the coloring is complete.

The overall structure of the main algorithm \textsc{Color} (\Cref{alg:mainrand}) is more complicated, since \textsc{SubSample} produces a graph $G'$ of leftover nodes that are deferred to be colored later. We recursively run \Cref{alg:mainrand} on this graph $G'$, and show that it is sufficiently smaller than the original input graph so that after $O(1)$ recursive calls, the remaining graph has size $O(n)$ and can be collected and solved on a single network node.

If we combine all these tools together then we will be able to obtain a randomized \CONGESTEDC algorithm that finds a (\textsf{degree+1})-list coloring of any graph in $O(1)$ rounds. Using the \emph{method of conditional expectations} with \emph{bounded-independence hash function} (see \Cref{subsec:mpc_derandomization}-\ref{subsec:method_of_conditional_expectations}), each randomized step of our algorithm can be derandomized.

\section{Preliminaries}
\label{sec:preliminaries}

The main model considered in this paper is \CONGESTEDC, as introduced by Lotker \etal \cite{LPPP05}. It is a variant of \CONGEST, in which nodes can send a message of size $O(\log n)$ to each neighboring node in the graph in each communication round: the difference is that \CONGESTEDC allows all-to-all communication, and hence the underlying communication network is a complete graph on the nodes $V$. In particular, this allows the communication to be performed between all pairs of nodes rather than being restricted to the edges of the input graph. \CONGESTEDC has been introduced as a theoretical model to study overlay networks: an abstraction that separates the problems emerging from the topology of the communication network from the problems emerging from the structure of the problem at hand. It allows us to study a model in which each pair of nodes can communicate, and we do not consider any details of how this communication is executed by the underlying network.

The \textbf{degree+1 list coloring (\DILC) problem} is as follows: the input is a graph $G = (V,E)$ on $n$ nodes, with color palettes $\Psi(u)$ assigned to each node $u \in V$, such that $\size{\Psi(u)} \geq d(u)+1$. The objective is to find a proper coloring of nodes in $G$ such that each node as assigned to a color from its color palette (where proper coloring means that no edge in $G$ is monochromatic). In \CONGESTEDC, each node $v$ of the input graph $G$ has assigned a network node and this network node knows $\Psi(v)$ and all neighbors of $v$ in $G$.

A useful property of the \CONGESTEDC model is that thanks to the constant-round routing algorithm of Lenzen \cite{Lenzen13}, information can be redistributed essentially arbitrarily in the communication network, so there is no need to associate the computational entities with nodes in the input graph $G$. (This is in stark contrast to the related LOCAL and CONGEST distributed models in which the link between computation and input graph locality is integral.) In particular, this allows us to collect graphs of size $O(n)$ on a single node in $O(1)$ rounds. Because of this ``decoupling'' of the computation from the input graph, where appropriate we will distinguish the nodes in their roles as computational entities (``network nodes'') from the nodes in the input graph (``graph nodes'').

Computation in \CONGESTEDC is not generally restricted, but in any case our algorithms use only polynomial computation at each node ($poly(n)$ assuming that the input size is $poly(n)$, i.e. that colors are named using at most $poly(n)$-bit labels; see \Cref{subsec:colorspace} below for discussion of color space).

\subsection{Notation}
\label{subsec:notations}

For $k \in \mathbb{N}$ we let $[k] := \{1,\dots,k\}$. We consider a graph $G = (V(G),E(G))$ with $V(G)$ as the node set and $E(G)$ as the edge set. The size of a graph $G$ refers to the number of edges in $G$ and is denoted by $|G|$. The set of neighbors of a node $v$ is denoted by $N_G(v)$ and the degree of a node $v$ is denoted by $d_G(v)$. The maximum degree in the graph $G$ is denoted by $\Delta_G$. For any node $v$, we partition its neighbors into two sets $N_G^+(v) := \{u \in N_G(v):d_G(u) > d_G(v)\lor (d_G(u) = d_G(v) \land ID(u)>ID(v))\}$ and $N_G^-(v) := \{u \in N_G(v):d_G(u) < d_G(v)\lor (d_G(u) = d_G(v) \land ID(u)<ID(v))\}$ (i.e. higher or lower-degree neighbors, where same-degree neighbors are partitioned based on ID). Let $d_G^+(v) := |N_G^+(v)|$ and $d_G^-(v) := |N_G^-(v)|$. When 
$G$ is clear from the context, we suppress $G$ from the subscripts of the notation. As already mentioned, our (recursive) algorithm \textsc{Color} uses three subroutines: \textsc{ColorTrial}, \textsc{SubSample}, and \textsc{BucketColor}. We use $G_0, G_1$, and $G_2$ to denote the subgraph of $G$ on which \textsc{ColorTrial}, \textsc{SubSample}, and \textsc{BucketColor} are executed, respectively. For a node $v$ in $G_0$, $N^{\gtrsim}(v) \subseteq N_{G_0}(v)$ denotes the subset of neighbors $u$ of $v$ in $G_0$ such that $d_{G_0}(u) \ge 3 d_{G_0}(v)$. Additionally, for a node $v$ in $G_1$, $N^{\approx} (v) \subseteq N_{G_1}(v)$ denotes the subset of  neighbors $u$ of $v$ in $G_1$  such that  $\frac{1}{2}d_{G_1}(v)\leq d_{G_1}(u) \leq 6d_{G_1}(v)$.

For the coloring problem, for a node $v$, $\Psi_G(v) \subseteq [n^{O(1)}]$ denotes the list of colors in the color palette of $v$ and $p_G(v) := |\Psi_G(v)|$. As we proceed in coloring the nodes of the input graph $G$ the graph will be changing and the color palettes of the nodes may also change. 

For binary strings $a$ and $a'$ in $\{0,1\}^*$, $a \sqsubseteq a'$ denotes that $a$ is a prefix of $a'$, and $a \sqsubset a'$ denotes that it is a \emph{strict} prefix of $a'$. Furthermore, $a' \sqsupseteq a$ iff $a \sqsubseteq a'$, and $a' \sqsupset a$ iff $a \sqsubset a'$. 

\subsection{Derandomization in \CONGESTEDC}
\label{subsec:mpc_derandomization}

The \emph{method of conditional expectations} using \emph{bounded-independence hash functions} is  nowadays a classical technique for the derandomization of algorithms \cite{ES73,Luby93,MNN94,Raghavan88}. Starting with the recent work of Censor-Hillel \etal \cite{CPS20}, this approach  has also been found  powerful in the setting of distributed and parallel algorithms, see e.g., \cite{BKM20,CDP21b,CDP21c,CDP21a,CDP20,DKM19,FGG22,GK18,Parter18}.

This technique requires that we show that our randomized algorithm can be made to work in expectation using only bounded-independence.
It is known that small families of bounded-independence hash functions exist, and that hash functions in these families can be specified by a short seed. It is also known that such a family must contain a hash function that is at least as good as the expectation due to the probabilistic method.
Using these facts, we can perform an efficient search for a hash function which is at least as good as the expectation by iteratively setting a larger and larger prefix of the seed of the hash function.

In this section we first give some useful lemmas regarding $O(1)$-wise independence and the existence of small families of $O(1)$-wise independent hash functions, and then we give a formal description of the method of conditional expectations and how it is implemented in the \CONGESTEDC model.

\subsection{Bounded Independence}
\label{subsec:bounded-independence}

Our algorithm will be finding a hash function of sufficient quality from a family of $O(1)$-independent hash functions. In the following, we recall the standard notions of \emph{$k$-wise independent hash functions} and \emph{$k$-wise independent random variables}.
Then, we recall that we can construct small families of bounded-independence hash functions, and that each hash function in this family can be specified by a short seed.
Finally, we state a concentration inequality (for $k$-wise independent random variables) that will be used in the analysis of our algorithm while bounding the probabilities of some desired events.


\begin{definition} \label{def:independence}
Let $k\geq 2$ be an integer. A set $\{X_1,\ldots,X_n\}$ of $n$ random variables taking values in $S$ are said to be \textbf{$k$-wise independent} if for any $I \subset [n]$ with $|{I}| \leq k$ and any $x_i \in S$ for $i \in I$, we have 
\begin{align*}
    \Prob{\bigwedge_{i \in I} X_i=x_i} &= \prod_{i\in I} \Prob{X_i=x_i}
    \enspace.
\end{align*}
\end{definition}

\begin{definition}
A family of hash functions $\mathcal{H} = \{h : X \rightarrow Y\}$ is said to be \textbf{$k$-wise independent} if $\{h(x): x \in X\}$ are $k$-wise independent when $h$ is drawn uniformly at random from $\mathcal{H}$.
\end{definition}

We use the property that small families of $O(1)$-wise independent hash functions can be constructed, and each hash function in such a family can be specified with a small number of bits:

\begin{lemma}[\cite{LubyRackoff1988,AlonMatiasSzegedy1999}]
\label{lem:families_of_hash_functions}
For all positive integers $c_1, c_2, k$, there is a family of $k$-wise independent hash functions $\mathcal{H} = \{h : [n^{c_1}] \rightarrow [n^{c_2}]\}$ such that each function from $\mathcal{H}$ can be specified using $O(k \log n)$ bits. These functions can be evaluated in $poly(k, \log n)$ computation.
\end{lemma}

The domain of these hash functions will be applied to represent node and color IDs, and the range $[n^{O(1)}]$ will be sufficient to represent the random choices we use in our algorithm (choice of color from nodes' palettes, membership of a subsampled set, choice of `bucket'), since each such choice can be made by assigning an appropriately-sized subset of $[n^{O(1)}]$ to each outcome, with negligible ($\le n^{-O(1)}$) rounding error. For example, we can use a hash function with range $[n^5]$ to choose colors u.a.r from a palette of size $P$ by partitioning the range into $P$ sets of $\lceil \frac{n^5}{P}\rceil$ or $\lfloor \frac{n^5}{P}\rfloor$ elements, with each set corresponding to one color choice. The marginal probability of each color choice is then $\frac{1}{P}\pm O(n^{-5})$, and so with high probability these $O(n^{-5})$ rounding errors in the probabilities do not affect any outcome.

We finish this subsection by giving some useful tail bounds for $O(1)$-wise independent random variables.

\begin{lemma}[Lemma 2.3 of \cite{BR94}]\label{lem:conc}
Let $k\ge 4$ be an even integer. Suppose $X_1,\dots, X_n$ are $k$-wise independent random variables taking values in $[0, 1]$. Let $X = X_1 + \dots + X_n$, and $\mu = \Exp{X}$ be the expectation of $X$. Then for any $A >0$, 
\begin{align*}
    \Prob{|X-\mu| \ge A} &\le 8\left(\frac{k\mu+k^2}{A^2}\right)^{k/2}
    \enspace.
\end{align*}
\end{lemma}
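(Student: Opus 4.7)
The plan is the $k$-th moment method: since $k$ is even, $(X-\mu)^k\ge 0$, so Markov's inequality gives
$$\Prob{|X-\mu|\ge A}\ \le\ \frac{\Exp{(X-\mu)^k}}{A^k},$$
and it suffices to show $\Exp{(X-\mu)^k}\le 8(k\mu+k^2)^{k/2}$. To exploit the $k$-wise independence hypothesis, I would introduce the centered variables $Y_i := X_i-\Exp{X_i}$; the $Y_i$ are $k$-wise independent, satisfy $\Exp{Y_i}=0$, $|Y_i|\le 1$, and (since $X_i\in[0,1]$) $\Exp{Y_i^2}\le \Exp{X_i^2}\le \Exp{X_i}=:p_i$, with $\sum_i p_i=\mu$.

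Next, expand
$$\Exp{(X-\mu)^k}\ =\ \sum_{(i_1,\dots,i_k)\in[n]^k}\Exp{Y_{i_1}\cdots Y_{i_k}}$$
and classify the tuples by the set-partition of $[k]$ that identifies equal coordinates. Because each summand involves at most $k$ of the $Y_i$'s, $k$-wise independence lets the expectation factor over the blocks of the induced partition. Any block of size $1$ then contributes a factor $\Exp{Y_i}=0$, so only partitions whose blocks \emph{all} have size $\ge 2$ survive. For such a surviving partition with $j$ blocks (so $2j\le k$) of sizes $s_1,\dots,s_j\ge 2$, using $|Y_i^{s_\ell}|\le Y_i^2$ and summing over the choice of $j$ distinct indices from $[n]$ gives
$$\sum_{\text{distinct }a_1,\dots,a_j\in[n]}\ \prod_{\ell=1}^{j}\bigl|\Exp{Y_{a_\ell}^{s_\ell}}\bigr|\ \le\ \sum_{\text{distinct }a_1,\dots,a_j}\prod_{\ell=1}^{j}p_{a_\ell}\ \le\ \Bigl(\sum_{i=1}^n p_i\Bigr)^{j}\ =\ \mu^{j}.$$

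Finally, I would multiply by $N_{k,j}$, the number of partitions of $[k]$ into exactly $j$ blocks of size $\ge 2$. A clean way is to bound $N_{k,j}$ by the number of ways to first pick a 2-element "core" of each block and then assign the remaining $k-2j$ elements arbitrarily, giving $N_{k,j}\le \binom{k}{2,2,\dots,2,k-2j}\cdot j^{k-2j}/j!\le (k^2)^{k/2-j}k^{2j}/(2^j j!)$ (or any comparable estimate). Plugging in yields
$$\Exp{(X-\mu)^k}\ \le\ \sum_{j=1}^{k/2}N_{k,j}\,\mu^{j},$$
and recognizing the right side as (a constant multiple of) the $j=k/2$-truncation of the binomial expansion of $(k\mu+k^2)^{k/2}$ gives the claimed bound, with the constant 8 emerging after using $\mu\ge 1000$ to absorb lower-order slack. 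The main obstacle is this last combinatorial bookkeeping: producing an $N_{k,j}$ estimate tight enough that summing $N_{k,j}\mu^j$ compresses exactly into $(k\mu+k^2)^{k/2}$ and that the leading constant is at most $8$. The rest of the argument—Markov, centering, and the vanishing of singletons under $k$-wise independence—is essentially forced.
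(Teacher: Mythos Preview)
The paper does not prove this lemma at all; it is quoted directly as Lemma~2.3 of Bellare and Rompel~\cite{BR94} and used as a black box, so there is no ``paper's own proof'' to compare against.

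That said, your outline is the standard $k$-th moment argument and is essentially the one in~\cite{BR94}: Markov on $(X-\mu)^k$, center the variables, expand, use $k$-wise independence to factor over blocks, kill singleton blocks, and bound each surviving $j$-block contribution by $\mu^j$ times a partition count. You have correctly identified that the only nontrivial step is the combinatorics of bounding $\sum_j N_{k,j}\mu^j$ by $8(k\mu+k^2)^{k/2}$; this is exactly where the work in the original reference lies, and your sketch of how to estimate $N_{k,j}$ is along the right lines. Note that the hypothesis $\mu\ge 1000$ is not actually needed for the Bellare--Rompel bound itself (their Lemma~2.3 holds without it); the paper imposes it only to simplify the derivation of Corollary~\ref{cor:conc}.
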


We will often use \Cref{lem:conc} with $k=100$ and $\mu \ge 1000$, giving the following special case:

\begin{corollary}\label{cor:conc}
Suppose $X_1,\dots, X_n$ are $100$-wise independent random variables taking values in $[0, 1]$. Let $X = X_1 + \dots + X_n$, and $\mu = \Exp{X}$ is the expectation of $X$. If $\mu \geq 1000$, then for any $A>0$:
\begin{align*}
    \Prob{|X-\mu| \ge A} &\le \left(\frac{111\mu}{A^2}\right)^{50}
    \enspace.
\end{align*}
\end{corollary}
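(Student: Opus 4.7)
The plan is to apply \Cref{lem:conc} directly with $k = 100$ and then use the standing regime hypothesis $\mu \geq 1000$ to simplify both the lower-order term $k^2$ and the leading multiplicative factor of $8$. No new probabilistic content is required beyond what the lemma already supplies; the corollary is a mechanical specialization designed to give a cleaner form for later invocations.

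Concretely, I would first verify the hypotheses of \Cref{lem:conc} with $k = 100$: $k$ is an even integer of size at least $4$, the variables $X_1,\dots,X_n$ are $100$-wise independent and take values in $[0,1]$ by assumption, and the regime $\mu \geq 1000$ is the one under which this corollary is intended to be applied (as stated in the sentence immediately preceding the corollary). Substituting $k = 100$ into the lemma then gives
\[
\Pr[|X - \mu| \geq A] \;\leq\; 8 \left( \frac{100\mu + 10000}{A^2} \right)^{50}.
\]
Next, I would use $\mu \geq 1000$ to conclude that $k^2 = 10000 \leq 10\mu$, so that $k\mu + k^2 \leq 110\mu$, yielding
\[
\Pr[|X - \mu| \geq A] \;\leq\; 8 \left( \frac{110\mu}{A^2} \right)^{50}.
\]
Finally, I would absorb the leading factor $8$ into the base of the $50$th power, producing the target bound.

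The only step with any real content is this last absorption, which I anticipate as the main (though very minor) obstacle. Naively one wants $8 \cdot 110^{50} \leq 111^{50}$, i.e. $8 \leq (111/110)^{50}$; but a quick numerical check gives $(111/110)^{50} \approx 1.57$, which is smaller than $8$, so this particular inequality is not quite true. The smallest integer $C$ for which $8 \cdot 110^{50} \leq C^{50}$ actually holds is around $115$. I would therefore either loosen the displayed constant from $111$ to some convenient $C \geq 115$, or observe that every downstream use of this corollary in the paper only needs a bound of the qualitative shape $(O(\mu)/A^2)^{50}$, so the precise constant is immaterial. Aside from this cosmetic tightness in the constant, the derivation is entirely mechanical and the proof is complete.
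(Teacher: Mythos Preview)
Your approach is identical to the paper's: apply \Cref{lem:conc} with $k=100$, use $\mu\ge 1000$ to bound $100\mu+10000\le 110\mu$, and then absorb the factor $8$ into the base. You have in fact caught a genuine arithmetic slip that the paper's own proof commits: the paper writes
\[
8\left(\frac{110\mu}{A^2}\right)^{50}\le \left(\frac{111\mu}{A^2}\right)^{50},
\]
which would require $8\le (111/110)^{50}\approx 1.57$, and this is false, exactly as you computed. Your suggested fix (replace $111$ by any constant $\ge 115$, or simply note that only the shape $(O(\mu)/A^2)^{50}$ is ever used downstream) is correct and sufficient; every later application of the corollary in the paper has ample slack in the constants.
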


\begin{proof}
By \Cref{lem:conc},
\begin{align*}
    \Prob{|X-\mu| \ge A} &\le 8\left(\frac{100\mu+10000}{A^2}\right)^{50} \le
    8 \left(\frac{110\mu}{A^2}\right)^{50} \le
    \left(\frac{111\mu}{A^2}\right)^{50}
    \enspace.
    \qedhere
\end{align*}
\end{proof}

\subsection{The Method of Conditional Expectations}
\label{subsec:method_of_conditional_expectations}

We now describe in more detail the method of conditional expectations and its implementation in \CONGESTEDC. We briefly recall the setup to the problem: we have a randomized algorithm which ``succeeds'' if a ``bad'' outcome occurs for less than some number $T$ of nodes. This algorithm succeeds in expectation using bounded-independence randomness. We would like to derandomize this algorithm. In order to achieve this, given a family of $O(1)$-wise independent hash functions $\mathcal{H}$, we need to find a ``good'' hash function $h^* \in \mathcal{H}$.

First, we define some cost function $f : \mathcal{H} \times V \rightarrow \{0,1\}$ such that $f(h, v)=1$ if the node $v$ has a ``bad'' outcome when $h$ is the selected hash function, and $f(h, v) = 0$ if the outcome is ``good''. We further define $F(h) = \sum_{v \in V} f(h, v)$ as the total cost of the hash function $h$, i.e., the number of bad nodes when $h$ is the selected hash function. Finally, we use $\mathbf{E}_{h \in \mathcal{H}}[F(h)]$ to denote the expected value of $F(h)$ when $h$ is drawn uniformly at random from~$\mathcal{H}$.

To successfully derandomize our algorithm, we need to find a hash function $h^* \in \mathcal{H}$ such that $F(h^*) \leq T$.
We need the following conditions to hold for our derandomization to work:
\begin{itemize}
\item $\mathbf{E}_{h \in \mathcal{H}} [F(h)] \leq T$ (i.e., the expected cost of a hash function selected uniformly at random from $\mathcal{H}$ is at most $T$); and
\item Node $v$ can locally (i.e., without communication) evaluate $f(h, v)$ for all $h \in \mathcal{H}$.
\end{itemize}

We can now use the method of conditional expectations to find a $h^* \in \mathcal{H}$ for which $F(h^*) \leq T$.
We first recall that each hash function in our family of $O(1)$-wise independent hash functions $\mathcal{H}$ can be specified using $O(\log n)$ bits, by \Cref{lem:families_of_hash_functions}.
Next, let $\Pi = \{0,1\}^{\log n}$ be the set of binary strings of length $\log n$, and for each $\pi \in \Pi$, let $\mathcal{H}_\pi$ denote the hash functions in $\mathcal{H}$ whose seeds begin with the prefix $\pi$.

Our goal is to find some seed-prefix $\pi \in \Pi$ for which $\mathbf{E}_{h \in \mathcal{H}_\pi} [F(h)] \leq T$: the existence of such a prefix is guaranteed by the probabilistic method.
Since each node $v$ can locally evaluate $f(h, v)$ for all $h \in \mathcal{H}$, nodes can also compute $\mathbf{E}_{h \in \mathcal{H}_\pi} [f(h, v)]$  for all $\pi \in \Pi$. Since $|\Pi| = n$, each node $v$ can be made responsible for a prefix $\pi_v \in \Pi$. Node $v$ can then collect the value of $\mathbf{E}_{h \in \mathcal{H}_{\pi_v}} [f(h, u)]$ for each $u \in V \setminus \{v\}$: since this requires all nodes sending and receiving $O(n)$ messages it can be done in $O(1)$ rounds using Lenzen's routing algorithm \cite{Lenzen13}.
Now, by linearity of expectation:
$$
    \sum_{u \in V} \left(\mathbf{E}_{h \in \mathcal{H}_{\pi_v}} \left[f(h, u)\right]\right) = \mathbf{E}_{h \in \mathcal{H}_{\pi_v}} \left[F(h)\right]
    \enspace.
$$

Therefore $v$ can compute the expected value of $F$ for the sub-family of hash functions which are prefixed with $\pi_v$. Nodes can broadcast this expected value to all other nodes in $O(1)$ rounds, again using Lenzen's routing algorithm \cite{Lenzen13}. All nodes then know the expected value of $F$ for all $\log n$-bit prefixes and can, without communication (breaking ties in a predetermined and arbitrary way), pick the prefix with the lowest expected value of $F$. Recall that this prefix is guaranteed to have an expected value of at most $T$ by the probabilistic method.

We have now fixed the first $\log n$ bits of the prefix and obtained a smaller set $\mathcal{H}_1 \subset \mathcal{H}$ of hash functions. The same procedure (discussed above for $\mathcal{H}$) can then be applied to $\mathcal{H}_1$ to set the next $\log n$ bits of the seed, obtaining a smaller set $\mathcal{H}_2 \subset \mathcal{H}_1 \subset \mathcal{H}$ of hash functions. After repeating this procedure $O(1)$ times we will have fixed the entire seed, since we fix $\log n$ bits each time and the seeds of hash functions in $\mathcal{H}$ are $O(\log n)$ bits in length.

In our algorithm, the local cost functions $f(h,v)$ will be weighted indicator variables that node $v$ is in some way `bad' or `failed' (which will have a specific technical definition in each instance in which we use it). Then, the global cost function $F(h)$ will provide a size bound on the induced graph of bad or failed nodes. In this way, our general argument will be that we can deterministically fix hash functions that cause almost all nodes to satisfy some good properties, and the graph of nodes that do not satisfy those properties is small enough to collect to a single network node and deal with later.

\subsection{A Note on Color Space}
\label{subsec:colorspace}

We assume in the remainder of this work that the colors available to nodes are drawn from $[n^{O(1)}]$, which ensures that they can be specified within the $O(\log n)$ bits of a message. In fact, if the colors are instead drawn from $[2^{n^{O(1)}}]$, we can perform an initial $O(1)$-round deterministic color-renaming process to reduce the color space to $[O(n^{4})]$; we sketch this process as follows.

We use an $\epsilon$-almost pairwise independent family of hash functions $[2^{n^{O(1)}}] \rightarrow [O(n^{4})]$, with $\epsilon = n^{-3}$. This is a family of hash functions such that the outputs of a function chosen uniformly at random satisfy $2$-wise independence as in Definition \ref{def:independence} with additive error $\pm \epsilon$. It is known that there exists such a family of size $n^{O(1)}$ (see e.g. \cite{KJS97}, as used in \cite{CDP21b}). Under this random hash function, the probability that there exists a node with two colors in its palette hashed to the same value is at most $O(n^{-2})$. The method of conditional expectations, as above, can then be used to deterministically select a hash function such that each node's palette colors indeed map to $[O(n^{4})]$ with no collisions. 

Now, we can proceed to perform \DILC under this color space using our algorithm, and nodes can then map back to the original color space for their final output. Note that there may have been multiple colors \emph{across the entire input graph} that mapped to the same hashed value, but since we ensured that no two of these colors were present in the same node's palette, this will not cause any coloring conflicts.

It remains open whether efficient \DILC can be performed when colors are drawn from a larger space than $[2^{n^{O(1)}}]$.

\section{The \DILC Algorithm}
\label{sec:D1LC-algorithm}

The framework of our \CONGESTEDC algorithm is \textsc{Color$(G,x)$} (\Cref{alg:mainrand}), which colors graph $G$ relying on three main procedures: \textsc{ColorTrial}, \textsc{SubSample}, and \textsc{BucketColor}.

\textsc{ColorTrial} is a derandomized version of a simple and frequently used coloring procedure: all nodes nominate themselves with some constant probability, and nominated nodes then pick a color from their palette. If no neighbors choose this same color, the node is successful and takes this color permanently. For our algorithm, the goal of \textsc{ColorTrial} is to provide permanent slack for nodes $v$ whose neighbors mostly have significantly higher degree than their own. Since these neighbors have significantly higher degrees than $v$, they either have a larger palette than $v$ (in which case they likely color themselves using a color not in $v$'s palette, providing slack to $v$), or $v$ has a significantly larger palette size than degree (in which case it already has slack). 

\textsc{SubSample} is a derandomized version of sampling: nodes $v$ \emph{defer} themselves to $S$ (to be colored later) with probability $d(v)^{-0.1}$. The purpose of this is to provide temporary slack to nodes whose neighbors mostly have similar degrees to their own. We will then recursively run the whole algorithm on $S$, and we will show that after $O(1)$ recursive calls the remaining graph will be of size $O(n)$, which can be trivially colored in \CONGESTEDC in $O(1)$ rounds.

\textsc{BucketColor} is our main coloring procedure, and is designed to color all nodes for which \textsc{ColorTrial} and \textsc{SubSample} have generated sufficient slack, as well as all nodes whose neighbors mostly have lower degree than their own.

All these three algorithms begin with a randomized procedure, and use the method of conditional expectations on a family of $O(1)$-wise independent hash functions to derandomize it. Note that \emph{this derandomization is an essential part of the algorithm} even if one is only concerned with probabilistic success guarantees. This is because in low-degree graphs, we cannot obtain the necessary properties
with high probability, and some nodes will fail. The method of conditional expectations ensures that these graphs of failed nodes are of $O(n)$ size (and hence can be collected onto a single network node in $O(1)$ rounds to color sequentially).

\hide{
Using \textsc{ColorTrial}, \textsc{SubSample}, and \textsc{BucketColor}, we  present our main Algorithm \textsc{Color$(G,x)$}, in Algorithm~\ref{alg:mainrand}, ???\Anote{A nice formulation of what \textsc{Color$(G,x)$} does wrt $x$.} \Anote{We will want to define the meaning of $x$ and $C$ (some sufficiently large constant) in \textsc{Color$(G,x)$}.}
\Anote{Didn't we want to start \textsc{Color$(G,x)$} with reducing the max-degree to $O(\sqrt{n})$?} that colors a graph $G$ with $\Delta_G \leq O(\sqrt{n})$.~\footnote{Only \textsc{ColorTrial} requires $\Delta_G \leq O(\sqrt{n})$ to be executed in $O(1)$ rounds.} Note that, here parameter $x$ is an input to \textsc{Color$(G,x)$} such that $0\leq x\leq 0.9$ and it quantifies the size of the remaining graph over iteration. Without loss of generality let us assume that the size of $G$ is $\omega(n)$, as otherwise the entire graph can be collected onto a single network node in $O(1)$ rounds and the coloring can be done locally. Note that \textsc{Color$(G,x)$} first removes any node in whose degree is at most $C$ in $G$, which will be helpful to show the properties we require for \textsc{ColorTrial} and \textsc{SubSample}. Then \textsc{Color$(G,x)$} executes \textsc{ColorTrial} \textsc{SubSample} and \textsc{BucketColor} in sequence as described in Algorithm~\ref{alg:mainrand}. Then it recursively calls \textsc{Color$(G',x+0.1)$}. At the end it colors the nodes that are deferred either due to being degree at most $C$ in $G$ or being failed in \textsc{ColorTrial}. It is important to note that, the induced subgraph by all the deferred nodes is $O(n)$: hence can be gathered onto single network node in $O(1)$ rounds and can be colored locally.
}

Using \textsc{ColorTrial}, \textsc{SubSample}, and \textsc{BucketColor}, we can present our main algorithm \textsc{Color$(G,x)$} (\Cref{alg:mainrand}) to color graph $G$. The algorithm assumes that $\Delta_G \leq O(\sqrt{n})$
and it uses a parameter $x$, $0\leq x\leq 0.9$, that quantifies the size of the remaining graph over recursive calls (the algorithm starts with $x=0$ and recursively increases by $0.1$ until $x = 1$). In \Cref{sec:pf-main} (\Cref{lemma:reduce-to-sqrt}), we extend the analysis to arbitrary graphs, allowing arbitrary $\Delta_G$.

\begin{algorithm}[H]
\caption{\textsc{Color$(G,x)$}: $\Delta_G \leq O(\sqrt{n})$; $0\leq x \leq 0.9$; $C$ is a sufficiently large constant}
\label{alg:mainrand}

If $|G| = O(n)$, then collect $G$ in a single network node and solve the problem locally.
\label{step1-alg:mainrand}

Set $L_0 := \{v \in G: p_G(v)<C\}$ and $G_0 := G \setminus L_0$.
\label{step2-alg:mainrand}

$G_1, F \gets \textsc{ColorTrial$(G_0)$}$.  
\label{step3-alg:mainrand}
	
$G',G_2 \gets \textsc{Subsample$(G_1,x)$}$.  
\label{step4-alg:mainrand}
	
$\textsc{BucketColor$(G_2)$}$.
\label{step5-alg:mainrand} \textit{\small \hfill // It properly colors all nodes in $G_2$.}
	
$\textsc{Color$(G',x+0.1)$}$.  
\label{step6-alg:mainrand}
	
Collect and solve first $L_0$ and then $F$ at a single node.
\label{step7-alg:mainrand}
\end{algorithm}

Step \ref{step1-alg:mainrand} in \textsc{Color$(G,x)$} uses the fact that if $G$ is of size $O(n)$, then in \CONGESTEDC, the entire graph can be collected onto a single network node in $O(1)$ rounds and the coloring can be done locally. In the same way, since $L_0$ consists of vertices of constant degree, we can color them in step \ref{step7-alg:mainrand} in $O(1)$ rounds. Similarly, we will argue that the graph $F$ (of failed nodes in \textsc{ColorTrial}) is of size $O(n)$, and hence it can be colored in step \ref{step7-alg:mainrand} in $O(1)$ rounds. The central part of our analysis will be to show that after a constant number of recursive calls the algorithm terminates with a correct solution to \DILC of $G$.

To prove the correctness of our algorithm, we show the following properties of \textsc{Color}$(G,x)$:
\begin{enumerate}
\item \textsc{ColorTrial}, \textsc{SubSample}, and \textsc{BucketColor} run deterministically in $O(1)$ rounds.
\item The size of $F$ is $O(n)$.
\item Each node in $G_2$ has sufficient slack to be colored by \textsc{BucketColor}. For each node $v$ of $G_2$,  either $p_{G_2}(v) \ge  d_{G_2}(v)+\frac 14 d_{G_2}(v)^{0.9} $, or $|N_{G_2}^-(v)|\ge \frac 13d_{G_2}(v)$.
\item The size of the (remaining) graph reduces over recursive calls in the following sense:
    \begin{align}
    \label{ineq:property-of-G'}
        \sum_{v\in G'} d_{G_1}(v)^{x+0.1}
            &\le
        C n + 2 \sum_{v\in G_1} d_{G_1}(v)^{x}
        \enspace.
    \end{align}
Note that when $x = 0.9$, the fact that $G'$ is a subgraph of $G_1$ implies that the number of edges in $G'$ is bounded by expression (\ref{ineq:property-of-G'}).
 In particular, we show that the total size of the remaining graph is $O(n)$ after $10$ recursive calls.
\end{enumerate}

In \Cref{sec:col-samp}, we describe the procedures \textsc{ColorTrial} and \textsc{SubSample}. Also, we prove the desired properties of $F$, $G_2$, and $G'$ in \Cref{sec:col-samp}. In \Cref{sec:buck}, we present the procedure \textsc{BucketColor}, which properly colors all nodes in $G_2$. 
 Finally, we prove our main theorem (\Cref{thm:D1LCcolor}) in \Cref{sec:pf-main}.

For simplicity of the presentation, in the pseudocode of our algorithms in the following sections, we will only present the randomized bases of each procedure. In each case, the full deterministic procedure comes from applying the method of conditional expectations to the randomized bases, with some specific cost function we will make clear in the analysis.

\section{\textsc{ColorTrial} and \textsc{SubSample}}
\label{sec:col-samp}

We  describe procedure \textsc{ColorTrial} and \textsc{SubSample} in \Cref{sec:trial} and \Cref{sec:samp}, respectively, along with some of their crucial useful properties. In particular, we show that $|F|=O(n)$ in \Cref{lem:sizeF} of \Cref{sec:trial} and show that graph $G'$ has the desired property  in \Cref{lem:G'size} of \Cref{sec:samp}. Finally, we give a lemma capturing the desired property of graph $G_2$ (which is the input to \textsc{BucketColor}), and we prove it in \Cref{ssec:G-2}.
\subsection{\textsc{ColorTrial}}
\label{sec:trial}

We first note that, because nodes with palette size less than $C$ are removed immediately prior to \textsc{ColorTrial}($G_0$) in \textsc{Color}($G,x$), we may assume that all nodes $v$ in $G_0$ have $p_{G_0}(v)\ge C$. The randomized procedure on which \textsc{ColorTrial} is based is Algorithm~\ref{alg:colortrialrand}.  To derandomize \textsc{ColorTrial}, we replace each of the random choices of lines 1 and 3 (which we will call the nomination step and the coloring step respectively) with choices determined by a hash function from a $100$-wise independent family $[n^{O(1)}]\rightarrow [n^{O(1)}]$. \textsc{ColorTrial} has two major steps: nomination step (line 1) and coloring step (line 3). The coloring of a node is deferred if it \emph{fails} either in the nomination step or in the coloring step. Note that, the notions of failure are different in the nomination step and the coloring step, and we will define  the notions in \Cref{def:fail-nom} and \Cref{def:fail-col} respectively.

\begin{algorithm}[H]
\caption{\textsc{ColorTrial$(G_0)$} - Randomized Basis}
\label{alg:colortrialrand}
Each node $v$ in $G_0$ independently self-nominates with probability $\frac 14$.\label{step:nom}

Each node $v$ decides if it is successful or failed in the nomination step according to \Cref{def:fail-nom}.


For each node $v$ (that is successful in the nomination step):
{\begin{itemize}
\item $v$ chooses a random palette color $c(v)\in \Psi_{G_0}(v)$; \label{step:col}
\item $v$ colors itself with color $c(v)$ if no neighbor $u$ of $v$ chose $c(u) = c(v)$;
\item $v$ decides if it is successful or failed in the coloring step according to \Cref{def:fail-col}.  \hfill 
\end{itemize}}
	
Return
{\begin{itemize}
\item $G_1$, the induced graph of remaining (non-failed) uncolored nodes, with updated palettes,
\item $F$, the induced graph of failed nodes (either in the nomination step or in the coloring step), with updated palettes.
\end{itemize}}
\end{algorithm}

First, we define some notions (in \Cref{def:inter}) that will be useful to define failed nodes in the nomination step and the coloring step of \textsc{ColorTrial}.

\begin{definition}\label{def:inter}
Recall that $N^{\gtrsim}(v) \subseteq N_{G_0}(v)$ is defined as the subset of neighbors $u$ of $v$ that have $d_{G_0}(u) \ge 3 d_{G_0}(v)$.  $\mathrm{Nom}_v\subseteq N_{G_0}(v)$ is defined as the subset of $v$'s neighbors that self-nominate, and $\mathrm{Nom}^{\gtrsim}_v  := \mathrm{Nom}_v\cap N^{\gtrsim}(v)$.
\end{definition}

Next, we define the notion of failed nodes in the nomination step.
%
\begin{definition} \label{def:fail-nom}
A node $v$  in $G_0$ is \textbf{successful} during the nomination step of \textsc{ColorTrial} if both of the following hold:
\begin{itemize}
\item Few nominated neighbors: $|\mathrm{Nom}_v| \le \frac 14 d_{G_0}(v) + p_{G_0}(v)^{0.7}$;
\item Sufficiently many nominated neighbors  with higher degrees: $|\mathrm{Nom}^{\gtrsim}_v| \ge \frac 14|N^{\gtrsim}(v)| - p_{G_0}(v)^{0.7}$.
\end{itemize}
Node $v$ \textbf{fails} in the nomination step if neither of the above conditions holds.
\end{definition}

In \Cref{lem:nominations}, we show that if we choose a hash function uniformly at random from a $100$-wise independent hash family, the subgraph induced by the failed nodes in the nomination step is of size $O(n)$ in expectation. Then, in \Cref{lem:MCE}, we derandomize this selection of a hash function using the method of conditional expectations.
\begin{lemma}
\label{lem:nominations}
When nomination choices of \textsc{ColorTrial} are determined by a random hash function from a $100$-wise independent hash family $[n^{O(1)}]\rightarrow [n^{O(1)}]$, the probability that any fixed node $v$ in $G_0$ fails in the nomination step of \textsc{ColorTrial} is at most $1/p_{G_0}(v)$. 
\end{lemma}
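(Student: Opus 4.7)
The plan is to apply the 100-wise concentration inequality of \Cref{cor:conc} to the two counting variables $|\mathrm{Nom}_v|$ and $|\mathrm{Nom}_v^*|$ separately, then union bound. Since the hash family is $101$-wise independent, the nomination indicators $\{\mathbbm{1}[u \in \mathrm{Nom}_v]\}_{u \in N_{G_0}(v)}$ are in particular $100$-wise independent, each with expectation $\frac{1}{4}$, so \Cref{cor:conc} is applicable with $\mu_1 = \tfrac{1}{4}d_{G_0}(v)$ and $\mu_2 = \tfrac{1}{4}|N^*(v)|$ respectively.

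The only care needed is that \Cref{cor:conc} requires $\mu \ge 1000$. So for each of the two failure conditions I would split into two cases according to the size of the relevant degree quantity. For the upper tail condition $|\mathrm{Nom}_v| \le \tfrac{1}{4}d_{G_0}(v) + p_{G_0}(v)^{0.7}$: if $d_{G_0}(v) \le \tfrac{4}{3}p_{G_0}(v)^{0.7}$, then since $|\mathrm{Nom}_v|\le d_{G_0}(v)$ the bound $|\mathrm{Nom}_v| \le \tfrac14 d_{G_0}(v) + \tfrac34 d_{G_0}(v) \le \tfrac14 d_{G_0}(v) + p_{G_0}(v)^{0.7}$ holds deterministically. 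Otherwise $d_{G_0}(v) > \tfrac{4}{3} p_{G_0}(v)^{0.7} \ge \tfrac{4}{3}C^{0.7} \ge 4000$ (taking $C$ large), so $\mu_1 \ge 1000$ and I can apply \Cref{cor:conc} with deviation $A = p_{G_0}(v)^{0.7}$ to get
\begin{equation*}
\Pr\!\left[|\mathrm{Nom}_v| - \tfrac14 d_{G_0}(v) \ge p_{G_0}(v)^{0.7}\right]
\;\le\; \left(\frac{111 \cdot \tfrac14 d_{G_0}(v)}{p_{G_0}(v)^{1.4}}\right)^{\!50}
\;\le\; \left(\frac{28}{\,p_{G_0}(v)^{0.4}\,}\right)^{\!50},
\end{equation*}
using $d_{G_0}(v) < p_{G_0}(v)$. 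An identical case split handles the lower-tail condition: if $|N^*(v)| \le 4p_{G_0}(v)^{0.7}$, then $\tfrac14|N^*(v)| - p_{G_0}(v)^{0.7} \le 0 \le |\mathrm{Nom}_v^*|$ deterministically; otherwise $\mu_2 \ge p_{G_0}(v)^{0.7} \ge C^{0.7} \ge 1000$ and \Cref{cor:conc} again yields a bound of the form $(28/p_{G_0}(v)^{0.4})^{50}$.

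Finally, union-bounding over the two conditions and choosing $C$ sufficiently large so that $2(28/p_{G_0}(v)^{0.4})^{50} \le 1/p_{G_0}(v)$ for all $p_{G_0}(v)\ge C$ (note $(28)^{50}/p_{G_0}(v)^{20 - 1} \to 0$ as $C\to\infty$) gives the claimed bound. The argument has no real obstacle beyond bookkeeping; the delicate point is simply recognizing that in the low-degree regime the failure conditions are satisfied deterministically, which is exactly the regime where Chernoff/concentration would otherwise fail to give anything useful.
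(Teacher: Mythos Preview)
Your proposal is correct and follows essentially the same approach as the paper: bound each of the two failure events separately via \Cref{cor:conc} applied to the $100$-wise independent nomination indicators, handle the small-degree regime (where $\mu<1000$) by a trivial deterministic argument, and union-bound. Your case-split thresholds ($d_{G_0}(v)\le \tfrac43 p_{G_0}(v)^{0.7}$ and $|N^*(v)|\le 4p_{G_0}(v)^{0.7}$) are slightly sharper than the paper's ``sufficiently large constant'' cutoffs, but the logic and the resulting bound $(28\,p_{G_0}(v)^{-0.4})^{50}$ per event are identical.
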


\begin{proof}
Recall \Cref{def:fail-nom} and note that a node $v$ can fail in the nomination step due to two reasons. We consider the two cases of a node being failed separately.

\paragraph{Event 1: $|\mathrm{Nom}_v| > \frac 14 d_{G_0}(v) + p_{G_0}(v)^{0.7}$.}
For $u \in N_{G_0}(v)$, let $X_u$ denote the indicator random variable for the event that $u$ self-nominates. So, $\mathrm{Nom}_v=\sum_{u \in N_{G_0}(v)}X_u$.

Note that $\Pr [X_u=1]=1/4$ and the expected value of $\size{\mathrm{Nom}_v}$ is $\mu=\frac14 d_{G_0}(v)$. We may assume $\mu$ is at least $1000$, since $p_{G_0}(v)\ge C$ for sufficiently large $C$, and therefore if $d_{G_0}(v)$ is not at least a sufficiently large constant, we already have $|\mathrm{Nom}_v|\le d_{G_0}(v) \le \frac14 d_{G_0}(v) + p_{G_0}(v)^{0.7}$. Also, observe that the set of random variables $\{X_u: u \in N_{G_0}(v)\}$ are $100$-wise independent when the randomness for the procedure is provided by a random hash function from a $100$-wise independent hash family.

So, applying \Cref{cor:conc} (with $X=|\mathrm{Nom}_v|$, $\mu=\frac14 d_{G_0}(v)$, and $A= p_{G_0}(v)^{0.7}$), we deduce the following:
\begin{align*}
	\Prob{|\mathrm{Nom}_v| > \tfrac 14 d_{G_0}(v) + p_{G_0}(v)^{0.7}}
        &\le
    \left(\frac{28 d_{G_0}(v)}{p_{G_0}(v)^{1.4}}\right)^{50}
        \le 
    \left(28 p_{G_0}(v)^{-0.4}\right)^{50}
        < 
    \tfrac12  p_{G_0}(v)^{-1}
    \enspace.
\end{align*}

\paragraph{Event 2: $|\mathrm{Nom}^{\gtrsim}_v| < \frac 14|N^{\gtrsim}(v)| - p_{G_0}(v)^{0.7}$.}

The proof is similar to that of Event 1. Note that the expected value of $|\mathrm{Nom}^{\gtrsim}_v| $ is $\frac 14 N^{\gtrsim}(v)$. We may again assume $\mu$ is at least $1000$, since $p_{G_0}(v)\ge C$ for sufficiently large $C$, and therefore if $N^{\gtrsim}(v)$ is not sufficiently large constant, we trivially have $|\mathrm{Nom}^{\gtrsim}_v|\ge 0 > \frac14 N^{\gtrsim}(v) - p_{G_0}(v)^{0.7} $.

So, applying \Cref{cor:conc} (with $X=|\mathrm{Nom}^{\gtrsim}_v|$,  $\mu=\frac14 |N^{\gtrsim}(v)|$, and $A= p_{G_0}(v)^{0.7}$), we deduce the following:
\begin{align*}
    \Prob{|\mathrm{Nom}^{\gtrsim}_v| < \tfrac14 |N^{\gtrsim}(v)| - p_{G_0}(v)^{0.7}}
        &\le
    \left(\frac{28 |N^{\gtrsim}(v)|}{p_{G_0}(v)^{1.4}}\right)^{50}
        \le
    \left(28 p_{G_0}(v)^{-0.4}\right)^{50}
        <
    \tfrac12  p_{G_0}(v)^{-1}
    \enspace.
\end{align*}
So, the total probability that $v$ fails in the nomination step  is at most $\frac12  p_{G_0}(v)^{-1}+\frac12  p_{G_0}(v)^{-1}= ({p_{G_0}(v)})^{-1}$.
\end{proof}

\begin{lemma}\label{lem:MCE}
We can deterministically choose a hash function in $O(1)$ rounds, from a $100$-wise independent family $[n^{O(1)}]\rightarrow [n^{O(1)}]$, to run the nomination step of \textsc{ColorTrial} such that the size of the subgraph induced by the failed nodes (in the nomination step) is~$O(n)$.
\end{lemma}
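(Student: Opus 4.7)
The plan is to apply the method of conditional expectations from \Cref{subsec:method_of_conditional_expectations}, with a cost function chosen so that keeping it small forces the induced subgraph on failed nodes to have $O(n)$ edges. The central design choice is to weight each failure by the failing node's degree, which is what lets \Cref{lem:nominations} directly yield an edge bound rather than just a node bound.

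Concretely, I would define the per-node cost $f(h,v) := d_{G_0}(v) \cdot \mathbf{1}[v \text{ fails in the nomination step under } h]$ and the total cost $F(h) := \sum_{v \in G_0} f(h,v)$. Each edge of the subgraph induced by failed nodes is counted in $F(h)$ at least once (by each of its endpoints), so $|F| \leq F(h)$; hence it suffices to find an $h^\ast \in \mathcal{H}$ with $F(h^\ast) = O(n)$. To bound the expectation, I would invoke \Cref{lem:nominations} to obtain $\Pr[v \text{ fails}] \leq 1/p_{G_0}(v)$ under a uniformly random hash function from the $101$-wise independent family, and combine with the \DILC invariant $p_{G_0}(v) \geq d_{G_0}(v)+1$ to get $\mathbb{E}[f(h,v)] \leq d_{G_0}(v)/p_{G_0}(v) < 1$; linearity then gives $\mathbb{E}_{h \sim \mathcal{H}}[F(h)] \leq n$, so such an $h^\ast$ exists by the probabilistic method.

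The remaining task is to execute the prefix-by-prefix search from \Cref{subsec:method_of_conditional_expectations} in $O(1)$ \CONGESTEDC rounds. The two prerequisites of that scheme are the expectation bound (just shown) and the ability of each node to evaluate its own cost locally for any given seed. For the latter, after a single preprocessing step in which every graph node broadcasts its degree via Lenzen's routing, $v$ knows $d_{G_0}(u)$ for each $u \in N_{G_0}(v)$ and therefore can determine $N^*(v)$, $\mathrm{Nom}_v$, and $\mathrm{Nom}^*_v$ directly from $h(u)$ for $u \in N_{G_0}(v) \cup \{v\}$; this pins down $f(h,v)$ without further communication. The scheme of \Cref{subsec:method_of_conditional_expectations} is phrased for $\{0,1\}$-valued costs but extends unchanged to any non-negative bounded cost, so iterating the prefix-selection step the $O(1)$ times needed to fix the $O(\log n)$-bit seed yields $h^\ast$ with $F(h^\ast) \leq n$, and hence $|F| = O(n)$.

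The main subtlety to flag is the cost-function choice: the naive indicator $\mathbf{1}[v \text{ fails}]$ would bound only the number of failed \emph{nodes}, whereas the rest of the algorithm needs a bound on the number of \emph{edges} in the induced subgraph $F$. Multiplying by $d_{G_0}(v)$ is exactly what couples the $1/p_{G_0}(v)$ tail from \Cref{lem:nominations} with the palette condition $p_{G_0}(v) \geq d_{G_0}(v)+1$ to collapse to the desired $O(n)$ bound — this telescoping is the whole point of proving the failure probability in terms of $p_{G_0}(v)$ rather than a uniform quantity in \Cref{lem:nominations}.
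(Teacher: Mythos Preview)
Your proposal is correct and follows essentially the same approach as the paper: define the degree-weighted failure indicator $I_v = d_{G_0}(v)\cdot \mathbf{1}_v$, use \Cref{lem:nominations} together with $p_{G_0}(v)\ge d_{G_0}(v)+1$ to bound $\mathbb{E}[\sum_v I_v]\le n$, and then invoke the method of conditional expectations from \Cref{subsec:method_of_conditional_expectations}, noting that each node can evaluate its own contribution from $1$-hop information. Your additional remarks (the need to learn neighbors' degrees, and that the derandomization scheme extends verbatim to non-$\{0,1\}$ bounded costs) are valid elaborations rather than departures from the paper's argument.
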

\begin{proof}
For each vertex $v \in G_0$, define a random variable
\[
I_v =
\begin{cases}
d_{G_0}(v), & \text{if $v$ is a failed node in the nomination step},\\
0, & \text{otherwise.}
\end{cases}
\]

Our local cost function $f(h,v)$ for the method of conditional expectations takes the value of $I_v$ when coloring choices are specified by $h$, and the global cost function $F(h) = \sum_{v \in G_0}f(h,v)$. Observe that the size of the required subgraph (induced by the failed nodes in the nomination step) is bounded by $F(h) = \sum_{v \in G_0}I_v$. By \Cref{lem:nominations} and linearity of expectation,  the expected size of the required subgraph is at most $\sum_{\text{$v\in G_0$}}d_{G_0}(v) \cdot 1/p_{G_0}(v) \le n$. Note that $\sum_{v \in G_0}I_v$ is  the aggregate of $I_v$'s. Each node $v$ can compute $I_v$ by checking the conditions mentioned in \Cref{def:fail-nom} if the $1$-hop neighborhood of a node $v$ is known, which is the case in \CONGESTEDC. The method of conditional expectations applied to $F(h)$ (outlined in \Cref{subsec:method_of_conditional_expectations}) implies that we can find a hash function deterministically so that the size of the subgraph induced by the failed nodes (in the nomination step) is at most $n$.
\end{proof}

	Besides the nomination step, a node can also fail in the coloring step  of \textsc{ColorTrial}. Now we formally define what it means for a node to fail in the coloring step.
	
	\begin{definition} \label{def:fail-col}
		A node $v$ in $G_0$ is \textbf{successful} during the coloring step of \textsc{ColorTrial} if at least one of the following hold:
		\begin{enumerate}
			\item $p_{G_0}(v) \ge 1.1 d_{G_0}(v)$;
			\item $\size{N^{\gtrsim}(v)} < \frac 13 d_{G_0}(v)$;
			\item at least $0.03d_{G_0}(v)$ of $v$'s neighbors failed in the nomination step; 
			\item at least $0.01p_{G_0}(v)$ of $v$'s neighbors successfully color themselves a color not in $v$'s palette.
		\end{enumerate}
		Node $v$ is classed as \textbf{failed} in the coloring step if none of the above four properties hold.
	\end{definition}
	
	Notice that the first three properties are already determined by the nomination step. Here, we handle the fourth property in our analysis. Similar to our analysis for the nomination step, in \Cref{lem:colorstep}, we show that choosing a hash function uniformly at random from a $100$-wise independent hash family to make decisions in the coloring step yields a subgraph of failed nodes of size $O(n)$ in expectation. Then we show that one can derandomize this selection of a hash function using the method of conditional expectations, achieving \Cref{lem:MCE2}.

Consider a node $v$ that satisfies none of the first three properties of \Cref{def:fail-col}, i.e., $p_{G_0}(v) < 1.1 d_{G_0}(v)$, $|N^{\gtrsim}(v)| \ge \frac 13 d_{G_0}(v)$, and fewer than $0.03d_{G_0}(v)$ of $v$'s neighbors failed in the nomination step.

Let $P_v$ be the set of nodes in $\mathrm{Nom}^{\gtrsim}_v$ that succeeded in the nomination step. 	

\begin{claim}\label{cl:cru}
	$\size{P_v}> 0.04 d_{G_0}(v)$.
\end{claim}
\begin{proof}
	Firstly, we have $|\mathrm{Nom}^{\gtrsim}_v|\ge \frac 14|N^{\gtrsim}(v)| - p_{G_0}(v)^{0.7}$ since $v$ succeeded in the nomination step. Therefore, by the assumption that it did not satisfy properties 1-3 of \Cref{def:fail-col},
\begin{align*}
	|\mathrm{Nom}^{\gtrsim}_v| &\ge
	\tfrac{1}{12}d_{G_0}(v) - p_{G_0}(v)^{0.7} >
	\tfrac{1}{12\cdot 1.1} p_{G_0}(v) - p_{G_0}(v)^{0.7}
	\enspace.
\end{align*}
Since we may further assume that $p_{G_0}(v)\ge C$ for sufficiently large constant $C$ (or $v$ would have been moved to $L_0$),
\begin{align*}
	|\mathrm{Nom}^{\gtrsim}_v| &> 0.07 p_{G_0}(v)
	\enspace.
\end{align*}

As the number of $v$'s neighbors failed in the nomination step is at most $0.03d_{G_0}(v)$,\hide{\gopi{May be this is ok? There is a discrepancy between this value of $0.03$ and the value $0.014$ in the third point in  definition 4.5?} Thanks - the $0.03$ value is the correct one.} there are more than $0.04p_{G_0}(v)$ nodes in $\mathrm{Nom}^{\gtrsim}_v$ that succeeded in the nomination step, i.e., $|P_v|> 0.04p_{G_0}(v)$. \end{proof}
	
Consider an arbitrary subset $S$ of $d_{G_0}(v)^{0.4}$ nodes in $P_v$. Let $Q$ denote the set of nodes of $S$ that choose the same color as another node in $S$, and let $q:=|Q|$. We bound the size of $Q$ as follows:

\begin{claim}\label{claim:100}
When color choices of \textsc{ColorTrial} are determined by a random hash function from
a $100$-wise independent hash family $[n^{O(1)}] \rightarrow [n^{O(1)}]$, $\Exp{q^{50}}< 1$.
\end{claim}

\begin{proof}
	When $q$ nodes share a color with another node in $S$, then we must be able to find some acyclic set of at least $q/2$ monochromatic edges within $S$ (by taking a spanning tree of each connected group of monochromatic nodes). For values $q>100$, then from this set, consider taking a further subset of $50$ edges. There are $\binom{q/2}{50}\ge (\frac{q}{100})^{50}$ choices of such a subset, and therefore there are at least $(\frac{q}{100})^{50}$ different acyclic sets of $50$ monochromatic edges.
	
	We now bound the expected number of acyclic subsets of $50$ monochromatic edges within $S$. There are fewer than $\binom{|E(S)|}{50} < \binom{d_{G_0}(v)^{0.8}}{50} \le d_{G_0}(v)^{40}$ subsets of $50$ edges in $S$. If we fix some particular acyclic set of $50$ edges, the probability that all those edges are indeed monochromatic is at most $(3d_{G_0}(v))^{-50}$, since each $u\in P_v\subseteq \mathrm{Nom}^{\gtrsim}_v$ has palette size greater than $3d_{G_0}(v)$, and this probability holds under  $100$-wise independence since it depends only on the colors of at most $100$ nodes.
	
	So, the expected number of acyclic sets of $50$ monochromatic edges is at most $ d_{G_0}(v)^{40} \cdot (3d_{G_0}(v))^{-{50}}<d_{G_0}(v)^{-10}$. The probability that there are indeed the $(\frac{q}{100})^{50}$ such subsets required is at most $\frac{d_{G_0}(v)^{-10}}{(\frac{q}{100})^{50}}$ by Markov's inequality.
	
	For values $q\le 100$, we instead bound the probability that the set occurs directly. There are fewer than $\binom{|E(S)|}{q/2} < \binom{d_{G_0}(v)^{0.8}}{q/2} \le d_{G_0}(v)^{0.4q}$ acyclic subsets of $q/2$ edges in $S$, and the probability that each has entirely monochromatic edges is at most $(3d_{G_0}(v))^{-q/2}$. So, the expected number of acyclic sets of $q/2$ monochromatic edges is at most $ d_{G_0}(v)^{0.4q} \cdot (3d_{G_0}(v))^{-{q/2}}<d_{G_0}(v)^{-0.1q}$, and the probability that any such set exists is at most $d_{G_0}(v)^{-0.1q}$ by Markov's inequality.
	
	Then, 
	\begin{align*}
		\Exp{q^{50}} &= \sum_{x=2}^{100} x^{50}\Prob{q=x}+\sum_{x=101}^{|S|} x^{50}\Prob{q=x}\\
		&\le \sum_{x=2}^{100} x^{50}d_{G_0}(v)^{-0.1x}+\sum_{x=101}^{d_{G_0}(v)^{0.4}} x^{50}\frac{d_{G_0}(v)^{-10}}{(\frac{x}{100})^{50}}\\
		&\le d_{G_0}(v)^{-0.1}+100^{50}d_{G_0}(v)^{-9}\\
		&<\enspace 1.
	\end{align*}
	
\end{proof}

For each color $ c\in \Psi_{G_0}(u)$, define:

\[w_u(c) = \sum\limits_{w\in Nom_u\setminus S: c\in \Psi_{G_0}(w)}p_{G_0}(w)^{-1}\enspace,\] i.e. the expected number of (nominated) neighbors of $u$ outside of $S$ that will pick color $c$. We aim to minimize this quantity, and first show an upper bound on its expectation:

\begin{claim}\label{claim:avg}
When color choices of \textsc{ColorTrial} are determined by a random hash function from
a $100$-wise independent hash family $[n^{O(1)}] \rightarrow [n^{O(1)}]$, $\Exp{\sum_{u\in S} w_u(c(u))} \le 0.26d_{G_0}(v)^{0.4}$.
\end{claim}

\begin{proof}
For each $u\in S$, $\Exp{w_u(c(u))} = p_{G_0}(u)^{-1}\sum\limits_{c\in \Psi_{G_0}(u)} w_u(c)$. Each nominated neighbor $w\in \mathrm{Nom}_u$ contributes at most $1$ to the total sum, and since $w$ succeeded in the nomination stage, by Definition \ref{def:fail-nom} it has $|\mathrm{Nom}_u| \le \frac 14 d_{G_0}(u) + p_{G_0}(u)^{0.7}$. So, $\Exp{\sum_{u\in S} w_u(c(u))}\le p_{G_0}(u)^{-1} \cdot (\frac 14 d_{G_0}(u) + p_{G_0}(u)^{0.7})< \frac14 + p_{G_0}(u)^{-0.3} \le 0.26 $ (since $u\in N^{\gtrsim}(v)$, we may assume $p_{G_0}(u)\geq p_{G_0}(v) \geq C$ for sufficiently large constant $C$, as otherwise $v$ would have been moved to $L_0$). So,

$\Exp{\sum_{u\in S} w_u(c(u))}\le 0.26 |S| = 0.26d_{G_0}(v)^{0.4}$.
\end{proof}

We would like to show that the number of monochromatic edges from nodes in $S$ concentrates around its expectation. However, this is not quite the case: it is possible for some nodes to choose colors with very high $w_u(c(u))$ values, and then their number of adjacent monochromatic edges can be large. However, this is not a problem overall, because there will be few such nodes. So, we will first define a set $S'$ of nodes with low $w_u(c(u))$: let $S' :=\{u\in S:w_u(c(u))\le d_{G_0}(v)^{0.1}\}$. $S'$ depends on the random color choices, but we can show that it contains most nodes in $S$.

\begin{claim}\label{claim:S'}
When color choices of \textsc{ColorTrial} are determined by a random hash function from a $100$-wise independent hash family $[n^{O(1)}] \rightarrow [n^{O(1)}]$, with probability at least $1-d_{G_0}(v)^{-5}$, $|S'|\ge |S|-0.27d_{G_0}(v)^{0.35}$.

\end{claim}

\begin{proof}

	The probability that $u\in S'$, i.e. that $u$ chooses a color $c(u)$ with $w_u(c(u)) \ge d_{G_0}(v)^{0.1}$, is at most $0.26 d_{G_0}(v)^{-0.1}$ by Markov's inequality. Denote this event $H_u$. We bound the probability that $H_u$ occurs for more than $0.27d_{G_0}(v)^{0.35}$ nodes $u \in S$.
	
	Consider an arbitrary subset $A$ of $100$ nodes $u\in S$. There are $\binom{|S|}{100}$ such sets. The probability that $H_u$ occurs for all nodes in $A$ is at most $(0.26 d_{G_0}(v)^{-0.1})^{100}  $, even under only $100$-wise independence. So, the expected number of such subsets $A$ with $H_u$ occurring for all $u\in A$ is at most $\binom{|S|}{100}\cdot (0.26 d_{G_0}(v)^{-0.1})^{100}$. By Markov's inequality, with probability at least $1-d_{G_0}(v)^{-5}$, there are at most $\binom{|S|}{100}\cdot (0.26 d_{G_0}(v)^{-0.05})^{100}$ such sets $A$. In this instance, denoting $a=|\{u\in S:H_u\}|$, we have $\binom{a}{100}/\binom{|S|}{100} \le(0.26 d_{G_0}(v)^{-0.05})^{100}$. But,
	
	\[\frac{\binom{a}{100}}{\binom{|S|}{100}} = \frac{a!(|S|-100)!}{|S|!(a-100)! } = \frac{a}{|S|}\cdot \frac{a-1}{|S|-1}\cdot \dots \cdot  \frac{a-100}{|S|-100 }> \left(\frac{a-100}{|S|-100 }\right)^{100}\enspace.\]
	
	So, 
	
	\[a < 0.26 d_{G_0}(v)^{-0.05}(|S|-100 ) +100 \le 0.26 d{G_0}(v)^{0.35} +100 \le 0.27d_{G_0}(v)^{0.35},\]
	
	again using that $d_{G_0}(v)$ is at least sufficiently large constant $C$.

\end{proof}

Now, we show that the sum of $w_u(c(u))$ values in $S'$ concentrates around its expectation. To do so, consider the following variables: \[X_u = \begin{cases}
	w_u(c(u)) \cdot d_{G_0}(v)^{-0.1}  &\text{if $w_u(c(u))\le d_{G_0}(v)^{0.1}$ (i.e. if $u\in S'$)}\\
	0&\text{otherwise}
\end{cases}\enspace.\] Note that $X_u \in [0,1]$. Let $X:= \sum_{u\in S}X_u$, let $\mu_u := \Exp{X_u}$, and let $\mu_X:= \Exp{X}$.

\begin{claim}\label{claim:S'2}
When color choices of \textsc{ColorTrial} are determined by a random hash function from
a $100$-wise independent hash family $[n^{O(1)}] \rightarrow [n^{O(1)}]$, $\Exp{(X-\mu)^{100}}\le 10^{80} d_{G_0}(v)^{20}$
\end{claim}

\begin{proof}
By linearity of expectation,

\begin{align*}
\Exp{(X-\mu_X)^{100}}& = \Exp{\left(\sum_{u\in S} X_u - \mu_u \right)^{100}}\\
&= \Exp{\sum_{u_1, u_2, \dots, u_{100}\in S} (X_{u_1} - \mu_{u_1})  (X_{u_2} - \mu_{u_2}) \dots  (X_{u_{100}} - \mu_{u_{100}})}\\
&= \sum_{u_1, u_2, \dots, u_{100}\in S}\Exp{ (X_{u_1} - \mu_{u_1}) (X_{u_2} - \mu_{u_2}) \dots (X_{u_{100}} - \mu_{u_{100}})}\enspace.
\end{align*}

Notice that $\Exp{(X-\mu_X)^{100}}$ is the sum of expectations of terms involving up to $100$ nodes (some sets contain duplicates). Each of these sets of up to $100$ nodes behave independently under a hash function from a $100$-wise independent family. Therefore, the value of $\Exp{(X-\mu)^{100}}$ is the same as under \emph{fully independent} node color choices, and we may bound $\Exp{(X-\mu_X)^{100}}$ assuming full independence (this is a standard argument used in bounded-independent concentration bounds such as that of \cite{BR94}). By Lemma A.1 of \cite{BR94}, therefore,

\begin{align*}
\Exp{(X-\mu_X)^{100}} &\le 2e^\frac{1}{600}\sqrt{100\pi}\left(\frac{100|S|}{e}\right)^{50}
<10^{80} d_{G_0}(v)^{20}\enspace.
\end{align*}

\end{proof}

Our next step is to try to bound the number of monochromatic edges from $s'$ to outside $S$. Denote by $\mathcal E$ the set of edges $\{u,w\}$ with $u\in S$ and $w\in Nom_u \setminus S$. For such an edge $\{u,w\}$, let the variable $Y_{\{u,w\}}$ be defined as follows:
\[Y_{\{u,w\}} = \begin{cases}
1  &\text{if $c(u)=c(w)$ and $w_u(c(u))\le d_{G_0}(v)^{0.1}$ (i.e. if $u\in S'$)}\\
0&\text{otherwise}
\end{cases}\enspace.\]

Let $Y:= \sum_{e \in \mathcal E}Y_e$, let $\mu_e = \Exp{Y_e}$, and let $\mu_Y = \Exp{Y}$. $Y$ is then the number of edges from $S'$ to outside of $S$ over which there are coloring conflicts, and we will need an upper bound for this quantity. First, we give an upper bound for its expectation: 

\begin{claim}\label{claim:muY}
When color choices of \textsc{ColorTrial} are determined by a random hash function from
a $100$-wise independent hash family $[n^{O(1)}] \rightarrow [n^{O(1)}]$, $\mu_Y \le 0.26 d_{G_0}(v)^{0.4}$.
\end{claim}

\begin{proof}
Notice that for any node $u\in S$, we have $u\in P_v$, and so $u$ must have been successful in the nomination step, and therefore has $|Nom_u| \le \frac 14 d_{G_0}(u)+ p_{G_0}(u)^{0.7} \le 0.26p_{G_0}(u)$ (we may assume $p_{G_0}(u)$ is at least a sufficiently large constant $C$ since $p_{G_0}(u)\ge d_{G_0}(u)\ge d_{G_0}(v)$). Then,
	
\begin{align*}
	\mu_Y &= \sum_{u\in S}\sum_{w\in Nom_u \setminus S} \Prob{\text{$c(u)=c(w)$ and $w_u(c(u))\le d_{G_0}(v)^{0.1}$}}\\
	&\le \sum_{u\in S}\sum_{v\in Nom_u } \Prob{\text{$c(u)=c(w)$}}\\
	&\le \sum_{u\in S}\sum_{v\in Nom_u } p_{G_0}(u)^{-1}\\
	&\le \sum_{u\in S}0.26p_{G_0}(u) \cdot p_{G_0}(u)^{-1}\\
	&=0.26d_{G_0}(v)^{0.4}\enspace.
\end{align*}
\end{proof}

Next, we show that $Y$ is concentrated around its expectation:

\begin{claim}\label{claim:Yconc}
When color choices of \textsc{ColorTrial} are determined by a random hash function from
a $100$-wise independent hash family $[n^{O(1)}] \rightarrow [n^{O(1)}]$,	$\Exp{(Y-\mu_Y)^{50}}\le d_{G_0}(v)^{16}$.
\end{claim}

\begin{proof}
By linearity of expectation,

\begin{align*}
\Exp{(Y-\mu_Y)^{50}}& = \Exp{\left(\sum_{e\in \mathcal E} Y_e - \mu_e \right)^{50}}\\
&= \Exp{\sum_{e_1, e_2, \dots, e_{50}\in \mathcal E} (Y_{e_1} - \mu_{e_1})  (Y_{e_2} - \mu_{e_2}) \dots  (Y_{e_{50}} - \mu_{e_{50}})}\\
&= \sum_{e_1, e_2, \dots, e_{50}\in \mathcal E}\Exp{ (Y_{e_1} - \mu_{e_1}) (Y_{e_2} - \mu_{e_2}) \dots (Y_{e_{50}} - \mu_{e_{50}})}\enspace.
\end{align*}

Again, notice that $\Exp{(Y-\mu_Y)^{50}}$ is the sum of expectations of terms involving at most $50$ edges (including duplicates), and so at most $100$ nodes. Each of these sets of $100$ nodes behave independently under a hash function from a $100$-wise independent family. Therefore, the value of $\Exp{(Y-\mu_Y)^{50}}$ is the same as under \emph{fully independent} node color choices, and we may bound $\Exp{(Y-\mu_Y)^{50}}$ assuming full independence of node choices.

We will now fix color choices for all nodes in $S$ (which, since we can now assume independence, will not affect the distributions of other nodes' color choices, for the purposes of bounding $\Exp{(Y-\mu_Y)^{50}}$). 

Since our subsequent analysis will consider `revealing' the random choices in two stages, first for nodes in $S$ and then for nodes outside $S$, we introduce notation to keep track of which random choices are under consideration. Let $\Expu{S}{}$ denote expectation over color choices of nodes in $S$, $\Expu{\bar S}{}$ denote expectation over all other node choices, and $\Exp{}$ denote expectation over all random choices.

Let $\mathcal A$ denote the set of possible color assignments to the nodes of $S$.\hide{\gopi{May be nodes of $S$?} Yes - thanks} For an arbitrary fixed assignment $A\in \mathcal A$, let $\mu^A_Y = \Expu{\bar S}{Y|A}$.

\begin{claim}\label{claim:muAY}
When color choices of \textsc{ColorTrial} are fully independent, $\Expu{S}{(\mu^A_Y-\mu_Y)^{50}}\le 10^{40} d_{G_0}(v)^{15}$. 

\end{claim}

\begin{proof}
By definition, for fixed assignment $A$ and node $u\in S'$, $\Expu{\bar S}{\sum\limits_{w\in Nom_u \setminus S } Y_{\{u,w\}}|A} = w_u(c_A(u))$, where $c_A(u)$ is $u$'s color under $A$. So, \[\Expu{\bar S}{Y|A} =  \sum_{u \in S'} w_u(c_A(u)) = d_{G_0}(v)^{0.1} \sum_{u \in S} X_{u}|A = d_{G_0}(v)^{0.1} X|A\enspace. \]

Furthermore, $\mu_Y = \Expu{S}{\Expu{\bar S}{Y|A}} = \Expu{S}{d_{G_0}(v)^{0.1} X|A} = d_{G_0}(v)^{0.1}\mu_X$. Then, 

\begin{align*}
\Expu{S}{(\mu^A_Y-\mu_Y)^{50}} &= \Expu{ S}{(d_{G_0}(v)^{0.1} X|A-d_{G_0}(v)^{0.1}\mu_X)^{50}}\\
&\le d_{G_0}(v)^{5} \Expu{ S}{( X-\mu_X)^{50}}\\
&\le d_{G_0}(v)^{5} \sqrt{\Expu{S}{( X-\mu_X)^{100}}}\\
&\le 10^{40} d_{G_0}(v)^{15} &\text{by Claim \ref{claim:S'2}.}
\end{align*}
\end{proof}

Now, we bound $\Expu{\bar S}{(Y|A - \mu^A_Y)^{50}}$. Let $\mathcal E'\subseteq \mathcal E$ denote the set $ \{\{u,w\}: c_A(u)\text{ is unique in $S$, i.e. $u\notin Q$}\}$. The important observation is that, once colors $c_A(u)$ for nodes $u\in S$ are fixed, the variables $\{Y_{e}: e\in \mathcal E'\}$ are \emph{negatively associated}:

\begin{definition}[Definition 1 of \cite{DR96}]
Let $X := (X_1,\dots,X_n)$ be a vector of random variables. The random variables X are negatively associated if for every two disjoint index sets, $I,J \subseteq [n]$,

\[\Exp{f(X_i , i \in I)g(X_j , j \in J)} \le \Exp{f(X_i , i \in I)}\Exp{g(X_j , j \in J)}\]

for all functions $f : \mathbb R^{|I|} \rightarrow \mathbb R$ and $g : \mathbb R^{|J|} \rightarrow \mathbb R$ that are both non–decreasing or both non–increasing.
\end{definition}

Negative association of $\{Y_{\{u,w\}}: c_A(u)\text{ is unique in $S$}\}$ can be shown from the definition, but is more easily seen by the combination of the following two lemmas:

\begin{lemma}[Lemma 8 of \cite{DR96}] If $X_1,...,X_n$ are binary variables such that $\sum_{i} X_i \le 1$, then $X_1,...,X_n$ are negatively associated.
\end{lemma} 

(The lemma in fact states $\sum_{i} X_i = 1$, but it can easily be seen from the proof that is still applies when $\sum_{i} X_i \le 1$.)

\begin{lemma}[From Proposition 7 of \cite{DR96}] If collections of variables $X_1,...,X_n$ and $Y_1,...,Y_m$ are individually negatively associated and mutually independent, then variables $X_1,...,X_n, Y_1,...,Y_m$  are negatively associated.
\end{lemma} 

In our case, for any particular $w\notin S$, the set of variables $\{Y_{\{u,w\}}: c_A(u)\text{ is unique in $S$}\}$ sum to at most $1$, since $w$ can take the color of at most one such $u$ (and this is the purpose of requiring that $c_A(u)$ is unique). Then, for every $w' \neq w$, all pairs of variables $Y_{\{u,w\}}$ and $Y_{\{u',w'\}}$ are independent, so we obtain negative association for all of $\{Y_{e}: e\in \mathcal E'\}$.

Let $Y' := \sum_{e\in \mathcal E'}Y_e$. By Proposition 5 of \cite{DR96}, Chernoff/Hoeffding bounds apply to sums of negatively associated variables in in the same way as independent variables, and this includes Lemma A.4 of \cite{BR94} (which can easily be seen by its proof). By this lemma,

\begin{align*}
	\Expu{\bar S}{(Y'-\Expu{\bar S}{Y'})^{50}} &\le 2e^\frac{1}{300}\sqrt{50\pi}\left(\frac{5}{2e}\right)^{25}\cdot (50 (\Expu{\bar S}{Y'} + 50))^{25} \\
	&<10^{65} (\Expu{\bar S}{Y'}+50)^{25}\\
	&\le 10^{65} ( \mu^A_Y+50)^{25}\enspace.
\end{align*}

Now we bound the number of monochromatic edges in $\mathcal E \setminus \mathcal E'$. If $u\notin S'$, then the variables $Y_{\{u,w\}}$ are $0$ by definition. Otherwise, for fixed $u\in Q\cap S'$, the variables $Y_{\{u,w\}}:w\in Nom_u\setminus S $ are independent, and $\Expu{\bar S}{\sum_{w\in Nom_u\setminus S} Y_{\{u,w\}}} = w_u(c(u))$. Denote $Y^u:= \sum_{w\in Nom_u\setminus S} Y_{\{u,w\}}$.

By Lemma A.4 of \cite{BR94}, 

\begin{align*}
	\Expu{\bar S}{(Y^u-\Expu{\bar S}{Y^u})^{50}} &\le 2e^\frac{1}{300}\sqrt{50\pi}\left(\frac{5}{2e}\right)^{25}\cdot (50 (w_u(c(u)) + 50))^{25}
	<10^{65} (w_u(c(u))+50)^{25} \enspace.
\end{align*}

Since $u\in S'$, $w_u(c(u))\le d_{G_0}(v)^{0.1}$, so 

\[\Expu{\bar S}{(Y^u-\Expu{\bar S}{Y^u})^{50}}<10^{65} (d_{G_0}(v)^{0.1}+50)^{25} \le d_{G_0}(v)^{3},\]

since we can again assume $d_{G_0}(v)>C$ for sufficiently large constant $C$.

Then, we can bound the concentration over all such nodes $u\in Q$, using the inequality $(x_1+\dots + x_q)^{50} \le q^{50} \max_{i\le q} x_i^{50}$, which can be seen by expanding the $q^{50}$ terms of $(x_1+\dots + x_q)^{50}$ and noting that each is at most $\max_{i\le q} x_i^{50}$.

\[
\Expu{\bar S}{\left(\sum_{u\in Q}|Y^u-\Expu{\bar S}{Y^u}| \right)^{50}}\le \Expu{\bar S}{|Q|^{50} \cdot \max_{u\in Q}(Y^u-\Expu{\bar S}{Y^u})^{50}}\le  q^{50}d_{G_0}(v)^{3}\enspace.
\]

So, incorporating the concentration of $Y_e$ variables for nodes both in and outside $Q$,

\begin{align*}
\Expu{\bar S}{(Y|A - \mu^A_Y)^{50}}	&\le \Expu{\bar S}{\left(|Y'-\Expu{\bar S}{Y'}|+\sum_{u\in Q}|Y^u-\Expu{\bar S}{Y^u}| \right)^{50}}\\
&\le 2^{50} \cdot\max{\left\{\Expu{\bar S}{(Y'-\Expu{\bar S}{Y'})^{50}},\Expu{\bar S}{\left(\sum_{u\in Q}|Y^u-\Expu{\bar S}{Y^u}| \right)^{50}}\right\}} \\
&< 2^{50} \cdot ( 10^{65} ( \mu^A_Y+50)^{25} + q^{50}d_{G_0}(v)^{3})
\end{align*}

Now we can obtain a concentration bound for $Y$.

\begin{align*}
\Exp{(Y-\mu_Y)^{50}}&= \Expu{S}{\Expu{\bar S}{(Y-\mu_Y)^{50}}}\\
&\le \Expu{S}{\Expu{\bar S}{\left(|\mu^A_Y-\mu_Y| + |Y - \mu^A_Y|\right)^{50} }}\\
&\le 2^{50}\Expu{S}{\Expu{\bar S}{(\mu^A_Y-\mu_Y)^{50} + (Y - \mu^A_Y)^{50} }} &\text{using $(x_1+x_2)^{50}\le 2^{50} (x_1^{50}+x_2^{50})$}\\
&= 2^{50}\Expu{S}{(\mu^A_Y-\mu_Y)^{50} +  \Expu{\bar S}{(Y - \mu^A_Y)^{50} }}\\
&\le 2^{50} 10^{40} d_{G_0}(v)^{15} +\Expu{S}{ 2^{50} \cdot ( 10^{65} ( \mu^A_Y+50)^{25} + q^{50}d_{G_0}(v)^{3})}\\
&\le 2^{50} 10^{40} d_{G_0}(v)^{15} +2^{50}( \Expu{S}{ 10^{65} ( \mu^A_Y+50)^{25} }+d_{G_0}(v)^{3})\enspace.
\end{align*}

Since $\Expu{S}{(\mu^A_Y-\mu_Y)^{50}}\le 10^{40} d_{G_0}(v)^{15}$ by Claim \ref{claim:muAY}, we have 
\begin{align*}
\Expu{S}{( \mu^A_Y+50)^{25}} &\le 2^{25}(\Expu{S}{(\mu^A_Y)^{25}}+50^{25})\\
& \le 2^{25}(\mu_Y^{25}+10^{20} d_{G_0}(v)^{7.5} + 50^{25})\\
&\le 2^{25}((0.26d_{G_0}(v)^{0.4})^{25}+10^{20} d_{G_0}(v)^{7.5} + 50^{25})\\
&\le d_{G_0}(v)^{11}\enspace,
\end{align*}

using that $d_{G_0}(v)$ is at least a sufficiently large constant $C$. This leaves:

\begin{align*}
	\Exp{(Y-\mu_Y)^{50}}&\le 2^{50} 10^{40} d_{G_0}(v)^{15} + 2^{50} \cdot ( 10^{65}  d_{G_0}(v)^{11} + d_{G_0}(v)^{3}) \le d_{G_0}(v)^{16}\enspace.\\
\end{align*}

The proof of \Cref{claim:Yconc} is therefore complete.
\end{proof}

\begin{lemma}\label{lem:S}
When color choices of \textsc{ColorTrial} are determined by a random hash function from
a $100$-wise independent hash family $[n^{O(1)}] \rightarrow [n^{O(1)}]$, with probability at least $1-d_{G_0}(v)^{-3.5}$, at most $0.28d_{G_0}(v)^{0.4}$ nodes in $S$ fail to color themselves.
\end{lemma}

\begin{proof}
We distinguish three ways in which a node $u$ in $S$ can fail to color themselves:

\begin{itemize}
	\item They can choose color $c(u)$ such that $w_{u}(c(u)) > d_{G_0}(v)^{0.1}$, i.e. $u\notin S'$;
	\item They can choose the same color as another node in $S$, i.e. $u\in Q$;
	\item They can be in $S'$ but have a monochromatic edge to a neighbor in $Nom_u\setminus S$.
\end{itemize}

We bound the number of nodes that can fall under each of these cases:

\begin{itemize}
	\item By Claim \ref{claim:S'}, with probability $1-d_{G_0}^{-5}$, $|S\setminus S'|\le 0.27d_{G_0}^{0.35}\enspace.$
	\item By Claim \ref{claim:100}, $\Expu{S}{|Q|^{20}}<1$. By Markov's inequality, 
	\[\Prob{|Q|\ge d_{G_0}(v)^{0.1}}= \Prob{|Q|^{50}\ge d_{G_0}(v)^{5}}\le d_{G_0}(v)^{-5}\enspace.\]
	\item By Claim \ref{claim:Yconc}, $\Exp{(Y-\mu_Y)^{50}}\le d_{G_0}(v)^{16}$. $\mu_Y\le 0.26d_{G_0}(v)^{0.4}$ by Claim \ref{claim:muY}, and so 
	\begin{align*}
		\Prob{Y\ge 0.27d_{G_0}(v)^{0.4}} &\le \Prob{|Y-\mu_Y|\ge 0.01d_{G_0}(v)^{0.4}}\\
		&=\Prob{(Y-\mu_Y)^{50}\ge 0.01^{50}d_{G_0}(v)^{20}}\\
		&< d_{G_0}(v)^{-3.9}\enspace.
	\end{align*}
\end{itemize}
	So, with probability at least $1-d_{G_0}(v)^{-3.9}$, there are at most $0.27d_{G_0}(v)^{0.4}$ monochromatic edges between $S'$ and $V\setminus S$, and so at most $0.27d_{G_0}(v)^{0.4}$ nodes $u$ with such an adjacent monochromatic edge.

Overall, with probability at least $1-(d_{G_0}(v)^{-5}+d_{G_0}(v)^{-5}+d_{G_0}(v)^{-3.9})\ge 1-d_{G_0}(v)^{-3.5}$,\hide{\gopi{May be $d_{G_0}(v)$?} Yes, it should be - thanks} the number of nodes $u$ in $S$ that fail to color themselves is at most $0.27d_{G_0}^{0.35}+d_{G_0}(v)^{0.1}+0.27d_{G_0}(v)^{0.4}\le 0.28d_{G_0}(v)^{0.4}$.
\end{proof}

Now, we can bound the number of nodes in $S$ that choose the same color as a neighbor, and thereby fail to color themselves.

\begin{lemma}\label{lem:failcolor}
When color choices of \textsc{ColorTrial} are determined by a random hash function from
a $100$-wise independent hash family $[n^{O(1)}] \rightarrow [n^{O(1)}]$, with probability at least $1-d_{G_0}(v)^{-3}$, at most $0.29|P_v|$ nodes in $P_v$ fail to color themselves.
\end{lemma}

\begin{proof}
Recall that $S$ was an arbitrary subset of $P_v$ of size $d_{G_0}(v)^{0.4}$. We can divide $P_v$ into $\lceil |P_v|d_{G_0}(v)^{-0.4}\rceil$ such (possibly overlapping) sets $S$. By Lemma \ref{lem:S}, at most $0.28d_{G_0}(v)^{0.4}$ nodes in each such set fail to color themselves. So, by a union bound, with probability at least $1-d_{G_0}(v)^{-3}$, the total number of nodes in $P_v$ that fail to color themselves is at most $\lceil |P_v|d_{G_0}(v)^{-0.4}\rceil \cdot 0.28d_{G_0}(v)^{0.4} \le  0.28(|P_v|+ d_{G_0}(v)^{0.4}) \le 0.29|P_v|$. 
\end{proof}

We can also obtain an upper bound on the number of $v$'s neighbors in $P_v$ that choose a color from $v$'s palette. This comes from the fact that the nodes in $P_v$ have significantly higher degree than $v$, and $v$'s palette must not be much larger than its degree since otherwise it is already successful under \cref{def:fail-col}.

\begin{lemma}\label{lem:samecolor}
When color choices of \textsc{ColorTrial} are determined by a random hash function from
a $100$-wise independent hash family $[n^{O(1)}] \rightarrow [n^{O(1)}]$, with probability at least $1-d_{G_0}(v)^{-4}$, at most $0.41|P_v|$ nodes in $P_v$ choose a color in $v$'s palette.
\end{lemma}

\begin{proof}
 Let $Z_u$ be an indicator variable for the event that $u$ chooses a color in $v$'s palette, and denote $\mu_Z = \Exp{\sum_{u\in P_v}Z_u}$. By the definition of  $\mathrm{Nom}^*_v$, $u$  has palette size greater than $3d_{G_0}(v)$, and therefore has at least $p_{G_0}(u) - p_{G_0}(v) \geq 0.6p_{G_0}(u)$ colors not in $v$'s palette. So, $\Prob{Z_u = 1}\le 0.4$, and $\mu_Z\le 0.4|P_v|$. Furthermore, the variables $Z_u$ are $100$-wise independent when color choices are chosen by a random hash function from a $100$-wise independent family. By Lemma 2.2 of \cite{BR94}, 
 
 \begin{align*}
 	\Prob{|Z-\mu_Z|\ge |P_v|^{0.6}} &\le 8\left(\frac{50\mu_Z+10000}{|P_v|^{1.2}}\right)^{50}\\
 	&\le 8\left(\frac{20|P_v|+10000}{|P_v|^{1.2}}\right)^{50}\\
 	&< 8\left(\frac{0.8 d_{G_0}(v)+10000}{(0.04 d_{G_0}(v))^{1.2}}\right)^{50} \\
 	&\le 8\left(d_{G_0}(v)^{-0.1}\right)^{50} \\
 	&\le d_{G_0}(v)^{-4}\enspace.
 \end{align*}
 
 So, with probability at least $1-d_{G_0}(v)^{-4}$, the number of nodes in $P_v$ that choose a color from $v$'s palette is at most $0.4|P_v|+|P_v|^{0.6} \le 0.41|P_v|$.
\end{proof}

Lemmas \ref{lem:failcolor} and \ref{lem:samecolor} together bound the number of nodes in $P_v$ that do not successfully color themselves using a color not in $V$'s palette:
	
\begin{lemma}
	\label{lem:colorstep}
	When color choices in the coloring step of \textsc{ColorTrial} are determined by a random hash function from a $100$-wise independent hash family $[n^{O(1)}]\rightarrow [n^{O(1)}]$, any node $v$ that did not fail in the nomination step fails in the coloring step of \textsc{ColorTrial} with probability at most $1/p_{G_0}(v)$.
\end{lemma}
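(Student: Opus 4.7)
Plan: Paralleling the structure of \Cref{lem:nominations}, we may assume that $v$ satisfies none of the first three success criteria of \Cref{def:fail-col}, so $p_{G_0}(v) < 1.1\,d_{G_0}(v)$, $|N^*(v)| \ge d_{G_0}(v)/3$, and strictly fewer than $0.03\,d_{G_0}(v)$ of $v$'s neighbors failed the nomination step (otherwise $v$ is automatically successful and there is nothing to prove). Since $v$ also did not fail the nomination step, we have $|\mathrm{Nom}^*_v| \ge |N^*(v)|/4 - p_{G_0}(v)^{0.7}$ deterministically. Write $D=d_{G_0}(v)$, $P=p_{G_0}(v)$, $M=|N^*(v)|$. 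The task is to show that with probability at least $1-1/P$ over the hash, at least $0.01 P$ of $v$'s neighbors color themselves successfully with a color outside $\Psi_{G_0}(v)$.

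I plan to focus only on candidate neighbors in $N^*(v)$, each of which has $p_{G_0}(u) \ge 3D+1$, so in particular $\Pr[c(u) \in \Psi_{G_0}(v) \mid u \text{ nominated}] \le P/(3D+1) < 1.1/3$. For each $u \in N^*(v)$ introduce the $\{0,1\}$-valued indicators $A_u = \mathbf{1}[u \text{ nominated}]$, $B_u = \mathbf{1}[u \text{ nominated and } c(u)\in\Psi_{G_0}(v)]$, the (nomination-step-measurable) $N_u = \mathbf{1}[u \text{ failed the nomination step}]$, and $C_u = \mathbf{1}[u \text{ nominated and some nominated } u'\in N_{G_0}(u) \text{ has } c(u')=c(u)]$. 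A good $u$ satisfies $A_u=1$ and $B_u=N_u=C_u=0$, so the good-neighbor count is at least $\sum_{u\in N^*(v)}(A_u - B_u - N_u - C_u)$.

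The first three sums are routine. Both $A_u$ and $B_u$ are functions of the single hash value at $u$'s identifier, so $\{A_u\}_{u\in N^*(v)}$ and $\{B_u\}_{u\in N^*(v)}$ are each $100$-wise independent. I will compute $\mathbf{E}[\sum A_u]=M/4$ and $\mathbf{E}[\sum B_u] \le MP/(4(3D+1)) \le 1.1 M/12$, then invoke \Cref{cor:conc} (exactly as in the proof of \Cref{lem:nominations}) to obtain $\sum A_u \ge M/4 - P^{0.7}$ and $\sum B_u \le 1.1 M/12 + P^{0.7}$ each with exceptional probability well below $1/(3P)$. The bound $\sum N_u \le 0.03 D$ is our deterministic hypothesis.

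The main obstacle is $\sum C_u$, since $C_u$ depends on the hashes of all of $u$'s neighbors and so $\{C_u\}$ is not $100$-wise independent. My plan is to pass to the pair indicators $W_{u,u'} = \mathbf{1}[u,u' \text{ both nominated},\ c(u)=c(u')]$ indexed by $u \in N^*(v), u' \in N_{G_0}(u)$, each a function of only two hashes. Then $\sum C_u \le T := \sum_{u,u'} W_{u,u'}$, and a direct pairwise computation (using $|\Psi_{G_0}(u)\cap\Psi_{G_0}(u')| \le \min(p_{G_0}(u),p_{G_0}(u'))$ and $p_{G_0}(u) \ge d_{G_0}(u)+1$) gives $\mathbf{E}[T] \le M/16$. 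Concentration for $T$ needs care because overlapping pair-indicators are correlated; my plan is to apply \Cref{lem:conc} with $k=50$ after decomposing the multiset of pairs into $O(1)$ vertex-disjoint matchings (using $\Delta_{G_0} \le O(\sqrt n)$ and the constant-degree bound on the pair incidence structure), applying the corollary to each matching-sum of genuinely $50$-wise independent indicators, and union-bounding. This gives $T \le M/16 + O(P^{0.7})$ with failure probability at most $1/(3P)$. Assembling the four bounds via a union bound, the good-neighbor count is at least $M/4 - 1.1M/12 - 0.03 D - M/16 - O(P^{0.7}) \ge 0.01 P$, using $M \ge D/3$, $P<1.1 D$, and the fact that $P \ge C$ is a sufficiently large constant to absorb the $P^{0.7}$ terms. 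The total failure probability is at most $1/P$, which is the claim.
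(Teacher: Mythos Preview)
There are several genuine gaps. First, you treat nomination as random in the coloring-step analysis (invoking \Cref{cor:conc} for $\sum A_u$ and picking up a $1/16$ factor in $\mathbf E[T]$), but the lemma concerns only the coloring-step hash: nomination was fixed by a separate hash earlier, so $\sum_{u\in N^*(v)}A_u=|\mathrm{Nom}^*_v|$ is already deterministic, and no $1/16$ factor is available in $\mathbf E[W_{u,u'}]$. Second, the ``$O(1)$ vertex-disjoint matchings'' decomposition for $T$ cannot exist: each $u\in N^*(v)$ has $d_{G_0}(u)\ge 3D$ and hence appears in $\Omega(D)$ pairs $(u,u')$, so any matching decomposition needs $\Omega(D)$ pieces; the asserted ``constant-degree bound on the pair incidence structure'' is simply false. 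Third, even granting your stated bounds the arithmetic does not close: $M/4-1.1M/12-M/16=\tfrac{4.6}{48}M\approx 0.096\,M$, which at $M=D/3$ gives only $\approx 0.032\,D$; after subtracting $0.03\,D$ you are left with $\approx 0.002\,D<0.01\,P$.

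The paper avoids collision-counting entirely by using a consequence of nomination success that you exploit only for the crude bound $\sum N_u\le 0.03\,D$: every $u\in P_v:=\mathrm{Nom}^*_v\setminus\{\text{failed}\}$ satisfies $|\mathrm{Nom}_u|\le \tfrac14 d_{G_0}(u)+p_{G_0}(u)^{0.7}\le 0.3\,p_{G_0}(u)$. Combined with $|\Psi_{G_0}(u)\setminus\Psi_{G_0}(v)|\ge 0.6\,p_{G_0}(u)$ (since $p_{G_0}(u)>3D$), this means that for \emph{any} fixing of the other color choices at least $0.3\,p_{G_0}(u)$ colors in $u$'s palette lie outside $\Psi_{G_0}(v)$ and are not picked by any nominated neighbor of $u$. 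Hence the single indicator $X_u=\mathbf 1[u\text{ succeeds with a color }\notin\Psi_{G_0}(v)]$ has $\Pr[X_u=1\mid c_{-u}]\ge 0.3$ pointwise, folding your $B_u$ and $C_u$ into one event and giving $\mathbf E[\sum_{u\in P_v}X_u]\ge 0.3\,|P_v|>0.012\,P$ with no pair analysis at all; the paper then applies \Cref{cor:conc} directly to the $X_u$'s.
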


\begin{proof}
Of nodes in $P_v$, at most $0.41|P_v|$ choose a color in $v$'s palette with probability at least $1-d_{G_0}(v)^{-4}$, by Lemma \ref{lem:samecolor}. At most a further $0.29|P_v|$ fail to color themselves due to monochromatic edges, with probability at least $1-d_{G_0}(v)^{-3}$, by Lemma \ref{lem:failcolor}. This leaves at least $0.3|P_v|$ nodes that successfully color themselves a color not in $v$'s palette. $0.3|P_v|> 0.012 d_{G_0}(v) \le  0.01 p_{G_0}(v)$, since (assuming $v$ is not already successful by the first point of Definition \ref{def:fail-col}) $p_{G_0}(v) <  1.1 d_{G_0}(v)$, and therefore $v$ is successful in this case. The probability that $v$ is \emph{not} successful is therefore at most $d_{G_0}(v)^{-4}+d_{G_0}(v)^{-3} \le d_{G_0}(v)^{-2} \le p_{G_0}(v)^{-1}$.
		
\end{proof}
		
The induced graph of failed nodes is therefore small when randomness is chosen from a family of $100$-wise independent hash functions, and by the method of conditional expectations we can obtain the same deterministically.
\begin{lemma}
\label{lem:MCE2}
We can deterministically choose a hash function in $O(1)$ rounds, from a $100$-wise independent family $[n^{O(1)}]\rightarrow [n^{O(1)}]$, to run the coloring step of \textsc{ColorTrial} such that the size of the subgraph induced by the failed nodes (in the coloring step) is at most $n$.
\end{lemma}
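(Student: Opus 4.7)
The plan is to mirror the proof of \Cref{lem:MCE}. Fix the hash function for the nomination step provided by \Cref{lem:MCE}, so that the set of nodes successful in nomination and the quantities $\mathrm{Nom}_v$, $\mathrm{Nom}^*_v$, $N^*(v)$ are now determined. For each $v \in G_0$ that survived the nomination step, let $\mathbf{1}_v$ be the indicator of the event that $v$ fails in the coloring step of \textsc{ColorTrial}, and define $I_v := \mathbf{1}_v \cdot d_{G_0}(v)$. As in \Cref{lem:MCE}, the size of the subgraph induced by the failed nodes in the coloring step is at most $\sum_{v\in G_0} I_v$, since every edge between two failed nodes is counted in the degree of at least one endpoint. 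By \Cref{lem:colorstep} and linearity of expectation, when the coloring hash function is drawn uniformly from a $101$-wise independent family,
\[
    \mathbf{E}\!\left[\sum_{v \in G_0} I_v\right]
    \;\le\;
    \sum_{v\in G_0}\frac{d_{G_0}(v)}{p_{G_0}(v)}
    \;\le\; n,
\]
using the invariant $p_{G_0}(v) \ge d_{G_0}(v)+1$ that is maintained throughout the \DILC algorithm.

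To apply the method of conditional expectations from \Cref{subsec:method_of_conditional_expectations}, each node $v$ must be able to evaluate $I_v(h)$ locally for every hash function $h$. The first three properties of \Cref{def:fail-col} depend only on the outcome of the nomination step and on $v$'s palette and $1$-hop neighborhood, both of which $v$ already knows. The fourth property, however, requires $v$ to decide, for each neighbor $u$, whether $u$ successfully colors itself a color not in $\Psi_{G_0}(v)$; this needs the color choice of every neighbor $w$ of $u$ (to rule out a conflict $c(w)=c(u)$) and the palette $\Psi_{G_0}(u)$. Thus $v$ needs its $2$-hop neighborhood in $G_0$ together with the associated palette information. Under the standing hypothesis $\Delta_G \le O(\sqrt n)$, each node's $2$-hop has $O(n)$ vertices and $O(n)$ palette entries of size $O(\log n)$, so this information can be collected at every node in $O(1)$ rounds using Lenzen's routing algorithm \cite{Lenzen13}.

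With local evaluability established, we apply the method of conditional expectations exactly as in \Cref{subsec:method_of_conditional_expectations}: in $O(1)$ batches, each fixing $\log n$ bits of the seed, we deterministically compute a hash function $h^*$ from the $101$-wise independent family with
\[
    \sum_{v \in G_0} I_v(h^*)
    \;\le\;
    \mathbf{E}\!\left[\sum_{v\in G_0} I_v\right]
    \;\le\; n.
\]
Using $h^*$ to make the color choices in the coloring step therefore ensures that the induced subgraph of coloring-step failed nodes has size at most $n$, as required. The main obstacle, compared with \Cref{lem:MCE}, is that the cost function now genuinely depends on $2$-hop information rather than $1$-hop information; this is the only substantive new element, and it is dispatched by the bounded-degree assumption together with Lenzen's routing.
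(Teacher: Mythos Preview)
Your proof is correct and follows essentially the same approach as the paper's own proof: bound the expected value of $\sum_v I_v$ via \Cref{lem:colorstep}, observe that evaluating the fourth condition of \Cref{def:fail-col} requires $2$-hop information, collect this in $O(1)$ rounds using the assumption $\Delta_G = O(\sqrt n)$, and apply the method of conditional expectations. Your explanation of \emph{why} the fourth property forces a $2$-hop dependence is in fact more detailed than the paper's, which simply asserts that $I_v$ can be computed once the $2$-hop neighborhood is known.
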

\begin{proof}
Let $P \subseteq V(G_0)$ be the set of nodes that are successful during the nomination step. For each $v$ in $P$, let us define a indicator random variable $\mathbf{1}_v$ such that it takes value $1$ if $v$ is a failed node in the coloring step, and 0 otherwise. Let us define another random variable $I_v=\mathbf{1}_v \cdot d_{G_0}(v)$ for each node  $v$ in $G_0$. Our local cost function $f(h,v)$ for the method of conditional expectations takes the value of $I_v$ when coloring choices are specified by $h$, and the global cost function $F(h) = \sum_{v \in P}f(h,v)$.  Observe that the size of the required subgraph (induced by the failed nodes in the coloring step) is bounded by $F(h) = \sum_{v \in P}I_v$. By \Cref{lem:nominations} and by the linearity of expectation, 
the expected size of the required subgraph is at most $\sum_{\text{$v\in P$}}d_{G_0}(v) \cdot 1/p_{G_0}(v) \le n$. Note that $\sum_{v \in P}I_v$ is  the aggregate of $I_v$ variables. Each  node $v$ can compute $I_v$ by checking the conditions mentioned in \Cref{def:fail-col} if the $2$-hop neighborhood of a node $v$ is known. As we assume $\Delta_G \leq O(\sqrt{n})$ in \textsc{Color} (\Cref{alg:mainrand}), $2$-hop neighborhoods of nodes can be collected to their network nodes in $O(1)$ rounds of \CONGESTEDC. The method of conditional expectations using $F(h)$ (outlined in \Cref{subsec:method_of_conditional_expectations}) implies that we can find a hash function deterministically so that the size of the subgraph induced by the failed nodes (in the coloring step) is at most $n$.
%
\end{proof}

Note that \Cref{lem:MCE2} is the only lemma whose proof requires the assumption $\Delta_G=O(\sqrt{n})$.

Recall that $F$ denotes the subgraph of $G$ induced by the failed nodes  either in the nomination step or in the coloring step of \textsc{ColorTrial}. The following lemma bounds $|F|$, and follows immediately from \Cref{lem:MCE} and \Cref{lem:MCE2}.

\begin{lemma}
\label{lem:sizeF}
We can deterministically choose hash functions in $O(1)$ rounds, from a $100$-wise independent hash family $[n^{O(1)}]\rightarrow [n^{O(1)}]$, to run each step of \textsc{ColorTrial} such that the size of the subgraph induced by the failed nodes (either in the nomination step or in the coloring step) is at most $O(n)$, i.e., $\size{F}=O(n)$.
\end{lemma}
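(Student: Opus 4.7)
The plan is to obtain \Cref{lem:sizeF} as an immediate corollary of \Cref{lem:MCE} and \Cref{lem:MCE2}, since these two lemmas already handle the two disjoint failure modes. Let $F_{\mathrm{nom}}$ denote the subgraph induced by nodes that fail in the nomination step, and $F_{\mathrm{col}}$ the subgraph induced by nodes that are successful in the nomination step but fail in the coloring step. By definition, the failed set $V(F)$ in \textsc{ColorTrial} is exactly $V(F_{\mathrm{nom}}) \cup V(F_{\mathrm{col}})$, so $|F| \le |F_{\mathrm{nom}}| + |F_{\mathrm{col}}| + (\text{edges across the two parts})$, and since each cross edge has at least one endpoint in one of the two failed sets, a generous bound gives $|F| \le 2(|F_{\mathrm{nom}}| + |F_{\mathrm{col}}|)$.

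First I would invoke \Cref{lem:MCE}: in $O(1)$ rounds we deterministically select a hash function from a $101$-wise independent family to drive the nomination step of \textsc{ColorTrial}, so that $|F_{\mathrm{nom}}| \le n$. Having fixed the nominations, the subsequent coloring step is itself a randomized procedure whose decisions are governed by a fresh hash function; applying \Cref{lem:MCE2} then derandomizes this step in $O(1)$ further rounds, producing $|F_{\mathrm{col}}| \le n$. (It is important that \Cref{lem:MCE2} is applied \emph{after} the nominations have been fixed, because \Cref{lem:colorstep} conditions on the nomination outcomes when bounding per-node failure probability.) Composing the two sets of $O(1)$ rounds yields the $O(1)$ round complexity.

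Putting these together gives $|F| \le 2(|F_{\mathrm{nom}}| + |F_{\mathrm{col}}|) \le 4n = O(n)$, which is exactly the statement of \Cref{lem:sizeF}. There is essentially no obstacle here: all the real work is already contained in \Cref{lem:nominations}, \Cref{lem:colorstep}, \Cref{lem:MCE}, and \Cref{lem:MCE2}. The only minor subtlety worth flagging is that \Cref{lem:MCE2} is where the assumption $\Delta_G = O(\sqrt{n})$ is used (to allow each network node to collect the $2$-hop neighborhood of its graph node in $O(1)$ rounds and thus locally evaluate $f(h,v)$), so the proof of \Cref{lem:sizeF} inherits this assumption, matching the hypothesis of \textsc{Color}$(G,x)$ as stated in \Cref{alg:mainrand}.
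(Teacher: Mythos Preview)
Your approach is exactly the paper's: the authors simply state that \Cref{lem:sizeF} ``follows immediately from \Cref{lem:MCE} and \Cref{lem:MCE2}'' and give no further argument.

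One small correction, though: the inequality $|F| \le 2(|F_{\mathrm{nom}}| + |F_{\mathrm{col}}|)$ is not valid as stated. If the nomination-failed nodes and the coloring-failed nodes formed the two sides of a bipartite graph, both $|F_{\mathrm{nom}}|$ and $|F_{\mathrm{col}}|$ would be $0$ while $|F|$ could be arbitrarily large. What actually makes the combination work is that the \emph{proofs} of \Cref{lem:MCE} and \Cref{lem:MCE2} bound not merely the induced-subgraph sizes but the stronger quantity $\sum_{v\text{ fails}} d_{G_0}(v)$ (this is the cost function $\sum_v I_v$ used in the method of conditional expectations). Since these two sums over disjoint failure sets add, we get $\sum_{v\in V(F)} d_{G_0}(v)\le 2n$, and hence $|F|\le n = O(n)$. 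With this tweak your writeup is complete, and your remark about inheriting the $\Delta_G = O(\sqrt n)$ assumption via \Cref{lem:MCE2} is correct and matches the paper's comment.
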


\subsection{\textsc{SubSample}}\label{sec:samp}

After executing \textsc{ColorTrial}$(G_0)$, \textsc{Color}$(G,x)$ executes  procedure \textsc{Subsample$(G_1,x)$}. The randomized procedure on which \textsc{SubSample} is based is Algorithm~\ref{alg:subsamplerand}. To derandomize \textsc{SubSample}, we replace the random choice of line 1 (to generate a set $S$ of vertices) with a choice determined by a hash function from a $100$-wise independent family $[n^{O(1)}]\rightarrow [n^{O(1)}]$.

\begin{algorithm}[H]
	\caption{\textsc{Subsample$(G_1,x)$} - Randomized Basis}\label{alg:subsamplerand}

	Each node $v$ in $G_1 $ independently joins $S$ with probability $d_{G_1}(v)^{-0.1}$

        Each node $v$ in $G_1$ decides whether it succeeds or fails. Let $F_1$ be the set of failed nodes.

        Let $L$ denote the nodes with $p_{G_1}(v)<C$.

	Return:
	\begin{itemize}
        \item $G' := (S\cup F_1 \cup L)$
		\item $G_2 := G_1\setminus G'$
		
	\end{itemize}
\end{algorithm}

Note that while $x$ is not used explicitly in \Cref{alg:subsamplerand}, it increases by $0.1$ in each recursive call to \textsc{Color}, and is part of the objective function used for the derandomization of \Cref{alg:subsamplerand}. The purpose of $x$ is in the analysis to show some measure of the `size' of $G'$ must decrease by $0.1$ in the exponent in each recursive call, and so after $10$ levels of recursion of \textsc{Color}$(G, 0)$, the remaining graph is of size $O(n)$ (see \Cref{lem:G'size}). This value is $0.1$ must necessarily be a small constant for the subsampling to provide sufficient `slack' to be used in \textsc{BucketColor}.


We start by defining the notion of failed nodes in \textsc{SubSample}.

\begin{definition}\label{defi:fail-samp}
Let us define $N^\approx(v)\subseteq N_{G_1}(v)$ to be the subset of $v$'s neighbors $u$ with $\frac12 d_{G_1}(v)\le d_{G_1}(u)\le 6 d_{G_1}(v)$.
A node $v$ is classed as \textbf{successful} during \textsc{SubSample} if either
\begin{enumerate}
\item Large palette size: $p_{G_1}(v) \ge 1.1 d_{G_1}(v)$; or
\item Few neighbors with roughly equal degree: $|N^\approx(v)| \le \frac 13 d_{G_1}(v)$; or
\item Large number of neighbors joining $S$: at least $\frac14 p_{G_1}(v)^{0.9}$ of $v$'s neighbors join $S$.
\end{enumerate}
Node $v$ is classed as \textbf{failed} if none of the above three conditions hold.
\end{definition}
Note that the set $S$ is generated within \textsc{SubSample}. For a node $v$, the first two conditions in the definition above can be verified using the graph $G_1$, which is provided as input to \textsc{SubSample}. However, the third condition depends on both the set $S$ and the graph $G_1$.
.

In a similar way to the analysis in \Cref{sec:trial}, we can analyze \textsc{SubSample} under bounded-independence hash functions and derandomize it, obtaining the following:

\begin{lemma}\label{lem:G'size}
For all $x\in [0,0.9] $, we can deterministically choose a hash function in $O(1)$ rounds, from a $100$-wise independent hash family,  to execute line 1 of \textsc{SubSample} to generate set $S$ such that the following holds:  $\sum_{\text{$v\in G'$}}d_{G_1}(v)^{x+0.1}\le Cn+ 2\sum_{\text{$v\in G_1$}}d_{G_1}(v)^{x} .$
\end{lemma}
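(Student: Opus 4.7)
\medskip
\noindent\textbf{Proof plan for \Cref{lem:G'size}.} The plan is to compute $\mathbb{E}_{h}\!\left[\sum_{v\in G'}d_{G_1}(v)^{x+0.1}\right]$ when the sampling decisions of line~1 of \textsc{Subsample} are made by a uniformly random hash function from a $101$-wise independent family, show that this expectation is at most $Cn+2\sum_{v\in G_1} d_{G_1}(v)^x$, and then derandomize using the method of conditional expectations from \Cref{subsec:method_of_conditional_expectations}. The per-node cost function $f(h,v):=d_{G_1}(v)^{x+0.1}\cdot \mathbf{1}[v\in G'(h)]$ can be evaluated by $v$ locally once $v$ knows the hash values of its own neighborhood, so the derandomization template applies.

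\medskip
\noindent\textbf{Bounding the expectation.} Since $G'=S\cup F_1\cup L$, I would bound each piece separately. For $v\in L$ the bound $p_{G_1}(v)<C$ implies $d_{G_1}(v)<C$, so $\sum_{v\in L}d_{G_1}(v)^{x+0.1}\le C^{x+0.1}\cdot n\le \tfrac12 Cn$ deterministically (for $C$ large enough). For $S$, linearity of expectation gives
\begin{align*}
  \mathbb{E}\!\left[\sum_{v\in S} d_{G_1}(v)^{x+0.1}\right]
    = \sum_{v\in G_1} d_{G_1}(v)^{-0.1}\cdot d_{G_1}(v)^{x+0.1}
    = \sum_{v\in G_1} d_{G_1}(v)^x.
\end{align*}
For $F_1$, the goal is to show $\Pr[v\in F_1]\le d_{G_1}(v)^{-0.1}$ for each $v\in G_1\setminus L$, which yields the same bound $\sum_{v\in G_1} d_{G_1}(v)^x$ on the expected contribution from $F_1$. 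Summing the three parts gives an expectation at most $\tfrac12 Cn+2\sum_{v\in G_1}d_{G_1}(v)^x$, which is below the target.

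\medskip
\noindent\textbf{The failure-probability estimate.} If $v\in F_1$ then $p_{G_1}(v)<1.1\,d_{G_1}(v)$ and $|N^\approx(v)|\ge \tfrac13 d_{G_1}(v)$, so in particular $d_{G_1}(v)\ge C/1.1$ is large. Let $Y_v$ count the neighbors $u\in N^\approx(v)$ that join $S$; each such $u$ has $d_{G_1}(u)\le 6\,d_{G_1}(v)$, so joins independently (under $100$-wise independence) with probability at least $(6\,d_{G_1}(v))^{-0.1}$. Thus
\begin{align*}
  \mu:=\mathbb{E}[Y_v]\ \ge\ \tfrac{1}{3\cdot 6^{0.1}}\,d_{G_1}(v)^{0.9}\ >\ 0.278\,d_{G_1}(v)^{0.9},
\end{align*}
while the failure condition requires $Y_v<\tfrac14 p_{G_1}(v)^{0.9}<\tfrac{1.1^{0.9}}{4}d_{G_1}(v)^{0.9}<0.273\,d_{G_1}(v)^{0.9}$. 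Hence $v\in F_1$ forces $|Y_v-\mu|\ge A$ with $A=\Theta(d_{G_1}(v)^{0.9})$. Applying \Cref{cor:conc},
\begin{align*}
  \Pr[v\in F_1]\ \le\ \left(\tfrac{111\mu}{A^2}\right)^{\!50}
  \ \le\ \left(\tfrac{O(1)}{d_{G_1}(v)^{0.9}}\right)^{\!50}
  \ \le\ d_{G_1}(v)^{-0.1},
\end{align*}
where the last inequality uses that $d_{G_1}(v)\ge C/1.1$ is at least a sufficiently large constant.

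\medskip
\noindent\textbf{Main obstacle.} The chief delicate point is this last concentration step: the expected count $\mu\approx 0.279\,d^{0.9}$ exceeds the failure threshold $\approx 0.273\,d^{0.9}$ only by a small constant factor. What saves the argument is that the gap is still a constant fraction of $d^{0.9}$ (not shrinking with $d$), so the $50$th-power decay in \Cref{cor:conc} drives the failure probability well below $d^{-0.1}$ once $C$ (and therefore $d_{G_1}(v)$ for every potentially failing $v$) is taken sufficiently large. Given this, the derandomization proceeds exactly as in \Cref{lem:MCE}: nodes aggregate per-prefix conditional expectations using Lenzen's routing and iteratively fix $O(\log n)$-bit batches of the seed, completing the construction in $O(1)$ rounds with the desired deterministic bound.
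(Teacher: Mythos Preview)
Your proposal is correct and follows essentially the same route as the paper: split $G'=L\cup S\cup F_1$, bound the deterministic contribution of $L$, use linearity of expectation for $S$, bound the failure probability for $v\notin L$ via \Cref{cor:conc} applied to $|N^\approx(v)\cap S|$, and then derandomize by conditional expectations with a per-node cost computable from $1$-hop information. The paper merely packages the $S$ and $F_1$ contributions together (\Cref{lem:subsample} bounds $\Pr[v\text{ joins }S\text{ or fails}]\le 2d_{G_1}(v)^{-0.1}$), but the arithmetic is identical.

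Two minor slips that do not affect the argument: (i) the bound $\sum_{v\in L}d_{G_1}(v)^{x+0.1}\le \tfrac12 Cn$ does not hold at $x=0.9$ (where $C^{x+0.1}=C$); the correct bound is $\le Cn$, which still lands you exactly on the target $Cn+2\sum d_{G_1}(v)^x$; (ii) in your displayed concentration bound, $\tfrac{111\mu}{A^2}=O(d_{G_1}(v)^{-0.8})$, not $O(d_{G_1}(v)^{-0.9})$, since $\mu\le d_{G_1}(v)$ and $A=\Theta(d_{G_1}(v)^{0.9})$---but the $50$th power still crushes this far below $d_{G_1}(v)^{-0.1}$ once $C$ is large enough.
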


To prove the above lemma, we first show in \Cref{lem:subsample} that a node $v$ in $G_1 \setminus L$ fails or joins $S$ during the execution of \textsc{SubSample} with probability at most $2d_{G_1}(v)^{-0.1}$. While proving \Cref{lem:G'size}, we use \Cref{lem:subsample} to argue that the expected size of $\sum_{\text{$v\in G'$}}d_{G_1}(v)^{x+0.1}$ is at most $Cn+ 2\sum_{\text{$v\in G_1$}}d_{G_1}(v)^{x}$.

\begin{lemma}\label{lem:subsample}
Consider line 1 of \textsc{SubSample} where we generate set $S$. When random choices to generate $S$ are determined by a random hash function from a $100$-wise independent family $[n^{O(1)}]\rightarrow [n^{O(1)}]$, any fixed node $v$ in $G_1 \setminus L$ fails or joins $S$ during \textsc{SubSample} with probability at most $2d_{G_1}(v)^{-0.1}$.
\end{lemma}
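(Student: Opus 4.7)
The plan is to decompose the event ``$v$ fails or joins $S$'' via a union bound. Node $v$ joins $S$ with probability exactly $d_{G_1}(v)^{-0.1}$, so it suffices to bound the probability that $v$ is \emph{failed} (per \Cref{defi:fail-samp}) by another $d_{G_1}(v)^{-0.1}$. Since the bound $2 d_{G_1}(v)^{-0.1}$ is vacuous whenever $d_{G_1}(v)$ is at most a fixed constant, I may assume $d_{G_1}(v)$ exceeds any constant of my choice.

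A failed $v$ must violate all three bullets of \Cref{defi:fail-samp}; in particular $p_{G_1}(v) < 1.1\, d_{G_1}(v)$ and $|N^\approx(v)| > \tfrac13 d_{G_1}(v)$. Combined with $p_{G_1}(v)\ge C$ (since $v\notin L$), the first inequality gives $d_{G_1}(v) \ge C/1.1$, so by choosing $C$ sufficiently large I secure whatever lower bound on $d_{G_1}(v)$ is needed below. Now define $X = \sum_{u\in N^\approx(v)} X_u$, where $X_u$ is the indicator that $u$ joins $S$; these indicators are $100$-wise independent when the sampling is driven by a $101$-wise independent hash function. Since each $u\in N^\approx(v)$ satisfies $d_{G_1}(u)\le 6\, d_{G_1}(v)$,
\begin{align*}
\mu \;=\; \Exp{X} \;\ge\; |N^\approx(v)|\cdot (6\, d_{G_1}(v))^{-0.1} \;\ge\; \tfrac{6^{-0.1}}{3}\, d_{G_1}(v)^{0.9}.
\end{align*}
A direct numerical check shows $\tfrac{6^{-0.1}}{3} \approx 0.279 > 0.273 \approx \tfrac{1.1^{0.9}}{4}$, so $\mu$ exceeds the failure threshold $\tfrac14 p_{G_1}(v)^{0.9} \le \tfrac{1.1^{0.9}}{4}\, d_{G_1}(v)^{0.9}$ by an additive slack $A := \mu - \tfrac14 p_{G_1}(v)^{0.9} = \Omega(d_{G_1}(v)^{0.9})$. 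For $C$ large enough, we also have $\mu \ge 1000$, so \Cref{cor:conc} applies.

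Applying the corollary with this $A$ and using $\mu = O(d_{G_1}(v)^{0.9})$,
\begin{align*}
\Prob{X < \tfrac14 p_{G_1}(v)^{0.9}} \;\le\; \Prob{|X-\mu|\ge A} \;\le\; \left(\frac{111\mu}{A^2}\right)^{50} \;=\; O\bigl(d_{G_1}(v)^{-45}\bigr),
\end{align*}
which is comfortably below $d_{G_1}(v)^{-0.1}$ once $d_{G_1}(v)$ is a large enough constant. Together with the $d_{G_1}(v)^{-0.1}$ probability that $v\in S$, this yields the claimed $2 d_{G_1}(v)^{-0.1}$ bound. The only delicate point, and thus the main ``obstacle,'' is verifying the genuinely positive numerical gap between the constants $\tfrac{6^{-0.1}}{3}$ and $\tfrac{1.1^{0.9}}{4}$; everything else is a direct application of the $100$-wise independent concentration inequality already set up in \Cref{cor:conc}.
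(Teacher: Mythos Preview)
Your proof is correct and follows essentially the same route as the paper's: split off the $d_{G_1}(v)^{-0.1}$ probability of joining $S$, assume the first two bullets of \Cref{defi:fail-samp} fail, lower-bound the expected count of $N^\approx(v)$-neighbors in $S$ using $|N^\approx(v)|>\tfrac13 d_{G_1}(v)$ and $d_{G_1}(u)\le 6\,d_{G_1}(v)$, and apply \Cref{cor:conc}. The only cosmetic differences are that you phrase everything in terms of $d_{G_1}(v)$ rather than $p_{G_1}(v)$, use a slightly sharper upper bound $\mu=O(d_{G_1}(v)^{0.9})$ (the paper uses the cruder $\mu\le d_{G_1}(v)$), and package the numerical gap as $A=\Omega(d_{G_1}(v)^{0.9})$ instead of the paper's explicit $A=0.0057\,p_{G_1}(v)^{0.9}$.
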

\begin{proof}
Since the probability of node $v$ joining $S$ is $d_{G_1}(v)^{-0.1}$, we will be done by showing that $v$ fails with probability at most $p_{G_1}(v)^{-1}\le d_{G_1}(v)^{-0.1} $. 

 Recall \Cref{defi:fail-samp} that contains three properties and note that a node $v$ fails in \textsc{SubSample} if none of those three properties hold. We first note that the first two properties of \Cref{defi:fail-samp} are already determined (from \textsc{ColorTrial}) and are not dependent on the behavior of \textsc{SubSample}. So, to compute the probability of the event that $v$ fails, we  analyze the probability of satisfying the third property assuming that the first two do not hold, i.e., $p_{G_1}(v)<1.1d_{G_1}(v)$ and $\size{N^{\approx}(v)}> \frac{1}{3}d_{G_1}(v)$. In particular, we need to show that $\Prob{|{N_{G_1}(v) \cap S}| < \tfrac 14 p_{G_1}(v)^{0.9}} \leq p_{G_1}(v)^{-1}.$ Since $N^{\approx}(v) \subseteq N_{G_1}(v)$,  it is sufficient to argue that $\Prob{|{N^{\approx}(v) \cap S}| < \tfrac 14 p_{G_1}(v)^{0.9}} \leq p_{G_1}(v)^{-1}.$

For $u\in N^\approx(v)$, let $X_u$ be the indicator random variable that takes value $1$ or $0$ depending on whether $u$ joins $S$ or not. So, the number of vertices of $u\in N^\approx(v)$ that joins $S$ is $\size{N^\approx(v) \cap S}=\sum_{u\in N^\approx(v)}X_u.$ Note that $\Prob{X_u=1} = d_{G_1}(u)^{-0.1}$ and the expected value of $\size{N^\approx(v) \cap S}$ is $\mu =\sum_{u \in N^{\approx}(v)} d_{G_1}(u)^{-0.1}$. Since $p_{G_1}(v)<1.1d_{G_1}(v),\size{N^{\approx}(v)}> \frac{1}{3}d_{G_1}(v)$, and  $d_{G_1}(u) \leq 6d_{G_1}(v)$ for each $u \in N^{\approx}(v)$, we can deduce the following:

$$\mu \geq \sum_{u \in N^{\approx}(v)}(6d_{G_1}(v))^{-0.1} > \frac{1}{3}d_{G_1}(v) \cdot (6d_{G_1}(v))^{-0.1}> \frac{1}{3 \cdot 6^{0.1}}(p_{G_1}(v)/1.1)^{0.9} > 0.2557 p_{G_1}(v)^{0.9}.$$ We may assume  that $\mu \geq 1000$ as $p_{G_1}(v) \geq C$ for sufficiently large constant $C$. Moreover, the set of random variables $\{X_u:u\in N^{\approx}(v)\}$ are $100$-wise independent  when the randomness for the procedure is provided by a random hash function from a $100$-wise independent family.


So, applying \Cref{cor:conc} (with $X=|N^{\approx}(v) \cap S|$ and $A= 0.0057 p_{G_1}(v)^{0.9}$), we have the following:

$$ \Prob{\size{N^{\approx}(v) \cap S}\leq \mu-0.0057 p_{G_1}(v)^{0.9}}\leq \left(\frac{111\mu}{(0.0057 p_{G_1}(v)^{0.9})^2}\right)^{50}.$$

We have argued that $\mu > 0.2575 p_{G_1}(v)^{0.9}$. Also, note that $\mu \leq \size{N^{\approx}(v)}\leq N_{G_1}(v)=  d_{G_{1}}(v)< p_{G_1}(v).$

%
\begin{align*}
    \Prob{|{N^{\approx}(v) \cap S}| < \tfrac 14 p_{G_1}(v)}
        &\le
    \left(\frac{111\cdot p_{G_1}(v)}{(0.0057 p_{G_1}(v)^{0.9})^2}\right)^{50}
	\leq 
    \left(3416436 \cdot p_{G_1}(v)^{-0.8}\right)^{50}
	\leq
    p_{G_1}(v)^{-1}
    \enspace.
    \qedhere
\end{align*}
%
\end{proof}

Now we show that the selection of hash function in \Cref{lem:subsample} can be derandomized, hence proving \Cref{lem:G'size}.

\begin{proof}[Proof of \Cref{lem:G'size}]
Recall that $G'=L \cup S \cup F_1$, i.e., a node $v$ in $G_1$ is placed in $G'$ if it has $p_{G_1}(v) < C$, is sampled into $S$, or fails. For each $v$ in $ G_1\setminus  L$, let us define a indicator random variable $\mathbf{1}_v$ if  $v$ fails or joins $S$, and $0$ otherwise. Let us define another random variable $I_v=\mathbf{1}_v \cdot d_{G_1}(v)^{x+0.1}$ for each $v$ in $G_1 \setminus L$. Our local cost function $f(h,v)$ for the method of conditional expectations takes the value of $I_v$ when coloring choices are specified by $h$, and the global cost function $F(h) = \sum_{v \in G_1\setminus  L}f(h,v)$. Observe that the required sum $\sum_{v\in G'} d_{G_1}(v)^{x+0.1}$ can be expressed as $\sum_{v\in L}d_{G_1}(v)^{x+0.1}+ \sum_{{v\in G_1 \setminus L}}I_v.$ By the definition of $L$, $\sum_{v\in L}d_{G_1}(v)^{x+0.1}\leq Cn$. For each node in $G_1 \setminus L$, from \Cref{lem:subsample}, the probability that $v$ fails or joins $S$ is at most $2d_{G_1}(v)^{-0.1}$, i.e., $\Exp{I_v}\leq d_{G_1}(v)^{x+0.1} \cdot 2d_{G_1}(v)^{-0.1}\leq 2d_{G_1}(v)^{x}$. So, by the linearity of expectation, the expected value of $\sum_{\text{$v\in G'$}}d_{G_1}(v)^{x+0.1} $ is at most  $Cn+ 2\sum_{\text{$v\in G_1 \setminus L$}} d_{G_1}(v)^{x} \leq Cn+2\sum_{\text{$v\in G_1 $}} d_{G_1}(v)^{x} $. Note that $\sum_{v \in G_1 \setminus L} I_v$ is an aggregate function of $I_v$'s. Each node $v$ can compute $I_v$ by checking conditions mentioned in \Cref{defi:fail-samp} when one-hop neighborhoods are known. The method of conditional expectations applied to $F(h)$ implies that we can find a hash function deterministically to execute line 1 of \textsc{SubSample} to generate set $S$ such that $\sum_{\text{$v\in G'$}}d_{G_1}(v)^{x+0.1}\le Cn+ 2\sum_{\text{$v\in G_1$}}d_{G_1}(v)^{x} $.
%
\end{proof}

\subsection{Properties of $G_2$}\label{ssec:G-2}

Here we give a lemma (\Cref{lem:properties}) specifying properties of the graph $G_2$.
Recall that $G_2$ is the graph of successful nodes that results from running \textsc{ColorTrial} and \textsc{SubSample} on our input graph, and it is the input graph to our main coloring procedure \textsc{BucketColor} in \Cref{sec:buck}. Here, we show
that each node in $G_2$ has sufficient slack to be colored in $O(1)$ rounds by \textsc{BucketColor}.

\begin{lemma}
\label{lem:properties}
For any $v\in G_2$, either $p_{G_2}(v) \ge  d_{G_2}(v)+\frac 14 d_{G_2}(v)^{0.9} $, or $|N_{G_2}^-(v)|\ge \frac 13d_{G_2}(v)$.
\end{lemma}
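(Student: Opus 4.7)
The argument proceeds by a case analysis on which of the three success conditions for \textsc{Subsample} (\Cref{defi:fail-samp}) witnesses $v$'s membership in $G_2$. Throughout I exploit that \textsc{Subsample} does not color any node, so $p_{G_2}(v)=p_{G_1}(v)$, and that $v\in G_2$ implies $v\notin L$, giving $p_{G_1}(v)\ge C$ for the constant from the algorithm. When the first condition holds ($p_{G_1}(v)\ge 1.1\, d_{G_1}(v)$), one gets $p_{G_2}(v)-d_{G_2}(v)\ge 0.1\, d_{G_1}(v)\ge 0.1\, d_{G_2}(v)$, dominating $\tfrac{1}{4} d_{G_2}(v)^{0.9}$ whenever $d_{G_2}(v)$ is above a fixed constant (smaller degrees are absorbed by $p_{G_1}(v)\ge C$). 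When the third condition holds, the $\ge \tfrac{1}{4} p_{G_1}(v)^{0.9}$ sampled neighbors contribute to $d_{G_1}(v)-d_{G_2}(v)$ without decreasing the palette, so $p_{G_2}(v)-d_{G_2}(v)\ge \tfrac{1}{4} p_{G_1}(v)^{0.9}\ge \tfrac{1}{4} d_{G_2}(v)^{0.9}$, using $p_{G_1}(v)\ge d_{G_2}(v)+1$.

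The second condition, $|N^\approx(v)|\le \tfrac{1}{3} d_{G_1}(v)$, is the main technical case. I first split on the magnitude of the degree loss in \textsc{Subsample}. If $d_{G_2}(v)\le \tfrac{1}{2} d_{G_1}(v)$, then $d_{G_1}(v)-d_{G_2}(v)\ge d_{G_2}(v)\ge \tfrac{1}{4} d_{G_2}(v)^{0.9}$ already supplies the required slack. Otherwise $d_{G_2}(v)>\tfrac{1}{2} d_{G_1}(v)$, and setting $H_U:=\{u\in N_{G_1}(v):d_{G_1}(u)>6\, d_{G_1}(v)\}$, every $u\in N_{G_2}^+(v)$ satisfies $d_{G_1}(u)\ge d_{G_2}(u)\ge d_{G_2}(v)>\tfrac{1}{2} d_{G_1}(v)$ and hence lies in $N^\approx(v)\cup H_U$. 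Therefore $|N_{G_2}^+(v)|\le |N^\approx(v)|+|H_U|\le \tfrac{1}{3} d_{G_1}(v)+|H_U|$, and the task reduces to controlling $|H_U|$.

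For this I invoke the \textsc{ColorTrial} coloring-step success (\Cref{def:fail-col}). The first, third, and fourth bullets each translate into a slack gain of order $d_{G_0}(v)$ or $p_{G_0}(v)$, which is enough to meet the $\tfrac{1}{4} d_{G_2}(v)^{0.9}$ target by an analogue of the arguments above. The remaining sub-case is that only the second bullet holds, so $|N^*(v)|<\tfrac{1}{3} d_{G_0}(v)$. Here I show $H_U\subseteq N^*(v)$ by first establishing $d_{G_0}(v)\le 2\, d_{G_1}(v)$: then $u\in H_U$ gives $d_{G_0}(u)\ge d_{G_1}(u)>6\, d_{G_1}(v)\ge 3\, d_{G_0}(v)$. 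The bound $d_{G_0}(v)\le 2\, d_{G_1}(v)$ comes from estimating the degree drop in \textsc{ColorTrial}: colored neighbors are nominated, hence at most $|\mathrm{Nom}_v|\le \tfrac{1}{4} d_{G_0}(v)+p_{G_0}(v)^{0.7}$ by nomination-step success, while failed and out-of-palette colorings together contribute at most the slack in $G_1$, which is $<0.1\, d_{G_1}(v)$ from the failure of the first \textsc{Subsample} condition; rearranging gives $d_{G_0}(v)\le \tfrac{4}{3} d_{G_1}(v)+O(d_{G_0}(v)^{0.7})\le 2\, d_{G_1}(v)$ for $d_{G_1}(v)$ large enough, with the small-degree case absorbed into $C$. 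The main obstacle is the final quantitative balance: combining $|H_U|<\tfrac{1}{3} d_{G_0}(v)$, $|N^\approx(v)|\le \tfrac{1}{3} d_{G_1}(v)$, and the derived tight relationships between $d_{G_0}(v)$, $d_{G_1}(v)$, and $d_{G_2}(v)$ to reach $|N_{G_2}^+(v)|\le \tfrac{2}{3} d_{G_2}(v)$, and hence $|N_{G_2}^-(v)|\ge \tfrac{1}{3} d_{G_2}(v)$—the thresholds in all the success conditions are calibrated precisely so that this bookkeeping closes.
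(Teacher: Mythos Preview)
Your strategy is the same as the paper's: exhaust the slack-producing success conditions of \textsc{Subsample} and \textsc{ColorTrial}, reduce to the situation where both $|N^*(v)|<\tfrac13 d_{G_0}(v)$ and $|N^\approx(v)|\le\tfrac13 d_{G_1}(v)$, and then split on whether $d_{G_2}(v)$ has dropped substantially (which manufactures slack directly) or not (which forces $N_{G_2}^+(v)\subseteq N^\approx(v)\cup H_U$). The differences are organizational. You case on \textsc{Subsample} first and \textsc{ColorTrial} second, while the paper reverses the order; and you split at $d_{G_2}(v)=\tfrac12 d_{G_1}(v)$ rather than the paper's $d_{G_2}(v)=\tfrac12 d_{G_0}(v)$. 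The paper's threshold makes the containment $H_U\subseteq N^*(v)$ immediate (since $u\notin N^*(v)$ gives $d_{G_1}(u)\le d_{G_0}(u)<3d_{G_0}(v)<6d_{G_2}(v)\le 6d_{G_1}(v)$), dispensing with your auxiliary estimate $d_{G_0}(v)\le 2d_{G_1}(v)$; on the other hand, your treatment of the large-degree-drop branch is cleaner than the paper's palette claim.

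There is a genuine gap exactly where you wave your hands. From your inclusions you obtain $|N_{G_2}^+(v)|\le \tfrac13 d_{G_1}(v)+\tfrac13 d_{G_0}(v)$, together with $d_{G_1}(v)<2d_{G_2}(v)$ and $d_{G_0}(v)\le 2d_{G_1}(v)<4d_{G_2}(v)$; this only yields $|N_{G_2}^+(v)|<2d_{G_2}(v)$, nowhere near the required $\tfrac23 d_{G_2}(v)$. Even sharpening the ratios as far as the hypotheses allow (taking the split threshold near $1$ to get $d_{G_1}(v)\approx d_{G_2}(v)$, and your nomination bound to get $d_{G_0}(v)\lesssim 1.5\,d_{G_1}(v)$) still leaves $\tfrac13 d_{G_1}(v)+\tfrac13 d_{G_0}(v)\gtrsim 0.83\,d_{G_2}(v)>\tfrac23 d_{G_2}(v)$. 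So the claim that ``the thresholds are calibrated precisely so that this bookkeeping closes'' is not borne out. The paper's own displayed inequality $d_{G_2}(v)-\tfrac13 d_{G_1}(v)-\tfrac13 d_{G_0}(v)\ge\tfrac13 d_{G_2}(v)$ suffers from the identical defect (it would force $d_{G_1}(v)+d_{G_0}(v)\le 2d_{G_2}(v)$, hence equality throughout). With the thresholds $\tfrac13$ in \Cref{def:fail-col} and \Cref{defi:fail-samp} the argument simply does not close; one needs smaller constants there (say $\tfrac16$ in place of $\tfrac13$), which the rest of the paper's analysis accommodates without difficulty.
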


\begin{proof}
Any node $v$ in $G_2$ must have been successful during both steps of \textsc{ColorTrial} and during \textsc{SubSample}, and also must not have been placed in $S$ or $L$ during \textsc{SubSample}. Since $v$ was not placed in $L$ during \textsc{SubSample}, we have $p_{G_1}(v)\ge C$. Note that palettes are not affected during \textsc{SubSample}, so $p_{G_2}(v)=p_{G_1}(v)\ge C$. Also, we may assume that $d_{G_2}(v)$ is at least  a sufficiently large constant; otherwise  $p_{G_2}(v)  \ge d_{G_2}(v)+\frac 14 d_{G_2}(v)^{0.9}$ holds (and we are done) since $p_{G_2}(v) \geq C$ for sufficiently large constant $C$.

%

We first show that, after \textsc{ColorTrial}, either $p_{G_1}(v)\ge 1.01 d_{G_1}(v)$, or $N^{\gtrsim}(v)<\frac 13 d_{G_0}(v)$. This follows from the definition of success (\Cref{def:fail-col}) during the coloring step of \textsc{ColorTrial}: either $N^{\gtrsim}(v)<\frac 13 d_{G_0}(v)$, or one of the three other conditions holds, all of which imply $p_{G_1}(v)\ge 1.01 d_{G_1}(v)$. Note that palettes are not affected during \textsc{SubSample}. If $p_{G_1}(v)\ge 1.01 d_{G_1}(v)$, then $p_{G_2}(v)=p_{G_1}(v) \ge  d_{G_2}(v) +0.01 d_{G_2}(v) \ge d_{G_2}(v)+\frac 14 d_{G_2}(v)^{0.9}.$    
So, we now consider nodes $v$ for which $N^{\gtrsim}(v)<\frac 13 d_{G_0}(v)$.

%
By the definition of success (\Cref{defi:fail-samp}) during \textsc{SubSample}, $v$ has  either has $p_{G_1}(v)\ge 1.1d_{G_1}(v)$ (in which case $p_{G_2}(v) \ge d_{G_2}(v) +\frac 14 d_{G_2}(v)^{0.9} $ by the same argument as above, so we already have sufficient slack and are done), $\frac14 d_{G_1}(v)^{0.9}$ neighbors joining $S$ (each of which generates $1$ slack in $G_2$, so we are again done), or $|N^\approx(v)| \le \frac 13 d_{G_1}(v)$. So, we now need only consider nodes for which $|N^\approx(v)| \le \frac 13 d_{G_1}(v)$ and $N^{\gtrsim}(v)<\frac 13 d_{G_0}(v)$.

\vspace{5pt}
Now we divide the analysis into two cases based on whether $d_{G_2}(v)> \frac12 d_{G_0}(v)$ or not. We will be done by showing that $|N_{G_2}^-(v)|\geq \frac 13 d_{G_2}(v)$ in the former case and   $p_{G_2}(v)\ge  d_{G_2}(v)+\frac 14 d_{G_2}(v)^{0.9}$ in the latter case. The intuition is that, if $v$'s degree has not dropped significantly after \textsc{ColorTrial} (which is effective for higher-degree neighbors) and \textsc{SubSample} (which is effective for similar-degree neighbors), then it must have had many lower-degree neighbors.

\paragraph{Case 1: $d_{G_2}(v) > \frac12 d_{G_0}(v)$.} 
We show that in this case, every neighbor $u\in G_2$ not in $N^{\gtrsim}(v)$ or $N^\approx(v)$ must be in $N_{G_2}^-(v)$, i.e., $d_{G_2}(u)< d_{G_2}(v)$. If $u\notin N^{\gtrsim}(v)$ (i.e., $d_{G_0}(u)< 3 d_{G_0}(v)$), then 
$d_{G_1}(u)\le  d_{G_0}(u)< 3d_{G_0}(v)< 6d_{G_2}(v) \leq 6 d_{G_1}(v).$ If $u \notin N^{\approx}(v)$, either $d_{G_1}(u)<\frac{1}{2}d_{G_1}(v)$ or $d_{G_1}(u)> 6 d_{G_1}(v)$. So, when $u\notin N^\approx(v) \cup N^{\gtrsim}(v)$, we have $d_{G_1}(u)<\frac{1}{2}d_{G_1}(v)$. This implies 
 $d_{G_2}(u) \le d_{G_1}(u) < \frac 12 d_{G_1}(v)\le \frac 12 d_{G_0}(v)< d_{G_2}(v)$, i.e., $u \in N_{G_2}^-(v)$. Hence,
\begin{align*}
    |N_{G_2}^-(v)| 
    &\ge
    |N_{G_2}(v)|-|N^{\approx}(v)|-|N^{\gtrsim}(v)| 
    \ge
    d_{G_2}(v) - \frac 13 d_{G_1}(v) - \frac 13 d_{G_0}(v) \ge \frac 13 d_{G_2}(v)
    \enspace.
\end{align*}

\paragraph{Case 2: $d_{G_2}(v)\le \frac12 d_{G_0}(v)$.} 
First, we give the following claim that bounds the palette size of nodes in $G_1$ in terms of palette size of nodes in $G_0$. 


\begin{claim}\label{lem:palettebound}
For all nodes $v\in G_1$, $p_{G_1}(v) \ge \frac 23 p_{G_0}(v)$.
\end{claim}

\begin{proof}
For a node $v$ to be in $G_1$, it must have been successful during \textsc{ColorTrial}. In particular, it must have had $|\mathrm{Nom}_v|\le \frac 14 d_{G_0}(v)+ p_{G_0}(v)^{0.7}$, since that was one of the success properties during the nomination step. The only way $v$ can lose some color $c$ from its palette between $G_0$ and $G_1$ is by having a nominated neighbor successfully coloring itself $c$. So, $p_{G_1}(v) \ge p_{G_0}(v)-  \frac 14 d_{G_0}(v)- p_{G_0}(v)^{0.7}$. Since we had $p_{G_0}(v)\ge C$ for sufficiently large constant $C$, we therefore get $p_{G_1}(v) \ge \frac 23 p_{G_0}(v)$.
\end{proof}

By \Cref{cl:cru} and $d_{G_2}(v)\le \frac12 d_{G_0}(v)$, we have 
$$
p_{G_1}(v) \ge \frac 23 p_{G_0}(v) \ge \frac 23 d_{G_0}(v) \ge \frac 43 d_{G_2}(v) \ge  d_{G_2}(v)+\frac 14 d_{G_2}(v)^{0.9}.
$$
Since palettes are not affected during \textsc{SubSample},  $p_{G_2}(v)= p_{G_1}(v) \geq  d_{G_2}(v)+\frac 14 d_{G_2}(v)^{0.9}.$
%
%
\end{proof}

\section{\textsc{BucketColor}}
\label{sec:buck}

In this section, we describe our core coloring procedure \textsc{BucketColor}$(G_2)$. Note that each node in the input graph $G_2$ to \textsc{BucketColor} has sufficient slack by \Cref{lem:properties}. Throughout this section, the graph under consideration is always $G_2$, so we omit the subscript $G_2$ from $N_{G_2}(v), d_{G_2}(v), N^+_{G_2}(v), d^+_{G_2}(v),N^-_{G_2}(v),$ $d^-_{G_2}(v), \Psi_{G_2}(v)$, $p_{G_2}(v)$ and $\Delta_{G_2}$.

In \Cref{subsec:bucket-structure}, we first formalize the bucket structure of nodes (as discussed in \Cref{subsec:techniques}), and then introduce some useful definitions. Then we describe algorithm \textsc{BucketColor} in \Cref{subsec:bucket-structure}. In Sections \ref{sec:color-bad}--\ref{sec:color-good}, we analyze the correctness of \textsc{BucketColor}.


\subsection{Assigning Nodes to Buckets}
\label{subsec:bucket-structure}

We use two special functions in the description of our algorithm in this section: $l: V(G_2) \rightarrow \mathbb{N}_{\geq 0}$ and $b:\mathbb{N} _{\geq 0}\rightarrow \mathbb{N}_{\geq 0}$. Here $l$ is defined as $l(v):=\max\{\lfloor \log _{1.1} \log _2 d(v)\rfloor,0\}$ for node $v$, and $b$ is  defined as  $b(i) :=  \lfloor 0.7\cdot 1.1^i \rfloor$ for $i \in \mathbb{N}_{\geq 0}$. If $d(v)$ is at least a suitable constant, then $b(l(v)) = \Theta( \log  d(v))$ and $b(l(v)) \leq 0.7 \log _2 d(v)$.

We consider a partition of the nodes of $G_2$ into $O(\log\log\Delta)$ levels, with the \emph{level} of a node $v$ equal to $l(v) = \max\{\lfloor \log _{1.1} \log _2 d(v)\rfloor,0\}$. The nodes of a particular level are further partitioned into buckets. The \emph{level of a bucket $x$} is the level of a candidate node that can be put into this bucket, and is denoted by $level(x)$. The buckets of $level$ $i$ (or $level$-$i$ buckets)  are identified by binary strings of length $b(i)$, where $i \in \mathbb{N}_{\geq 0}$, as well as their level. \footnote{Note that, at low levels, buckets in different levels can be identified by the same string, because the function $b(i)= \lfloor 0.7\cdot 1.1^i \rfloor$ is not injective for $i\le 24$. Therefore, for example, $b(0) = b(1) = 0$, and so levels $0$ and $1$ both in fact contain a single bucket specified by the empty string. We treat these as different buckets in order to conform to a standard rooted tree structure, and therefore must identify buckets by their level as well as their specifying string.} So, there are $2^{b(i)}$ $level$-$i$ buckets. To put a node $v$ into a bucket, (in our algorithm) we
generate a random binary string of length $b(l(v))$.

The set of buckets forms a hierarchical tree structure as described below. We say that a bucket $a'$ is a \emph{child} of $a$ (and $a$ is the \emph{parent} of $a'$) if $level(a') = level(a)+1$ and $a \sqsubseteq a'$. We say that $a'$ is a \emph{descendant} of $a$ (and $a$ is an \emph{ancestor} of $a'$) if $level(a') \ge level(a)$ and $a \sqsubseteq a'$ (note that by definition $a$ is a descendant and ancestor of itself). The buckets form a rooted tree structure: the root is the single level $0$ bucket, specified by the empty string; each bucket in level $i>0$ has one parent in level $i-1$ and multiple children in level $i+1$.

We also put colors into the buckets. For every color $c$, we put $c$ into a bucket $a$ of level $\lfloor \log _{1.1} \log _2 \Delta \rfloor$ by assigning it a binary string of length $b(\lfloor \log _{1.1} \log _2 \Delta \rfloor + 20)$. This will be the maximum level we will use in our algorithms and analysis, so these buckets can be considered leaf buckets. There are, therefore, 
\[O(2^{b(\lfloor \log _{1.1} \log _2 \Delta \rfloor + 20))} = O(2^{0.7\cdot1.1^{(\lfloor \log _{1.1} \log _2 \Delta \rfloor + 20)}})= 
O(2^{0.7\cdot \log_2\Delta \cdot 1.1^{20)}}) = O(\Delta^5)\] buckets in total. We also assign $c$ to all buckets on the path from $a$ to the root bucket (i.e., all ancestors of $a$). That is, $c$ is assigned to bucket $a'$ (and that bucket $a'$ contains $c$) iff $a' \sqsubseteq a$. 

Note that our algorithm uses a hash function to generate the binary strings (and hence the buckets) for the colors and nodes. In particular, our algorithm colors each node $u$ in bucket $x$ with a color present in some descendant bucket of $x$. Thus, an edge $\{u, w\}$, where neither $u$ is a descendant of $w$ nor $w$ is a descendant of $u$, can safely be removed from the graph. 

We argue that it suffices to color a set of reduced coloring instances, one per bucket of the original \DILC instance. For a bucket $x$, the corresponding reduced instance graph contains all the edges where one endpoint lies in bucket $x$ and the other in a descendant bucket of $x$. Additionally, the palette for a node in bucket $x$ is restricted to the colors present in the descendant buckets of $x$.

\begin{definition}
\label{def:hash-functions-notation}
Let $h:(\mathcal{C} \cup V(G_2))\rightarrow \{0,1\}^*$ be a function mapping colors and nodes to binary strings. Let $G^+_2$ be a directed graph with the same set of nodes as $G_2$. For each edge $\{u,w\}$ in $G_2$ such that $w$ is in some descendant bucket of $u$, i.e., $h(w) \sqsupseteq h(u)$ and $d(w)\geq d(u)$, do the following:   If $h(w) \sqsupset h(u)$, include $(u,w)$ as a directed edge in $G_2^+$. If $h(w) = h(u)$ (i.e., $u$ and $w$ are in the same bucket) and $d(w) > d(u)$, include $(u,w)$ as a directed edge in $G_2^+$.  If $h(w) = h(u)$ and $d(w) = d(u)$, include in $G_2^+$ whichever of $(u,w)$ or $(w,u)$ is directed toward the higher-ID endpoint.  For $x \in \{0,1\}^*$:
\begin{itemize}
    \item $G_{x}^+$ is the directed graph containing edges $(u,w)$ such that $(u,w)$ is in $G_2^+$ and $h(u) = x$, along with all nodes that are endpoints of such edges.  
    \item For each $v$ such that $h(v) = x$, define $\Psi_{x}(v) = \{c \in \Psi(v) : h(c) \sqsupseteq x \}$, \hfill {\small ($\Psi_{x}(v)$ is the set of palette colors $v$ has in descendant buckets of $x$)}.
\end{itemize}

\end{definition}



For each $v$ in bucket $x$, let $N^+_x(v)$ denote the set of neighbors of $v$ in $G^+_x$, and let $d_x^+(v) = \size{N^+_x(v)}$ denote the degree of $v$ in $G^+_x$. Additionally, as defined, $\Psi_{x}(v)$ represent the color palette of $v$ in the reduced instance, and let $p_{x}(v) = \size{\Psi_{x}(v)}$ denote the size of the color palette of $v$.

 
Notice that each edge $\{u, w\}$ (or $(u,w)$) is present in at most one reduced instance corresponding to a particular bucket. However, a node may appear in the reduced instances of multiple buckets. In such cases, each pair of these buckets satisfies an ancestor-descendant relationship. Thus, the reduced instances $G_{x}^+$ are not necessarily independent. Moreover, they can be of size $\omega(n)$. Additionally, it is possible that $G^+_{x}$ is not a valid \DILC instance. To handle the issue, we define the notion of \emph{bad nodes} in \Cref{defi:bad}. Intuitively, bad nodes are those who do not behave as expected when mapped to their bucket (e.g. have too many neighbors or too few colors therein), and we will show that the subgraphs of buckets restricted to good nodes are of size $O(n)$ and can be colored in $O(1)$ rounds. If we choose our hash function uniformly at random from a $100$-wise independent family of hash functions, the subgraph $G_{bad}$ (induced by the bad nodes) has size $O(n)$ in expectation. We also show that it is possible to choose a hash function deterministically in $O(1)$ rounds such that the size of $G_{bad}$ is indeed $O(n)$.

\begin{definition}
\label{defi:bad}
Given a hash function $h:(\mathcal{C} \cup V(G_2))\rightarrow \{0,1\}^*$ mapping colors and nodes to binary strings, define a node $v$ in $G_2$ to be \textbf{bad} if at least one of the following occur:
\begin{enumerate}
\item[(1)] More higher degree neighbors in the descendant buckets: $d^+_{h(v)}(v) \ge d^+(v) 2^{-b(l(v))} + \frac 18 d(v)^{0.9}2^{-b(l(v))}$;
\item[(2)] Few colors in the descendant buckets: $p_{h(v)}(v)\le p(v) 2^{-b(l(v))}  -\frac 18 d(v)^{0.9}2^{-b(l(v))}$;
\item[(3)] $v$ has at least one level $l(v)+20$ descendant bucket containing more than one of $v$'s palette colors;
\item[(4)] more than $2n2^{-b(l(v))}$ nodes $v'$ have $h(v)=h(v')$, i.e., in the same bucket as of $v$ 
\end{enumerate}
$v$ is said to be a \textbf{good} if it satisfies none of the above four conditions.
\end{definition}

(1) and (2) 
ensure that each reduced instance (after removing the bad nodes) is a valid \DILC instance; (3) ensure that the dependencies among the reduced instances are limited; and (4) when combined with (1) ensures that each $G_x^+$ is of size $O(n)$.

Now we are ready to discuss our algorithm \textsc{BucketColor}. The randomized procedure on which \textsc{BucketColor} is based is Algorithm~\ref{alg:bucketcolor}. Note that only line 1 of Algorithm~\ref{alg:bucketcolor} is a randomized step, and it can be derandomized by replacing its random choices with choices determined by a hash function from a $101$-wise independent hash family $[n^{O(1)}]\rightarrow [n^{O(1)}]$. The subgraph induced by the bad nodes, $G_{bad}$, is deferred to be colored later. Then in Lines 3 to 11, \textsc{BucketColor} colors the (good) nodes in $G_2 \setminus G_{bad}$ in $O(1)$ rounds deterministically.

\begin{algorithm}[h]
\caption{\textsc{BucketColor$(G_2)$} --- Randomized Basis}
\label{alg:bucketcolor}
Each node $v$ uniformly randomly chooses a $b(l(v))$-bit binary string $h(v)$, and each color is uniformly randomly assigned a $b(\lfloor \log _{1.1}\log _2\Delta\rfloor+ 20 )$-bit binary string $h(c)$.
	
Each node $v$ decides whether it is bad or good according to \Cref{defi:bad}. Let $G_{bad}$ be the subgraph induced by the bad nodes.
 
\For{$O(1)$ iterations}
{
		Each node $v\in G_2 \setminus G_{bad}$ restricts its palettes to colors $c$ with $h(v)\sqsubseteq h(c)$, i.e., $\Psi_{h(v)}(v)\gets\{c \in \Psi(v):h(v)\sqsubseteq h(c)\}$.
		
		For each $i \in [\lfloor \log_{1.1} \log _2 \Delta\rfloor + 20]$ and each string $x \in \{0,1\}^{b(i)}$, collect the graph $G_x^+$ to a dedicated network node $node_x$.
		
		For each node $v\in G_2 \setminus G_{bad}$ in a bucket $h(v)$, in non-increasing order of degree (and decreasing order of ID for nodes of equal degree), performed on $node_{h(v)}$, $h(v)\gets h^*$, where $h^*\sqsupseteq h(v)$ is a child bucket of $h(v)$ with $d_{h^*}^+(v)<p_{h^*}(v)$.
}
Color each node $v \in G_2 \setminus G_{bad}$ with the only palette color in its bucket.

Update the palettes of $G_{bad}$, collect to a single node, and color sequentially.
\end{algorithm}

\paragraph{Overview of coloring good nodes:} 
To color the (good) nodes in $G_2 \setminus G_{bad}$, we proceed in the following fashion: firstly, the hash function specifying the bucket for each color and the initial bucket for each node, $h()$, is chosen via the method of conditional expectations to ensure that the graph of bad nodes (by \Cref{defi:bad}) is of size $O(n)$. The algorithm then goes through iterations, and the bucket status of the nodes change over iterations (i.e. we will change the buckets $h(v)$ of nodes from the original one determined by the hash function).
 
In every iteration, for every node $v$, we restrict the color palettes of $v$ to the colors present in the descendant bucket of (current) $h(v)$. Then, for every binary string $x$ such that the bucket $x$ has at least one node, we gather the induced graph of edges with one endpoint in bucket $x$ and the other endpoint in some descendant bucket of $x$ (i.e., $G^+_{x}$) to a designated network node in $O(1)$ rounds. Note that this graph is of size $O(n)$. Though all graphs $G_x^+$ are valid \DILC instances in every iteration, the $G^+_x$ are not necessarily independent: there can be an edge between a node $v$ in $G^+_x$ and a node $w$ outside $G^+_x$ such that the current palettes of $v$ intersects with the current palette of $w$. We can show that every node $v$ satisfies $d_{h(v)}^+(v)<p_{h(v)}(v)$ in the first iteration --- i.e., each node has enough colors in its bucket to be greedily colored in  non-increasing order of degree (and decreasing order of ID between same-degree nodes). That is, (in the first iteration) each graph $G_x^+$ is a valid \DILC instance.
 
In each iteration, for every bucket $x$ in parallel, the designated network node $node_x$ will move each node $v$ with $h(v)=x$ down to a child bucket $h^*$ of its current bucket $h(v)$, in such a way that we maintain this colorability property $d_{h^*}^+(v)<p_{h^*}(v)$ (having more colors in the palette than higher degree neighbors in descendant buckets). It does so in non-increasing order of degree (and decreasing order of ID between same-degree nodes). Due to this ordering, when $v$ is moved to a child bucket, the network node $node_{h(v)}$ performing the computation knows how many of $v$'s higher-degree (those in $N^+_{G^+_{h(v}}(v)$) are in descendants of each such child bucket. This is because, for same-level neighbors in the same bucket, $node_{h(v)}$ is also responsible for those neighbors and has placed them in child buckets first. $v$'s higher-level neighbors are already in a child bucket of $h(v)$ or descendant thereof, and their bucket at the start of this iteration was part of the information collected to $node_{h(v)}$ - $node_{h(v)}$ does not need to know which bucket they are moved to (by a different network node in parallel) \emph{this iteration}, and so this bucket-updating process can be correctly done in parallel.

This will imply that, when we find graphs $G_x^+$ in the next iteration,
those are also valid \DILC instances. We will show that after $O(1)$ iterations, each node has only $1$ palette color in its bucket (and therefore zero higher-degree neighbors in descendant buckets, since $d_{h^*}^+(v)<p_{h^*}(v)$). At this point, nodes can safely color themselves the single palette color in their bucket (see Line 12 of \Cref{alg:bucketcolor}). To decide on child buckets for nodes, it is essential that each $G^+_x$ will always fit onto a single network node (which is in fact the case).
 
To prove the correctness of \textsc{BucketColor} formally, we give \Cref{lem:bad,lem:bucket_color_20_iterations}, which jointly imply \Cref{lem:evencolor}. 

\begin{lemma}
\label{lem:bad}
All network nodes can simultaneously deterministically choose a hash function (in line 1 of \textsc{BucketColor}) such that the size of $G_{bad}$ is $O(n)$.
\end{lemma}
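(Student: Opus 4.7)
My plan is to apply the method of conditional expectations of \Cref{subsec:method_of_conditional_expectations} to the cost function
\[
F(h) \;:=\; \sum_{v\in G_2} d(v)\cdot \mathbf{1}[v \text{ is bad under } h],
\]
which upper-bounds $|G_{bad}|$ since every edge of $G_{bad}$ is charged at most twice. It suffices to prove $\mathbf{E}_{h\in\mathcal{H}}[F(h)] = O(n)$ when $h$ is drawn uniformly from a $101$-wise independent family $\mathcal{H}$ of hash functions as guaranteed by \Cref{lem:families_of_hash_functions}; the derandomization machinery of \Cref{subsec:method_of_conditional_expectations} then produces a deterministic $h^*$ with $F(h^*) = O(n)$ in $O(1)$ rounds.

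To bound $\mathbf{E}[F]$ I will show $\Prob{v\text{ is bad}} = O(1/d(v))$ for every $v\in G_2$, which by linearity of expectation gives $\mathbf{E}[F] = O(n)$. I would handle the four conditions of \Cref{defi:bad} separately via a union bound. For conditions (1) and (2), both $d^+_{h(v)}(v)$ and $p_{h(v)}(v)$ are sums of $100$-wise independent indicators (one per higher-degree neighbor and palette color, respectively), each succeeding with probability $2^{-b(l(v))}$; so I apply \Cref{cor:conc} with mean $\mu \le d(v)\cdot 2^{-b(l(v))}$ and deviation $A = \tfrac18 d(v)^{0.9}\cdot 2^{-b(l(v))}$, and use $2^{b(l(v))} \le d(v)^{0.7}$ to obtain $111\mu/A^2 = O(d(v)^{-0.1})$, giving a failure probability of $d(v)^{-\Omega(1)}$. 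For condition (3), pairwise independence (implied by $101$-wise independence) shows that two fixed colors of $\Psi(v)$ collide in a common level-$(l(v)+20)$ bucket with probability $2^{-b(l(v)+20)}$; since $b(l(v)+20)/b(l(v)) = 1.1^{20}$, a union bound over the $O(d(v)^2)$ pairs is polynomially small in $d(v)$. For condition (4), I apply \Cref{cor:conc} to the number of nodes $v'\neq v$ with $h(v')=h(v)$ (mean $\mu \le n\cdot 2^{-b(l(v))}$, deviation $A = n\cdot 2^{-b(l(v))}$), with the bound staying polynomially small thanks to the standing assumption $\Delta_G = O(\sqrt n)$.

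Finally, to invoke \Cref{subsec:method_of_conditional_expectations}, each network node must locally evaluate $f(h,v)$ for every $h\in\mathcal{H}$. Since $h$ is specified by an $O(\log n)$-bit seed shared by all nodes, and since $v$ knows its own palette, its neighbors, and their degrees (obtainable in one round of all-to-all exchange), conditions (1)--(3) are computable from local information; for condition (4) node $v$ simply applies the fully-known $h$ to every $v'\in V$. The seed is then fixed in $O(1)$ batches of $\log n$ bits via Lenzen's routing, yielding $h^*$ with $|G_{bad}(h^*)| = O(n)$. The main delicacy is the bookkeeping in the probability bounds: verifying that the $\mu\ge 1000$ hypothesis of \Cref{cor:conc} is met in each of the four conditions across the full range of relevant degrees, and handling the low-mean boundary regime either by a direct deterministic argument or by appealing to the fact that such nodes have constant degree and would already have been moved to $L_0$ by step \ref{step2-alg:mainrand} of \textsc{Color}.
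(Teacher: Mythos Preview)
Your approach is essentially identical to the paper's: the same cost function $F(h)=\sum_v d(v)\mathbf 1[v\text{ bad}]$, the same four-way case split via \Cref{defi:bad}, the same applications of \Cref{cor:conc}, and the same method-of-conditional-expectations finish. The paper in fact proves the slightly stronger bound $\Prob{v\text{ bad}}=O(d(v)^{-2})$ (their \Cref{lem:badness}), but your target $O(1/d(v))$ is already enough.

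One genuine gap: for conditions (2) and (3) you tacitly assume $p(v)=O(d(v))$ --- you write ``mean $\mu\le d(v)\cdot 2^{-b(l(v))}$'' for the palette sum and ``$O(d(v)^2)$ pairs'' for the collision bound. In $G_2$ this need not hold: nodes can have lost many neighbors to $F$, $S$, or $L$ while keeping most of their palette, so $p_{G_2}(v)$ can be much larger than $d_{G_2}(v)$, and then $111\mu/A^2$ is no longer $O(d(v)^{-0.1})$. The paper handles this by first truncating every palette down to size at most $1.25\,d(v)$, noting that the slack guarantee of \Cref{lem:properties} survives this truncation; you should add that step. A smaller point: for condition~(4) you do not need $\Delta_G=O(\sqrt n)$, since $2^{b(l(v))}\le d(v)^{0.7}\le n^{0.7}$ already suffices.
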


\begin{lemma}
\label{lem:bucket_color_20_iterations}
After $20$ iterations of the outer for-loop of \textsc{BucketColor}, all nodes in $G_2\setminus G_{bad}$ can be colored without conflicts.
\end{lemma}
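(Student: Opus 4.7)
The plan is to prove the lemma by tracking an invariant through the outer iterations: at the end of iteration $t$, every good node $v$ resides in a bucket $h(v)$ at level $l(v) + t$, and satisfies $d^+_{h(v)}(v) < p_{h(v)}(v)$. This guarantees that, as soon as $v$ reaches level $l(v)+20$, goodness condition (3) forces $p_{h(v)}(v) \le 1$, the invariant forces $p_{h(v)}(v)=1$ and $d^+_{h(v)}(v)=0$, and each good node can safely take the unique palette color remaining in its bucket.

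For the base case (before any iteration), I would combine goodness conditions (1) and (2) with Lemma~\ref{lem:properties}. Condition (1)--(2) give bounds of the form $d^+_{h(v)}(v) \le (d^+(v) + \tfrac{1}{8}d(v)^{0.9}) \cdot 2^{-b(l(v))}$ and $p_{h(v)}(v) \ge (p(v) - \tfrac{1}{8}d(v)^{0.9}) \cdot 2^{-b(l(v))}$. Lemma~\ref{lem:properties} yields $p(v) - d^+(v) \ge \tfrac{1}{4}d(v)^{0.9}$ in either of its two cases (directly in the first case; in the second, $|N^-(v)|\ge \tfrac{1}{3}d(v)$ gives $d^+(v)\le \tfrac{2}{3}d(v)$, and $p(v)\ge d(v)+1$ then leaves ample slack). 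Substituting into the two goodness bounds delivers $p_{h(v)}(v) - d^+_{h(v)}(v) > 0$.

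For the inductive step, I would show that when processing $v$ in its current bucket $x_v$ during iteration $t+1$, a child bucket $h^*$ with $d^+_{h^*}(v) < p_{h^*}(v)$ is guaranteed to exist by an averaging argument. Summing over the children $h^*$ of $x_v$: each palette color in $\Psi_{x_v}(v)$ lies in the leaf-subtree of exactly one child, so $\sum_{h^*} p_{h^*}(v) = p_{x_v}(v)$; and each higher-degree neighbor $w$ of $v$ with $h(w)\sqsupseteq x_v$ at end of iteration $t$ will, at end of iteration $t+1$, lie in the subtree of exactly one child of $x_v$ (moves only go to children of the current bucket, so the ``owning'' child of $x_v$ is already determined before iteration $t+1$ for strict descendants of $x_v$, and is chosen by $w$'s own move for the case $h(w)=x_v$). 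Hence $\sum_{h^*} d^+_{h^*}(v) = d^+_{x_v}(v) < p_{x_v}(v) = \sum_{h^*} p_{h^*}(v)$, giving some child with strict inequality. The algorithm can actually compute these quantities at processing time: palette counts $p_{h^*}(v)$ are determined by the fixed hash on colors; and $d^+_{h^*}(v)$ is determined by the already-fixed $h$-values of $w\in N^+(v)$ strictly below $x_v$ together with the $h$-values of same-bucket higher-degree $w$, which are known since the bucket is processed in non-increasing degree order.

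Finally, conflict-freeness at the end: suppose adjacent $u,v$ (WLOG $d(u)\le d(v)$) are both assigned color $c$, whose (unique) leaf bucket is $z$. Then $h_{end}(u),h_{end}(v)\sqsubseteq z$, and since $l(u)\le l(v)$ the two final buckets lie on the same root-to-$z$ path, giving $h_{end}(u)\sqsubseteq h_{end}(v)$. But then $v\in N^+(u)$ satisfies $h(v)\sqsupseteq h_{end}(u)$ at the end of iteration $20$, so $d^+_{h_{end}(u)}(u)\ge 1$, contradicting $d^+_{h_{end}(u)}(u)=0$. The main obstacle I expect is the careful bookkeeping in the inductive step, namely justifying the identity $\sum_{h^*} d^+_{h^*}(v) = d^+_{x_v}(v)$ under the mid-iteration semantics of the algorithm, and arguing rigorously that all quantities needed for the decision are available at the moment $v$ is processed despite the concurrent updates happening across different buckets.
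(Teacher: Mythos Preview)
Your proposal is correct and follows essentially the same approach as the paper. The paper factors the invariant maintenance (your base case and inductive step) into a separate lemma (\Cref{lem:inetr-bucketcolor}), whose parts (2) and (3) are exactly your invariant $d^+_{h(v)}(v)<p_{h(v)}(v)$ and your averaging argument for the existence of a suitable child bucket; the paper then concludes \Cref{lem:bucket_color_20_iterations} from goodness property~(4) just as you do, and your explicit conflict-freeness argument is simply a spelled-out version of the paper's remark that edges not between ancestor--descendant bucket pairs can be discarded. The bookkeeping obstacle you flag (the mid-iteration semantics of $\sum_{h^*} d^+_{h^*}(v)$) is handled in the paper precisely as you outline: same-level higher-degree neighbors have already been moved when $v$ is processed, and higher-level neighbors, being in strictly deeper buckets, already determine their owning child of $x_v$ at the start of the iteration and remain in that child's subtree after their own concurrent move.
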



\begin{lemma}
\label{lem:evencolor}
\textsc{BucketColor} successfully colors graph $G_2$ in $O(1)$ rounds.
\end{lemma}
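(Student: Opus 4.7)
The plan is to assemble Lemma~\ref{lem:bad} and Lemma~\ref{lem:bucket_color_20_iterations} into a full execution of \textsc{BucketColor} and to verify that each step runs in $O(1)$ rounds in \CONGESTEDC.

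First, I would apply Lemma~\ref{lem:bad} to deterministically fix, in $O(1)$ rounds via the method of conditional expectations of \Cref{subsec:method_of_conditional_expectations}, a hash function $h$ for Line~1 of \Cref{alg:bucketcolor} that satisfies $|G_{bad}|=O(n)$. Each node can then locally decide whether it is bad by inspecting the four conditions of \Cref{defi:bad}, as the relevant quantities depend only on its own palette, its $1$-hop neighborhood, and the hash values of its palette colors, all of which are known to it after the hash function is broadcast.

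Next I would handle the outer for-loop for good nodes. A single iteration consists of three substeps: (i) each good node locally restricts its palette to colors placed into a descendant bucket of its current bucket, requiring no communication; (ii) for each nonempty bucket $x$, the reduced instance $G_x^+$ is routed to its dedicated network node $node_x$; and (iii) $node_x$ processes the good $v$ with $h(v)=x$ in non-increasing order of degree and picks, for each, a child bucket $h^*$ satisfying $d_{h^*}^+(v)<p_{h^*}(v)$, which is precisely the invariant Lemma~\ref{lem:bucket_color_20_iterations} maintains, then broadcasts the new assignments. Only substep~(ii) requires bandwidth. Conditions~(4) and~(1) of \Cref{defi:bad} together imply that $|G_x^+|=O(n)$ for every bucket (the per-bucket population shrinks by a factor $2^{b(l(v))}$ while each node contributes $O(d(v)\cdot 2^{-b(l(v))} + d(v)^{0.9}\cdot 2^{-b(l(v))})$ edges), and each good $v$ sends only $O(d(v)) \le O(\Delta_G) = O(\sqrt{n})$ edge-messages all destined for a single node, so the collection of all $G_x^+$'s can be realized in $O(1)$ rounds by Lenzen's routing algorithm~\cite{Lenzen13}. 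Since the loop runs for $20=O(1)$ iterations, the entire loop costs $O(1)$ rounds.

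After the loop, Lemma~\ref{lem:bucket_color_20_iterations} guarantees that each good node has a unique surviving palette color in its bucket and no remaining higher-degree bucket-descendants, so Line~12 simultaneously colors all of $G_2\setminus G_{bad}$ without conflict. The remaining problem on $G_{bad}$ is itself a valid \DILC instance by the self-reducibility of \DILC, and since $|G_{bad}|=O(n)$, its updated palettes and incidences can be gathered to a single network node in $O(1)$ rounds via Lenzen's routing and colored greedily in-node. The most delicate point is substep~(ii): one must simultaneously bound per-bucket edge counts and per-node send/receive volumes to invoke Lenzen's algorithm, and both bounds follow from the carefully tuned size thresholds in conditions~(1) and~(4) of \Cref{defi:bad}.
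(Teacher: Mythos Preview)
Your proposal is correct and follows essentially the same route as the paper: invoke \Cref{lem:bad} to fix $h$ with $|G_{bad}|=O(n)$, run the loop for $20$ iterations, apply \Cref{lem:bucket_color_20_iterations} for the final coloring of good nodes, and finish $G_{bad}$ on one network node. The only difference is packaging: the paper delegates the ``each iteration takes $O(1)$ rounds'' claim to \Cref{lem:inetr-bucketcolor}(1),(3), whereas you re-derive the $|G_x^+|=O(n)$ bound inline from conditions~(1) and~(4) of \Cref{defi:bad}. Note that those conditions speak only about the \emph{initial} bucket $h(v)$, so your inline argument literally covers only iteration~1; for iterations $r>1$ you need the one-line observation (the inductive step of \Cref{lem:inetr-bucketcolor}(1)) that $G_x^+$ at iteration $r{+}1$ is a subgraph of $G_{x'}^+$ at iteration $r$, where $x'$ is the parent bucket of $x$. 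With that added, your argument and the paper's coincide.
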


The proof of \Cref{lem:bad} is presented in \Cref{sec:color-bad} and the proofs of \Cref{lem:bucket_color_20_iterations,lem:evencolor} are in \Cref{sec:color-good}.

\subsection{Size of $G_{bad}$}\label{sec:color-bad}

To bound the size of $G_{bad}$ in expectation, consider the following lemma that says that any particular node is bad with low probability.

\begin{lemma}\label{lem:badness}
If $h:(\mathcal{C} \cup V(G_2))\rightarrow \{0,1\}^{O(\log n)}$ is chosen uniformly at random from a $100$-wise independent hash family $\mathcal H$ of functions $[n^{O(1)}]\rightarrow [n^{O(1)}]$, then the probability that any fixed node $v$ is bad is $O(d(v)^{-2})$.
\end{lemma}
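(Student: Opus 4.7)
\emph{Proof plan.} My plan is to union-bound over the four conditions of \Cref{defi:bad}, showing each fails with probability $O(d(v)^{-2})$. Throughout I will exploit that, by construction of $b$ and $l$, we have $2^{b(l(v))} = \Theta(d(v)^{0.7})$, and I may assume $d(v)$ exceeds any required constant since otherwise the bound $O(d(v)^{-2}) = \Theta(1)$ is trivial.

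For conditions (1), (2), and (4), I apply \Cref{cor:conc} to appropriate sums of indicator variables. For (1), conditioning on $h(v)$, I let $X_u := \mathbf{1}[h(u) \sqsupseteq h(v)]$ for $u \in N^+(v)$; since $l(u) \ge l(v)$, this is $1$ iff the first $b(l(v))$ bits of $h(u)$ equal $h(v)$, an event of probability $2^{-b(l(v))}$, and the family $\{X_u\}$ is $100$-wise independent (one slot of the $101$-wise independence of $h$ is consumed by conditioning on $h(v)$). Their sum is $d^+_{h(v)}(v)$ with mean $\mu = d^+(v) \cdot 2^{-b(l(v))}$; with slack $A = \tfrac{1}{8} d(v)^{0.9} \cdot 2^{-b(l(v))}$ a short calculation yields $\mu/A^2 \le 64/d(v)^{0.1}$, so the tail is $O(d(v)^{-5})$. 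Condition (2) is analogous with indicators over palette colors, using the implicit upper bound on $p(v)$ in $G_2$. For (4), the indicators $\mathbf{1}[h(v') = h(v)]$ over the at most $n$ other level-$l(v)$ nodes give $\mu \le n \cdot 2^{-b(l(v))}$ matched by slack $A \ge \mu$; then $\mu/A^2 \le 2^{b(l(v))}/n \le n^{-0.3}$, so the tail is $O(n^{-15}) = O(d(v)^{-2})$ since $d(v) \le n$.

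For condition (3), I use a pairwise union bound. Pairwise independence of $h$ implies that any two distinct palette colors $c, c' \in \Psi(v)$ collide in a level-$(l(v)+20)$ bucket with probability $2^{-b(l(v)+20)}$. Since $1.1^{20} > 6.7$ and $1.1^{l(v)} \ge (\log_2 d(v))/1.1$, we get $b(l(v)+20) \ge 4.28 \log_2 d(v) - O(1)$, so this probability is $O(d(v)^{-4.28})$; summing over $\binom{p(v)}{2} = O(d(v)^2)$ pairs yields $O(d(v)^{-2.28})$. A final union bound over the four conditions completes the argument.

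The main obstacle will be the regimes where the mean $\mu$ in (1), (2), or (4) falls below the $\mu \ge 1000$ hypothesis of \Cref{cor:conc}: there one must revert to the full bound \Cref{lem:conc} and verify that the $k^2$ additive term in the numerator remains dominated by $A^2$ for the relevant slack. A secondary care point is that the condition-(2) calculation depends on an upper bound on $p(v)$ that is not explicitly stated in \Cref{lem:properties} but follows from the evolution of palette sizes through the earlier phases of the algorithm; one has to verify this upper bound is strong enough (roughly $p(v) = O(d(v)^{1.06})$) for the tail bound to reach $O(d(v)^{-2})$.
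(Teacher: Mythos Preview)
Your overall plan matches the paper's proof exactly: union-bound over the four conditions in \Cref{defi:bad}, apply \Cref{cor:conc} for (1), (2), (4), and a pairwise-collision union bound for (3). The computations you sketch for (1), (3), (4) are essentially those in the paper.

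The one genuine gap is the upper bound on $p(v)$ you need in conditions (2) and (3). Your suggested resolution---that the earlier phases of the algorithm implicitly force $p_{G_2}(v)=O(d_{G_2}(v)^{1.06})$---does not hold. A node can lose arbitrarily many neighbors to $L_0$, $F$, $S$, or $L$ without any of them taking a color, so $d_{G_2}(v)$ may be tiny while $p_{G_2}(v)$ remains close to the original $p_G(v)$; there is simply no implicit upper bound of the form you want. The paper's fix is algorithmic rather than analytic: before \textsc{BucketColor}, one \emph{truncates} each palette down to $p(v)\le 1.25\,d(v)$. This preserves the slack dichotomy of \Cref{lem:properties} (if $p(v)$ was already $\ge d(v)+\tfrac14 d(v)^{0.9}$ it still is after truncation, since $1.25\,d(v)\ge d(v)+\tfrac14 d(v)^{0.9}$ for large $d(v)$), and it gives you $\mu_p\le 1.25\,d(v)\,2^{-b(l(v))}$ for (2) and $\binom{p(v)}{2}=O(d(v)^2)$ for (3). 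Without this truncation your bound for (2) breaks down, and your claim $\binom{p(v)}{2}=O(d(v)^2)$ in (3) is unjustified.

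A smaller remark on the $\mu<1000$ obstacle: reverting to \Cref{lem:conc} does not help, since that lemma also carries the $\mu\ge 1000$ hypothesis. The paper sidesteps the issue by arguing that nodes with small $\mu$ (in events (1), (2), (4)) collectively induce an $O(n)$-size subgraph and can be absorbed into $G_{bad}$ regardless; this is adequate for the downstream use of the lemma in \Cref{lem:bad}, even if it does not literally prove the $O(d(v)^{-2})$ bound for those nodes.
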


\begin{proof}
Note that a node $v$ can be bad if one of four conditions mentioned in \Cref{defi:bad} hold. We show that each of them occur with probability $O(d(v)^{-2})$. We can assume throughout that $d(v)$ is at least some sufficiently large constant $C$, since otherwise every probability is trivially $O(d(v)^{-2})$.
	
	
\paragraph{Event 1: $d^+_{h(v)}(v) \ge d^+(v) 2^{-b(l(v))} + \frac 18 d(v)^{0.9}2^{-b(l(v))}$.}
For a vertex $u \in N^+(v)$, let $X_u$ be the indicator random variable that takes value $1$ or $0$ depending on whether $u$ is present in some descendant bucket of $v$ or not, i.e., $u \in N^+_{h(v)}(v)$). So, $d^+_{h(v)}(v)=\size{N^+_{h(v)}}=\sum_{u \in N^+(v)}X_u$.
	
Observe that $\Prob{X_u=1}=\Prob{h(u) \sqsupseteq h(v)}=2^{-b(\ell(v))}$, as $h(v)$ is a binary string of length $b(l(v))$ and $h(u)$ is a binary string of length at least that of $h(v)$. Hence, the expected value of $d^+_{h(v)}(v)$ is $\mu_d= \Exp{d^+_{h(v)}(v)}= d^+(v) 2^{-b(l(v))}$. Moreover, the set of random variables $\{X_u: u\in N^+(v)\}$ are $100$-wise independent as $h$ is chosen from $100$-wise independent hash family.

By \Cref{lem:conc} (and since we can assume $d(v)$ is at least a sufficiently large constant),

\begin{align*}
    \Prob{|d^+_{h(v)}-\mu_d| \ge \frac 18 d(v)^{0.9}2^{-b(l(v))}}&\le 8\left(\frac{100\mu_d+10000}{ \frac {1}{64} d(v)^{1.8}2^{-2b(l(v))}}\right)^{50}\\
    &\le 8\left(\frac{6400 d^+(v) 2^{-b(l(v))}+640000}{  d(v)^{1.8}2^{-2b(l(v))}}\right)^{50}\\
    &\le 8\left(\frac{6400 \cdot  2^{b(l(v))}+d(v)^{-0.9}2^{2b(l(v))}}{  d(v)^{0.8}}\right)^{50}\\
    &\le 8\left(\frac{6400 \cdot  d(v)^{0.7}+d(v)^{-0.5}}{  d(v)^{0.8}}\right)^{50}&\text{using $b(l(v))\le 0.7\log d(v)$}\\
    &= O\left(d(v)^{-0.1}\right)^{50}\\
    &=O(d(v)^{-5})
    \enspace.
\end{align*}

	
\paragraph{Event 2: $p_{h(v)}(v)\le p(v) 2^{-b(l(v))}  -\frac 18 d(v)^{0.9}2^{-b(l(v))}$.}
The proof is very similar to that of Event 1. Since the probability a color $c$ of $v$'s palette is present in some descendant bucket of $v$ is $\Prob{h(c) \sqsupseteq h(v)}=2^{-b(\ell(v))}$, the expected value of $p_{h(v)}(v)$ is $\mu_p=\Exp{p_{h(v)}(v)}=p(v)  2^{-b(\ell(v))} $. We may also assume that $p(v) \leq 1.25 d(v)$, which implies $\mu_p \leq 1.25d(v) 2^{-b(\ell(v))}$, because of the following. Recall that, for each $v$ in $G_2$, either $p(v)\geq d(v)+\frac{1}{4}d(v)^{0.9}$ or $N^-(v)\leq \frac{1}{3}d(v)$. So, observe that we can remove the excess colors from the palettes of vertices of $G_2$ such that $p(v)\leq 1.25 d(v)$ for each $v$ in $G_2$, and without breaking the desired property of $G_2$, i.e., either $p(v)\geq d(v)+\frac{1}{4}d(v)^{0.9}$ or $N^-(v)\leq \frac{1}{3}d(v)$.

By \Cref{lem:conc} (and since we can again assume $d(v)$ is at least a sufficiently large constant),

\begin{align*}
    \Prob{|p_{h(v)}-\mu_p| \ge \frac 18 d(v)^{0.9}2^{-b(l(v))}}&\le 8\left(\frac{100\mu_p+10000}{ \frac {1}{64} d(v)^{1.8}2^{-2b(l(v))}}\right)^{50}\\
    &\le 8\left(\frac{6400 \cdot 1.25d(v) 2^{-b(\ell(v))}+640000}{  d(v)^{1.8}2^{-2b(l(v))}}\right)^{50}\\
    &\le 8\left(\frac{8000 \cdot  2^{b(l(v))}+d(v)^{-0.9}2^{2b(l(v))}}{  d(v)^{0.8}}\right)^{50}\\
    &\le 8\left(\frac{8000 \cdot  d(v)^{0.7}+d(v)^{-0.5}}{  d(v)^{0.8}}\right)^{50}&\text{using $b(l(v))\le 0.7\log d(v)$}\\
    &= O\left(d(v)^{-0.1}\right)^{50}\\
    &=O(d(v)^{-5})
    \enspace.
\end{align*}
	
\paragraph{Event 3: any of $v$'s level $l(v)+20$ descendant buckets have more than one of $v$'s palette colors.} 
Let us consider a color $c \in \Psi(v)$ (present in the palette of $v$) and a descendant bucket $x$ of $v$ such that the level of $x$ is $l(v)+20$. The probability that $x$ contains color $c$ is equal to $\Prob{h(c) \sqsupseteq x}=2^{-b(l(v)+20)}$, as $x$ is binary string of length $b(l(v)+20)$ and $h(c)$ is binary string of length $b(\lfloor \log _{1.1}\log _2\Delta\rfloor+ 20)\geq b(l(v)+20)$. The probability that two particular colors of $v's$ palette are present in bucket $x$ is $2^{-2b(l(v)+20)}$. This follows since  the hash function $h$ is chosen uniformly at random from a $100$-wise independent hash family.
	
Hence, the probability that a particular one of $v$'s level $l(v)+20$ descendant buckets have more than one of $v$'s palette colors is at most $\binom{p(v)}{2} 2^{-2b(l(v)+20)}$. Taking a union bound over all $2^{b(l(v)+20)}$ such descendant buckets of level $l(v)+20$, the probability of any such bucket of level $l(v)+20$ containing more than one of $v$'s colors is at most
\begin{align*}
    \binom{p(v)}{2} 2^{-2b(l(v)+20)} \cdot 2^{b(l(v)+20)}
        =
    \binom{p(v)}{2} 2^{-b(l(v)+20)}
	   <
    \frac{p(v)^2}{2}\cdot  2^{-b(\log_{1.1} \log_2 p(v)+19)}\\
	   <
    \frac{p(v)^2}{2}\cdot   2^{-0.7\cdot 1.1^{\log_{1.1} \log_2 p(v)+19}}
        <
    p(v)^2\cdot   2^{-4 \log_2 p(v)}
        =
    p(v)^{-2} \leq d(v)^{-2}
    \enspace.
\end{align*}
	
\paragraph{Event 4: bucket $h(v)$ contains more than $2n 2^{-b(l(v))}$ nodes.}
For a node $u\ne v$ of level $l(v)$, the probability of $u$ being in the bucket of $v$ is $\Prob{h(u)=h(v)}=2^{-b(l(v))}$. Therefore, the expected number of nodes present in bucket $h(v)$ is $\mu \le n2^{-b(l(v))}$.  We assume that $\mu \ge 1000$, since otherwise the subgraph of $G$ induced by all such nodes must be of size $O(n)$  and would have been collected in $O(1)$ rounds and solved on a single network node. Furthermore, the events $\{\mbox{$u$ is in bucket $h(v)$}:u \neq v~\mbox{and}~l(u)=l(v)\}$ are 100-wise independent. So, by \Cref{cor:conc}, 
\begin{align*}
    \Prob{\text{bucket $h(v)$ has at least $\mu +n 2^{-b(\ell(v))}$ nodes (excluding $v$)}}
        \le
    \left(\frac{111\mu}{n^2 2^{-2b(l(v))}}\right)^{50}\\
        \le
    \left(\frac{111 \cdot 2^{-b(l(v))}}{n}\right)^{50}
        \le
    \left(\frac{111 \cdot n^{0.7}} {n }\right)^{50}
        =
    O(n^{-15})
    \enspace.
\end{align*}
	
So, the probability that (including $v$) bucket $h(v)$ contains more than $2n2^{-b(l(v))}$ nodes is $O(n^{-15})$. 
	
Note that a node $v$ is bad if one of the four above events occurs. By the union bound over all four events,  that node $v$ is bad with probability at most $ O(d(v)^{-5}) + O(d(v)^{-5}) + d(v)^{-2} + O(n^{-15}) = O(d(v)^{-2})$.
\end{proof}

Finally, we conclude with the proof of \Cref{lem:bad}: that all nodes can choose a hash function such that the size of $G_{bad}$ is $O(n)$.

\begin{proof}[Proof of \Cref{lem:bad}]
To bound the expected size of $G_{bad}$, let us define an indicator random variable $\mathbf{1}_v$ for each node $v$ in $G_2$. $\mathbf{1}_v$ takes value $1$ if $v$ is bad, and $0$ otherwise. Let us define another random variable $I_v$ for each node $v$ in $G_2$, where $I_v=d(v)\cdot \mathbf{1}_v${. Intuitively $I_v$ counts the number of ``edge endpoints'' which would be bad, if $v$ were bad. Since edges are in $G_{bad}$ if and only if both endpoints are bad, $\size{G_{bad}} \leq \frac{1}{2} \cdot \sum_{v \in G_2} I_v$.}   

Our local cost function $f(h,v)$ will take value $I_v$ when bucket choices are determined by hash function $h$, and, as usual, the global cost function $F(h)$ is given by $\sum_{v\in G_2}f(h,v)$.

By \Cref{lem:badness}, $\Exp{\mathbf{1}_v}=O((d(v))^{-2})$, and hence $\Exp{I_v} = d(v) \cdot O(d(v)^{-2}) = O(1)$. By linearity of expectation, the expected size of $G_{bad}$ is at most $\sum_{v \in G_2} \Exp{I_v}=O(n)$. Note that $\sum_{v\in G_2}I_v$ is an aggregation of $I_v$'s. Each node $v$ can locally compute the value of $I_v$ based on the conditions in \Cref{defi:bad} if the $1$-hop neighborhood of a node $v$ is known, which is the case in \CONGESTEDC. The method of conditional expectation applied to $F(h)$ then implies that we can find a hash function deterministically  such that $\sum_{v\in G_2}I_v=O(n)$, i.e.,   the size of the subgraph induced by the bad nodes is at most $O(n)$.
%
\end{proof}

\subsection{Coloring Good Nodes}
\label{sec:color-good}

Here, we prove that \textsc{BucketColor} colors the (good) nodes in $G_2 \setminus G_{bad}$ in $O(1)$ rounds. Note that \textsc{BucketColor} iteratively colors the (good) nodes in \( G_2 \setminus G_{bad} \) by executing lines 3 to 11 of \Cref{alg:bucketcolor} in each iteration. The following lemma is the main technical lemma that proves various invariants over the iterations of \textsc{BucketColor}.

\begin{lemma}\label{lem:inetr-bucketcolor}
 In every iteration $r \in \mathbb{N}$,
\begin{enumerate}
\item[(A)] For every bucket $x$, $G_x^+$ is of $O(n)$ total size;
\item[(B)] For each node $v$, $d^+_{h(v)}(v)<p_{h(v)}(v)$;
\item[(C)] When a node $v$ is considered to be moved to a child bucket, there will be at least one such child bucket $h^*$ of $h(v)$ with $d_{h^*}^+(v)<p_{h^*}(v)$. Moreover, we can find such a child buckets for all the nodes in $O(1)$ rounds.
\end{enumerate}
\end{lemma}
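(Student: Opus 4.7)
The plan is to prove all three properties simultaneously by induction on the iteration index $r\in\mathbb{N}$, starting from the initial hash assignment (Line~1 of \Cref{alg:bucketcolor}) restricted to $G_2\setminus G_{bad}$, whose complement has size $O(n)$ by \Cref{lem:bad}. The high-level structure is: (2) at iteration $r$ is either a consequence of goodness together with \Cref{lem:properties} (base case) or of the child-bucket move made at iteration $r-1$ (inductive step); (3) then follows by a pigeonhole over the child buckets of $h(v)$; and (1) is established once, by a global calculation using only the initial goodness conditions.

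For $r=1$, write $\ell=l(v)$ and combine conditions (1) and (2) of \Cref{defi:bad} to get
\[
    d^+_{h(v)}(v) < \bigl(d^+(v)+\tfrac{1}{8}d(v)^{0.9}\bigr)\,2^{-b(\ell)}, \qquad
    p_{h(v)}(v) > \bigl(p(v)-\tfrac{1}{8}d(v)^{0.9}\bigr)\,2^{-b(\ell)}.
\]
\Cref{lem:properties} then splits (2) into two cases: either $p(v)\ge d(v)+\tfrac14 d(v)^{0.9}$, in which case substitution together with $d^+(v)\le d(v)$ closes the gap immediately; or $|N^-(v)|\ge\tfrac13 d(v)$, so $d^+(v)\le\tfrac23 d(v)$, and using $p(v)\ge d(v)+1$ the claim reduces to $\tfrac13 d(v)+1>\tfrac14 d(v)^{0.9}$, which holds once $d(v)$ is at least a sufficiently large constant (smaller-degree nodes fall automatically into Case~1 via the palette bound $p(v)\ge C$).

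For the inductive step, the move at iteration $r$ picks, for each $v$, a child $h^*$ of $h(v)$ with $d^+_{h^*}(v)<p_{h^*}(v)$; setting $h(v)\leftarrow h^*$ immediately yields (2) at iteration $r+1$. For (3) at any iteration, run the pigeonhole: each color $c\in\Psi_{h(v)}(v)$ has a hash of maximum length, so it belongs to exactly one child bucket, giving $\sum_{h^*}p_{h^*}(v)=p_{h(v)}(v)$; processing bucket nodes in strictly decreasing $(d,\mathrm{ID})$ order ensures every $w\in N^+_{h(v)}(v)$ initially sharing $v$'s bucket has been moved to some child of $h(v)$ before $v$ is considered, while every $w\in N^+_{h(v)}(v)$ with $l(w)>l(v)$ already has $|h(w)|>|h(v)|$ by construction, so $\sum_{h^*}d^+_{h^*}(v)=d^+_{h(v)}(v)$. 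Combined with (2), some $h^*$ must satisfy $d^+_{h^*}(v)<p_{h^*}(v)$, and since $G^+_{h(v)}$ already sits on $node_{h(v)}$, the witness is picked locally and all moves complete in $O(1)$ communication rounds. For (1), a bucket $x$ with $|x|=b(\ell)$ at iteration $r$ contains only nodes $v$ with $l(v)=\ell-r+1$, all originating from the parent bucket $y=x[1..b(\ell-r+1)]$, which by condition (4) holds at most $2n\,2^{-b(\ell-r+1)}$ good nodes; condition (1) gives $d^+_x(v)\le d^+_{h_1(v)}(v)<2d(v)\,2^{-b(\ell-r+1)}$ (the set $N^+_x(v)$ only shrinks as hashes lengthen), and with $d(v)\le 2^{1.1^{\ell-r+2}}$ and $b(\ell-r+1)=\lfloor 0.7\cdot 1.1^{\ell-r+1}\rfloor$ the product is bounded by $4n\cdot 2^{-0.3\cdot 1.1^{\ell-r+1}}\le 4n$.

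The main obstacle I expect is securing the identity $\sum_{h^*}d^+_{h^*}(v)=d^+_{h(v)}(v)$ underpinning (3): this requires the strict $(d,\mathrm{ID})$ ordering within each bucket together with a careful argument that neighbors being processed concurrently in other buckets still fall into the correct child of $h(v)$ (which works because $l(w)>l(v)$ already forces $|h(w)|>|h(v)|$, independent of whether $w$ has moved yet in iteration $r$). A secondary subtlety is Case~2 of the base case of (2), where the slack comes from lower-order neighbors rather than surplus palette, so one must absorb the $\tfrac18 d(v)^{0.9}2^{-b(\ell)}$ error terms into the $\tfrac13 d(v)$ degree deficit using the constant-degree regime established by \textsc{Color}'s preprocessing.
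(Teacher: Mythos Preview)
Your proposal is correct and follows essentially the same inductive scheme as the paper: invariant (2) from goodness plus \Cref{lem:properties} in the base case and from the previous move in the step, then (3) by pigeonhole over child buckets (handling same-level versus higher-level neighbors exactly as the paper does). The one place you differ is property (1): the paper simply observes that $G_x^+$ at iteration $r+1$ is a subgraph of $G_{x'}^+$ at iteration $r$ (where $x'$ is the parent of $x$) and invokes the inductive hypothesis, whereas you re-derive the $O(n)$ bound at every iteration by tracing nodes back to their original level-$l(v)$ bucket and applying goodness conditions (1) and (4) there; both routes work, but the paper's containment argument is shorter and avoids the arithmetic with $b(\cdot)$ at shifted levels.
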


To prove the above lemma, let us list all the properties of  each (good) node $v$ in $G_2 \setminus G_{bad}$:
\begin{enumerate}
\item[(1)] $p(v) \ge  d(v)+\frac 14 d(v)^{0.9} $ or $|N^-(v)|\ge \frac 13 d(v)$, i.e., $d^+(v) \leq p(v)-\frac 14 d(v)^{0.9}$;
\item[(2)] $d^+_{h(v)}(v) < d^+(v) 2^{-b(l(v))} + \frac 18 d(v)^{0.9}2^{-b(l(v))} $;
\item[(3)] $p_{h(v)}(v)> p(v) 2^{-b(l(v))}  -\frac 18 d(v)^{0.9}2^{-b(l(v))} $ (i.e., $p_{h(v)}(v)> d^+_{h(v)}(v)$);
\item[(4)] None of $v$'s level $l(v)+20$ descendant buckets contain more than one of $v$'s palette colors;
\item[(5)] At most $2n2^{-b(l(v))}$ nodes $v'$ have $h(v)=h(v')$.
\end{enumerate}
Note that Property (1) holds as nodes are in $G_2$ (see \Cref{lem:properties}) and other properties hold as nodes are good (see \Cref{defi:bad}).
\begin{proof}[Proof of \Cref{lem:inetr-bucketcolor}]
We prove \Cref{lem:inetr-bucketcolor} by induction on $r$. 
	
\paragraph{Base Case ($r=1$):}
In the first iteration, all nodes are good nodes. 
	
\paragraph*{1.} 
By Properties (2) and (5), the size of $G^+_x$ can be bounded as follows:

\begin{align*}
	\!\!\!\!\!\!\sum_{v : h(v)=x}\!\!\!d^+_{h(v)}(v) &\le 
	2n2^{-b(l(v))} \cdot \left(d^+(v) 2^{-b(l(v))} + \tfrac 18 d(v)^{0.9}2^{-b(l(v))}\right) \le 
	3n d(v) 2^{-2b(l(v))} = O(n)
	\enspace.  
\end{align*}
	
\paragraph*{2.} 
 As every good node $v$ satisfies Properties (1), (2), and (3), $d^+_{h(v)}(v)$ can be bounded as follows:
\begin{align*}
	d^+_{h(v)}(v) &< 
	d^+(v) 2^{-b(l(v))} + \tfrac 18 d(v)^{0.9}2^{-b(l(v))} \le 
	p(v) 2^{-b(l(v))} - \tfrac 18 d(v)^{0.9}2^{-b(l(v))} <
	p_{h(v)}(v)
	\enspace.
\end{align*}
	
\paragraph*{3.} 
Consider the point at which a node $v$ is considered to be moved to a child bucket. It has two types of higher-degree neighbors: those in the same level, which are also the responsibility of $node_{h(v)}$ and have already been moved to a child bucket, and those in a higher level, which are already in a strict descendant bucket of $h(v)$. (We may always discard edges whose endpoints are not in ancestor-descendant bucket pairs, since their palettes will be entirely disjoint and so they will never cause a coloring conflict.) So, all of $v$'s higher degree neighbors are in strict descendant buckets, and $node_{h(v)}$ knows this bucket for each neighbor of $v$ (the current bucket for same-level neighbors and the previous bucket for higher-level neighbors).
	
For each child bucket $B$ of $h(v)$, let $N_B^+(v)$ be the set of
\begin{itemize}
\item same-level higher-degree neighbors of $v$ which were moved to $B$ in this iteration, and
\item higher-level higher-degree neighbors of $v$ which were in $B$ at the start of this iteration.
\end{itemize}
 For each child bucket $B$ of $h(v)$, let $p_B(v)$ denote the number of $v$'s palette colors that are in $B$. Denoting $|N_B^+(v)|=d_B^+(v)$, we get 
 
 $$\sum\limits_{\text{child }B\text{ of }h(v)}d_B^+(v) = d^+_{h(v)}(v)<p_{h(v)}(v)=\sum\limits_{\text{child }B\text{ of }h(v)}p_B(v).$$ 

So, there must be at least one child bucket $h^*$ with $d^+_{h^*}(v)<p_{h^*}(v)$, and $node_{h(v)}$ can identify such a bucket. (Concurrently, the higher-level higher-degree neighbors of $v$ are moved to a child bucket, but this does not affect $d^+_{h^*}(v)$.) This is possible  in $O(1)$ rounds as $G_{h(v)}^+$ (By (1) of this lemma) is of size $O(n)$ and is gathered at $node_{h(v)}$ by spending $O(1)$ rounds.

\paragraph{Inductive step:}
Consider iteration $r+1$.

\paragraph{1.}
For a bucket $x$, let bucket $x'$ be its parent. The nodes present in bucket $x$ are a subset of the nodes present in the bucket $x'$ in iteration $r$. This is because, by the induction hypothesis, every node in bucket $x$ in iteration $r+1$ was moved from  the bucket $x'$ to bucket $x$ in iteration $r$. Also, the nodes present in bucket $x$ in iteration $r$ are now in some child bucket of $x$. Furthermore, as every descendant bucket of $x$ is also a descendant bucket of $x'$, every edge present in $G^+_x$ in iteration $r+1$ was also present in $G_{x'}^+$ in iteration $r$. So, $G_x^+$ in iteration $r+1$ is a subgraph of $G_{x'}^+$ in iteration $r$ (which is of size $O(n)$ by the induction hypothesis). 
	
\paragraph{2.}
For node $v$, let $x'$ denote the bucket of $v$ in iteration $r+1$. Consider the point we consider moving node $v$ to a child bucket of $x'$ in iteration $r$. By the induction hypothesis, we have found a child bucket $h^*$ of $x'$ such that $d^{+}_{h^*}(v)<p_{h^*}(v)$ and have moved vertex $v$ to bucket $h^*$ from $x'$ (i.e., $h(v)$ was set to $h^*$). So,  $d^+_{h(v)}(v)<p_{h(v)}(v)$ holds in iteration $r+1$. 
	
\paragraph{3.}
Using the same argument as in the base case, we can argue that there will be at least one such child bucket $h^*$ of $h(v)$ such that $d_{h^*}^+(v)<p_{h^*}(v)$. Moreover, we can  find such child buckets for all the nodes in $O(1)$ rounds.
\end{proof}

Next, we use the properties shown in \Cref{lem:inetr-bucketcolor} to show that after $O(1)$ iterations of the outer for-loop of \textsc{BucketColor}, nodes can color themselves without conflicts.

\begin{proof}[Proof of \Cref{lem:bucket_color_20_iterations}]
Consider the situation after $20$ iterations of the outer for-loop of \textsc{BucketColor}. By \Cref{lem:inetr-bucketcolor} (2) along with the fact that we move nodes to a child bucket in every iteration, each node $v$ is in a level $l(v)+20$ bucket $h^*$ satisfying $d^+_{h^*}(v)<p_{h^*}(v)$. Observe that, by Property (4) of good nodes, we must  have $d^+_{h^*}(v)=0$ and $p_{h^*}(v)=1$. Removing edges which are not between ancestor-descendant bucket pairs (since their endpoints have disjoint palettes), the remaining graph has no edge. Therefore, each node can be colored the only remaining color in its palette with no conflicts.
\end{proof}

Finally, we conclude with the proof of correctness of \textsc{BucketColor}.

\begin{proof}[Proof of \Cref{lem:evencolor}]
By \Cref{lem:bad} nodes can determine $G_{bad}$ in $O(1)$ rounds, and this graph can later be colored in $O(1)$ by a single network node.

By \Cref{lem:bucket_color_20_iterations}, after $20$ iterations of the outer for-loop of \textsc{BucketColor}, all nodes in $G_2 \setminus G_{bad}$ can be colored. By \Cref{lem:inetr-bucketcolor}~(3), each of these iterations takes $O(1)$ rounds.
\end{proof}

\section{Proof of the Main Theorem}
\label{sec:pf-main}

Now, we are ready to complete our analysis of a constant-round \CONGESTEDC and prove \Cref{thm:D1LCcolor}. We begin with a theorem summarizing the properties of \textsc{Color$(G,0)$}.

\begin{theorem}
\label{thm:analysis-Color}
\textsc{Color$(G,0)$} colors any \DILC instance $G$ with $\Delta_G \le O(\sqrt{n})$ in $O(1)$ rounds.
\end{theorem}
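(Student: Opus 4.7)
The plan is to argue three things about \textsc{Color$(G,x)$}: (i) each recursive call does $O(1)$ rounds of deterministic work, (ii) the coloring produced is proper, and (iii) the recursion depth is bounded by a constant (in fact, $10$). Combining these gives a total round complexity of $O(1)$, since each level of recursion takes only $O(1)$ rounds and the base case (line \ref{step1-alg:mainrand}) of each call is an $O(1)$-round collect-and-solve via Lenzen's routing.

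First I would collect the per-call guarantees already proved. \textsc{ColorTrial} (\Cref{lem:sizeF}), \textsc{Subsample} (\Cref{lem:G'size}), and \textsc{BucketColor} (\Cref{lem:evencolor}) each run deterministically in $O(1)$ rounds under the assumption $\Delta_G \le O(\sqrt{n})$ (needed only in the derandomization of \textsc{ColorTrial}'s coloring step, \Cref{lem:MCE2}); this assumption is preserved under subgraph operations, so it carries through to all recursive calls on $G'$. Moreover, $|L_0| \le Cn$ (since $L_0$-nodes have degree $<C$) and $|F| = O(n)$, so both $L_0$ and $F$ can be collected to a single network node in $O(1)$ rounds and solved sequentially (their palettes still contain $d+1$ colors even after the earlier coloring, since the \DILC property is self-reducible). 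Correctness within a single call follows because \textsc{BucketColor} produces a proper coloring of $G_2$ (\Cref{lem:evencolor}), and the deferred subgraphs $L_0, F, G'$ are handled on their updated palettes.

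The main obstacle, and the heart of the argument, is bounding the recursion depth. The idea is to track the potential function $\Phi_i := \sum_{v \in G_1^{(i)}} d_{G_1^{(i)}}(v)^{x_i}$, where $G^{(i)}$ is the input to the $i$-th recursive call, $G_1^{(i)}$ is the intermediate graph produced by \textsc{ColorTrial} in that call, and $x_i = 0.1 i$. \Cref{lem:G'size} gives
\begin{align*}
  \sum_{v \in G^{(i+1)}} d_{G_1^{(i)}}(v)^{x_i + 0.1}
    &\le Cn + 2\,\Phi_i.
\end{align*}
Since $G_1^{(i+1)} \subseteq G^{(i+1)}$ and degrees can only shrink when passing to a subgraph, we have $\Phi_{i+1} \le \sum_{v \in G^{(i+1)}} d_{G_1^{(i)}}(v)^{x_{i+1}}$, and hence $\Phi_{i+1} \le Cn + 2 \Phi_i$. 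Starting from $\Phi_0 = \sum_{v \in G_1^{(0)}} 1 \le n$, a routine induction gives $\Phi_i \le (2^{i+1} - 1)\, Cn$, which is $O(n)$ for every constant $i$.

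Finally I would instantiate this at $i = 10$, where $x_{10} = 1$, so $\Phi_{10} = \sum_v d_{G_1^{(10)}}(v) = 2\,|G_1^{(10)}| = O(n)$. A fortiori $|G^{(10)}| = O(n)$ (since $G_1^{(10)}$ and $G^{(10)}$ differ only by \textsc{ColorTrial}-internal bookkeeping, and $G^{(10)}$ is produced as the deferred set in the previous call, whose size is bounded analogously via \Cref{lem:G'size} with $x = 0.9$). Therefore the base case at line \ref{step1-alg:mainrand} of the $10$-th recursive call collects the remaining graph onto one network node in $O(1)$ rounds and completes the \DILC coloring sequentially. Summing the $O(1)$ rounds of each of the $\le 11$ levels of recursion, together with the $O(1)$-round collect-and-solve of all deferred $L_0$ and $F$ pieces across levels (which in aggregate have $O(n)$ size, and can be pipelined or handled level-by-level), yields the claimed $O(1)$ round complexity.
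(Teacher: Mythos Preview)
Your proposal is correct and follows essentially the same route as the paper: use \Cref{lem:sizeF}, \Cref{lem:G'size}, and \Cref{lem:evencolor} to bound each call to $O(1)$ rounds, and then control the recursion depth via the potential $\sum_v d(v)^{0.1i}$, applying \Cref{lem:G'size} inductively so that after $10$ levels the remaining graph has $O(n)$ edges and is collected in the base case. The only cosmetic difference is that you track the potential on $G_1^{(i)}$ (which matches \Cref{lem:G'size} exactly) while the paper tracks it on $G^{(i)}$ and silently uses the same degree-monotonicity under passing to subgraphs; your bookkeeping is arguably the cleaner of the two.
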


\begin{proof}
From the description of \textsc{Color$(G,0)$} and its subroutines, it is evident that \textsc{Color$(G,0)$} colors a graph $G$ successfully when $\Delta_G \leq O(\sqrt{n})$. It remains to analyze the total number of rounds spent by \textsc{Color$(G,0)$}.

Note that the steps of \textsc{Color$(G,0)$}, other than the call to subroutines \textsc{ColorTrial}, \textsc{SubSample}, \textsc{BucketColor} and recursive call, can be executed in $O(1)$ rounds. \textsc{ColorTrial} and \textsc{SubSample} can be executed in $O(1)$ rounds by \Cref{lem:sizeF} and \Cref{lem:G'size}, respectively. Also, $O(1)$ rounds are sufficient for \textsc{BucketColor}
 due to \Cref{lem:evencolor}.

To analyze the round complexity of recursive calls in \textsc{Color$(G,0)$}, let $G^i$ denote the graph on which the $i^{th}$-level recursive call of \textsc{Color}, i.e., \textsc{Color$(G^i,0.1i)$} is made. While executing \textsc{Color$(G^i,0.1i)$}, the algorithm spends $O(1)$ rounds  and makes a recursive call \textsc{Color$(G^{i+1},0.1(i+1))$}.

We show by induction that $\sum_{\text{$v\in G_{i}$}}d_{G_{i}}(v)^{0.1i} \le 3^i Cn$ for $i\le 10$. This is true for $G^0 = G$, since $\sum_{\text{$v\in G$}}d_{G}(v)^{0} = n$.
For the inductive step, for $0 \le i \le 9$, by \Cref{lem:G'size} using $x=0.1i$,
\begin{align*}
    \mbox{}\!\!\!\!\!\!\!\!\sum_{v\in G^{i+1}}\!d_{G^{i+1}}(v)^{0.1(i+1)}
        \le
    \!\!\!\sum_{v\in G^{i+1}}\!d_{G^{i}}(v)^{0.1(i+1)}
        \le
    Cn+2\!\sum_{v\in G^{i}}\!d_{G^i}(v)^{0.1i}
        \le
    Cn+2\cdot 3^i Cn 
        \le
    3^{i+1} Cn
    \enspace.
\end{align*}

So, $|E(G^{10})| \le \sum_{\text{$v\in G^{10}$}}d_{G^{10}} \le 3^{10} Cn = O(n)$. Therefore, after $10$ recursive calls, the remaining uncolored graph can simply be collected to a single network node and colored.
\end{proof}

While \Cref{thm:analysis-Color} requires that $\Delta_G \leq O(\sqrt{n})$, we note that we can generalize this result to arbitrary maximum degree.

\begin{lemma}
\label{lemma:reduce-to-sqrt}
In $O(1)$ rounds of \CONGESTEDC, we can recursively partition an input \DILC instance into sub-instances of maximum degree $O(\sqrt{n})$. The sub-instances can be grouped into $O(1)$ groups where each group can be colored in parallel.
\end{lemma}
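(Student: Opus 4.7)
The plan is to adapt the palette-partitioning technique of Czumaj, Davies, and Parter~\cite{CDP20} to the \DILC setting. If $\Delta_G = O(\sqrt{n})$ there is nothing to do and we output $G$ itself as the single sub-instance. Otherwise, fix a small constant $\eps \in (0,1/2)$ and set $k := \lceil \Delta_G^{\eps}\rceil$. Using an $O(1)$-wise independent hash function, uniformly assign each node of $G$ to one of $k+1$ \emph{node-buckets} and each color in $\mathcal{C}$ to one of $k$ \emph{color-buckets}, where the $(k+1)$-th node-bucket is a \emph{leftover}. For each $i \in [k]$, form sub-instance $G_i$ as the subgraph induced by the nodes in node-bucket $i$, with each palette restricted to the colors in color-bucket $i$; the leftover sub-instance $G_L$ keeps the full palettes of its nodes.

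The central claim is that, after derandomizing the hash function by the method of conditional expectations (exactly as in \Cref{subsec:method_of_conditional_expectations}), each $G_i$ is a valid \DILC instance on all but an $O(n)$-edge induced subgraph of failed nodes, and has max degree $O(\Delta_G^{1-\eps})$. For any fixed node $v$, the expected palette size in bucket $i$ is $p(v)/k$ while the expected degree is $d(v)/(k+1)$, so the expected slack $p_{G_i}(v) - d_{G_i}(v)$ is at least $\bigl(p(v) + k(p(v)-d(v))\bigr)/(k(k+1)) \geq 1/(k+1)$, using the input assumption $p(v) \geq d(v)+1$. A Chernoff-style bound under $O(1)$-wise independence (\Cref{cor:conc}) then yields that each node fails with low probability; the derandomization exactly mirrors the proofs of \Cref{lem:MCE,lem:MCE2,lem:bad}, with cost function equal to the sum over nodes $v$ of $d_G(v)\cdot \mathbf{1}[v\text{ fails}]$, deterministically fixing a hash function guaranteeing the failed-node graph has $O(n)$ edges.

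The sub-instances $G_1,\ldots,G_k$ have pairwise disjoint color palettes, so they form a single parallel group. The leftover $G_L$ keeps all colors (overlapping with each $G_i$), so it must wait until the first group is colored and its palette is updated to remove the colors used by its neighbors, forming a second group; its max degree is already at most $\Delta_G/(k+1) = O(\Delta_G^{1-\eps})$. If any sub-instance still has max degree exceeding $c\sqrt{n}$, recurse on it. Since each level drops the max degree polynomially by a factor $\Delta^{\eps}$, after $r$ levels the max degree is at most $\Delta^{(1-\eps)^r}$, which is $O(\sqrt{n})$ after $r = O(1/\eps) = O(1)$ levels. All descendants of a given sub-instance are contained in its color space, so sub-instances at the same recursion depth are pairwise palette-disjoint (modulo leftover chains); interleaving the non-leftover groups with the leftover groups gives a total of $O(1)$ groups, each colored in parallel.

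The main obstacle will be verifying that the slack of order $1/k$ is preserved under $O(1)$-wise independence with enough concentration to bound the failed-node subgraph by $O(n)$ edges: we need the standard deviation $\sqrt{d(v)/k}$ of $d_{G_i}(v)$ (and symmetrically for $p_{G_i}(v)$) to be absorbed by the combined slack of $1/(k+1) + p(v)/(k(k+1))$, which holds whenever $d(v)$ is larger than a sufficient power of $k$; nodes not meeting this threshold are automatically classified as failures and collected. A secondary book-keeping concern is ensuring that leftover chains do not blow up the number of groups, but since each level divides max degree by $\Delta^{\eps}$ and the recursion depth is $O(1)$, the chain length is $O(1)$.
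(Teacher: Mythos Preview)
Your approach has a real gap at exactly the point you flag as ``the main obstacle.'' You correctly note that the concentration from \Cref{cor:conc} only kicks in once $d(v)$ exceeds a sufficient power of $k = \Delta_G^{\eps}$ (concretely, $d(v) \gg k^3$), and you propose to classify nodes below that threshold as failures and collect them. But you never verify that the failed subgraph then has $O(n)$ edges, and in general it does not. Take an instance with $\Theta(n)$ nodes of degree $\Theta(\log n)$ together with a single node of degree $\Delta_G = n^{0.9}$. With $k = n^{0.9\eps}$, every $\log n$-degree node has expected restricted palette size $(\log n)/k = o(1)$ and hence fails with probability $1-o(1)$; your cost function then has expectation $\sum_v d_G(v)\,\Pr[v\text{ fails}] = \Theta(n\log n)$, so no hash function in the family achieves cost $O(n)$ and the derandomization cannot deliver an $O(n)$-edge failed graph. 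This is not a technicality: it is precisely the barrier the paper isolates in \Cref{subsec:techniques} as the reason the uniform \cite{CDP20} bucketing, with a single $\Delta$-dependent bucket count, does not carry over to \DILC when palette sizes are non-uniform.

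The paper's own proof does not re-derive any partitioning; it is essentially a two-line citation, invoking the \textsc{LowSpacePartition} procedure of \cite{CDP21a} together with its \DILC extension in Section~5 of \cite{CCDM23}, and then simulating low-space \MPC in \CONGESTEDC. If you want a self-contained argument, you will need to make the partitioning degree-aware --- so that the bucket count a node experiences is tied to its own degree rather than to $\Delta_G$ --- much as the paper's \textsc{BucketColor} does, rather than forcing every node through a single uniform $k$-way split.
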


\begin{proof}
We use the \textsc{LowSpacePartition} procedure from \cite{CDP21a}. The process involves two hash functions to organize nodes and colors into bins: the first maps nodes to \( n^\zeta \) bins, creating \( n^\zeta \) graphs; the second restricts the color palettes of nodes in the first \( n^{\zeta - 1} \) bins, where $\zeta \in (0,1)$ is a suitable constant. The resulting graphs are then returned. The recursive use of \textsc{LowSpacePartition} reduces a coloring instance to $O(1)$ sequential instances of maximum degree $n^\eps$ for any constant $\eps>0$. The procedure is for $\Delta+1$-coloring, but it extends immediately to \DILC, as discussed in Section~5 of \cite{CCDM23}. Since we can simulate low-space \MPC in \CONGESTEDC, we can execute \textsc{LowSpacePartition}, setting $\eps$ appropriately to reduce the maximum degree of all instances to $O(\sqrt{n})$. By subsequent arguments in \cite{CDP21a}, the sub-instances can be partitioned into $O(1)$ groups, where each group can be colored in parallel, while the \DILC instances corresponding to each group must be solved sequentially.
\end{proof}

Now the proof of \Cref{thm:D1LCcolor} follows immediately from \Cref{thm:analysis-Color} and \Cref{lemma:reduce-to-sqrt}.
\qed 

\paragraph*{Acknowledgment:} We would like to thank the anonymous reviewers of SICOMP for their insightful suggestions, which improved the presentation of the paper and helped us obtain a rigorous proof of \Cref{lem:colorstep}.


\bibliographystyle{alpha}
\bibliography{references}




\end{document}